\let\c@remark\relax
\let\c@lemma\relax
\let\c@definition\relax
\let\c@corollary\relax
\spnewtheorem{theorem}{Theorem}{\bfseries}{\itshape}
\spnewtheorem{lemma}[theorem]{Lemma}{\bfseries}{\itshape}
\spnewtheorem{definition}[theorem]{Definition}{\bfseries}{\rmfamily}
\spnewtheorem{claim}[theorem]{Claim}{\bfseries}{\itshape}
\spnewtheorem{corollary}[theorem]{Corollary}{\bfseries}{\rmfamily}
\newif\ifnocomment
\newcommand{\cE}{\mathcal{E}}%
\newcommand{\cP}{\mathcal{P}}%
\newcommand{\ff}{\mathfrak{f}}
\newcommand{\fp}{\mathfrak{p}}
\newcommand{\fh}{\mathfrak{h}}
\newcommand{\FOt}{\mbox{$\mbox{\rm FO}^2$}}
\newcommand{\GF}{\mbox{$\mbox{\rm GF}$}}
\newcommand{\GFt}{\mbox{$\mbox{\rm GF}^2$}}
\newcommand{\GFtEG}{\mbox{$\mbox{\rm GF}^2{+}{\rm EG}$}}
\newcommand{\UNFO}{\mbox{UNFO}}
\newcommand{\GNFO}{\mbox{GNFO}}
\newcommand{\GNFOonedim}{\mbox{GNFO}$_1$}
\newcommand{\FF}{\mbox{FF}}
\newcommand{\UNFOt}{\mbox{UNFO$^2$}}
\newcommand{\UNFOEQ}{\mbox{UNFO}{+}\mbox{EQ}}
\newcommand{\UNFOTR}{\mbox{UNFO}{+}\mbox{TR}}
\newcommand{\UNFOtEQ}{\mbox{UNFO$^2$}{+}\mbox{EQ}}
\newcommand{\GNFOTR}{\mbox{BGNFO}{+}\mbox{TR}}
\newcommand{\GNFOEQ}{\mbox{BGNFO}{+}\mbox{EQ}}
\newcommand{\GNFOEQonedim}{\mbox{BGNFO}$_1${+}\mbox{EQ}}
\newcommand{\NExpTime}{\textsc{NExpTime}}
\newcommand{\TwoExpTime}{2\textsc{-ExpTime}}
\newcommand{\str}[1]{{\mathfrak{#1}}}
\newcommand{\restr}{\!\!\restriction\!\!}
\newcommand{\cutout}[1]{}
\renewcommand{\phi}{\varphi} 
\newcommand{\AAA}{\mbox{\large \boldmath $\alpha$}}
\newcommand{\GGG}{\mbox{\large \boldmath $\gamma$}}
\newcommand{\type}[2]{{\rm atp}^{{#1}}({#2})}
\newcommand{\gtype}[2]{{\rm gtp}^{{#1}}({#2})}
\newcommand{\sss}{\scriptscriptstyle}
\newcommand{\cBase}{{\sss\rm base}}%
\newcommand{\cDist}{{\sss\rm dist}}
\definecolor{darkgreen}{rgb}{0.05, 0.50, 0.06}
\definecolor{afb}{rgb}{0.36, 0.54, 0.66}
\definecolor{bro}{rgb}{0.59, 0.29, 0.0}
\definecolor{dmb}{rgb}{0.55, 0.0, 0.0}
\newcommand{\origin}{origin}
\newcommand{\dom}{\mathrm{Dom}}
\newcommand{\keywords}[1]{\par\addvspace\baselineskip
\noindent\keywordname\enspace\ignorespaces#1}
\begin{document}

\title{Unary negation fragment with equivalence relations has the finite model property\thanks{This paper is an extended and improved version of LICS'18 paper \cite{DK18}. In particular, it corrects a minor bug from the conference version, slightly strengthening  the inductive assumption
in Lemma \ref{l:finiteeq}. }}

\author{Daniel Danielski \and Emanuel Kiero\'nski}
\institute{University of Wroc\l aw}

\maketitle

\begin{abstract}
We consider an extension of the unary negation fragment of first-order logic in which arbitrarily many binary symbols
may be required to be interpreted as equivalence relations. We show that this extension has the finite model property.
More specifically, we show  that every satisfiable formula has a model of at most doubly exponential size. We argue that
the satisfiability (= finite satisfiability) problem for this logic is \TwoExpTime-complete. 
We also transfer our results
to a restricted variant of the guarded negation fragment with equivalence relations.
\end{abstract}

\keywords{unary negation fragment, equivalence relations, satisfiability, finite satisfiability, finite model property}

\section{Introduction}
A simple yet beautiful idea of restricting negation to subformulas with at most one free variable
led ten Cate and Segoufin to a definition of an appealing fragment of first-order logic, called the unary negation
fragment, \UNFO{} \cite{StC13}. \UNFO{} turns out to have very nice algorithmic and model-theoretic properties,
and, moreover, it has strong motivations from various areas of computer science. 
\UNFO{} has the finite model property: every satisfiable formula has a finite model. This immediately implies the decidability of the satisfiability problem (does a given formula have a model?) and the finite satisfiability problem (does a given formula have a finite model?). To get tight complexity bounds
one can, e.g., use another convenient property of \UNFO, that every satisfiable formula has a tree-like model, and show that satisfiability is \TwoExpTime-complete.
What is interesting, the lower bound holds even for bounded variable versions of this logic, and already the fragment with three variables
is \TwoExpTime-hard. As several other seminal fragments of first-order logic, like the two variable fragment, \FOt{} \cite{Mor75}, the guarded fragment, \GF{} \cite{ABN98}, and
the fluted fragment, \FF{} \cite{P-HST16}, \UNFO{} embeds propositional (multi)-modal logic, which opens connections to, e.g., such fields as verification of hardware and software or knowledge representation. Moreover, in contrast to the  fragments mentioned above, \UNFO{} can express unions of conjunctive queries, which makes it potentially attractive for the database community. 

Similarly to most important decidable fragments of first order logic, including \FOt{}, \GF{}  and \FF{}, \UNFO{} has a drawback,
which seriously limits its potential applications, namely, it cannot express transitivity of a binary relation, nor a related 
property of being an equivalence. This justifies studying  formalisms, equipping the basic logics  with some facilities allowing to express the above mentioned properties. The simplest way to obtain such  formalisms is to divide the signature into two parts, a base part and a distinguished part,
the latter containing only binary symbols, and impose explicitly some semantic constraints on the interpretations of the symbols from the
distinguished part, e.g., require them to be interpreted as equivalences.  Generally, the results are negative: both \FOt{} and \GF{}
become undecidable with equivalences or with arbitrary transitive relations.  More specifically, the satisfiability and the finite satisfiability
problems for \FOt{} and even for the two-variable restriction of \GF, \GFt{}, with two transitive relations \cite{Kie05,Kaz06} or three equivalences \cite{KO12} are undecidable. Also the fluted fragment is undecidable when extended by
equivalence relations [I.~Pratt-Hartmann, W.~Szwast, L.~Tendera, \emph{private communication}].
Positive results were obtained for \FOt{} and \GF{} only when the distinguished signature contains just one transitive symbol \cite{P-H18} or two equivalences \cite{KMP-HT14},
or when some further syntactic restrictions on the usage of distinguished symbols are imposed \cite{ST04,KT18}. 

\UNFO{} turns out to be an exception here, since its satisfiability problem remains decidable in the presence of arbitrarily many 
equivalence or transitive relations. This can be shown by reducing the satisfiability problem for \UNFO{} with equivalences to
\UNFO{} with arbitrary transitive relations (see Lemma \ref{l:red}). The decidability and \TwoExpTime-completeness of the satisfiability problem 
for the latter follow from two independent recent works, respectively by Jung et al.~\cite{JLM18} and by Amarilli et al.~\cite{ABBB16}. 
In the first of them the decidability of \UNFO{} with transitivity is stated explicitly, as a corollary from the decidability of the
unary negation fragment with regular path expressions. The second shows decidability of 
the guarded negation fragment, \GNFO{},  with transitive relations restricted to non-guard positions (for more about this logic see Section 5), which embeds  \UNFO{} with transitive relations.

Both the above mentioned decidability results are obtained by employing  tree-like model properties of the logics and then using some automata techniques. Since tree-like unravelings of models are infinite, such approach works only for general satisfiability, and gives no insight into the decidability/complexity of the finite satisfiability problem.  

In computer science, the importance of decision procedures for finite satisfiability arises from the fact that most objects about which we may want to reason using 
logic are finite. For example, models of programs have finite numbers of states and possible actions and real world databases contain finite sets of facts.
Under such scenarios, an ability of solving only the general satisfiability problem may not be fully satisfactory. 

In this paper we show that \UNFO{} with arbitrarily many equivalence relations, \UNFOEQ, has the {finite model property}. It follows that the finite satisfiability and the general satisfiability problems for the considered logic coincide,
and, due to the above mentioned reduction to \UNFO{} with transitive relations, can be solved in \TwoExpTime. The corresponding lower bound can be
obtained even for the two-variable version of the logic, in the presence of just two equivalence relations.  
We further transfer our results to the intersection of  \GNFO{} with equivalence relations on non-guard positions and the one-dimensional fragment \cite{HK14}.
A formula is \emph{one-dimensional} if its every maximal block of quantifiers leaves at most one variable free.
Moving from \UNFO{} to this restricted variant of \GNFO{}
significantly increases the expressive power.

Studying equivalence relations may be seen as a step towards understanding finite satisfiability of \UNFO{} or \GNFO{} with arbitrary transitive relations.  However, equivalence relations are also interesting on its own and in computer science were studied in various contexts. They play an important role in modal and
epistemic logics, and were considered in the area of interval temporal logics \cite{MPS16}. Data words
 \cite{BDM11} and data trees \cite{BMS09}, studied in the context of XML reasoning use an equivalence relation to compare 
data values, which may come from a potentially infinite alphabet; we remark, that, again, decidability results over such data structures are obtained only in the presence of a single equivalence relation, that is they allow to compare objects only with respect to a single parameter.

\medskip\noindent
{\bf Related work.} There are not too many decidable fragments of first-order logic whose finite satisfiability is known to remain decidable when extended by an unbounded number of equivalence relations. One exception is the two-variable guarded fragment with equivalence guards, \GFtEG, a logic without the finite model property, whose finite satisfiability
is \NExpTime-complete \cite{KT18}. 
\GFtEG{} slightly differs in spirit from the mentioned decidable variant of \GNFO{} with equivalence
relations on non-guard positions, and thus also from \UNFOEQ{} which is a fragment of the latter. We remark however that these two approaches are not completely orthogonal. E.g., 
a \GFtEG{} formula $\forall xy (E(x,y) \rightarrow (P(x) \wedge P(y)))$, in which atom $E(x,y)$ is used as a guard, when treated as a \GNFO{} formula has $E(x,y)$ on a non-guard position; actually, it is a \UNFOEQ{} formula. Simply, guards play slightly different roles in \GF{} and \GNFO{}.

The decidability of the satisfiability problem for both \GFtEG{} and \UNFOEQ{} can be shown relatively easily, by exploiting tree-based model properties for
both logics. The analysis of the corresponding finite satisfiability problems is much more challenging. It turns out that the difficulties arising when considering  \GFtEG{} and \UNFOEQ{} are of different nature. The main problem in the case of \GFtEG{} is that it allows, using guarded occurrences of inequalities $x \not= y$, to restrict some types of elements to appear at most once in every abstraction class of the guarding equivalence relation. 
This causes that some care is needed when
performing surgery on models, and seems to require a global view at some of their properties. Indeed, the solution employs integer programming to describe some 
global constraints on models of the given formula. What is however  worth remarking, in the case of \GFtEG{} one can always construct models in
which every pair of elements is connected by at most one equivalence. So, \GFtEG{} does not allow for a real interaction among equivalence relations.

Inequalities $x \not=y$ are not allowed in \UNFOEQ, and indeed we do not have here any problems with duplicating elements of any type. 
On the other hand, \UNFOEQ{} allows for  a non-trivial interaction among equivalences, and this seems to be the source of main obstacles for finite model constructions.
Surprisingly, such  obstacles are present already in the two-variable version of our logic.
More intuitions about problems arising will be given later. 

Our solution employs a novel (up to our knowledge)
inductive approach to build a finite model of a satisfiable formula, starting from an arbitrary model. In the base of
induction we construct some initial fragments in which none of the equivalences plays an important role. Such fragments are then
joined into bigger and bigger structures, in which more and more equivalences become significant. This process eventually yields a finite model of the given formula.

\medskip\noindent
{\bf Organization of the paper.} Section 2 contains formal definitions and presents some basic facts. 
In Section 3 we show the finite model property for a restricted, two-variable variant of our logic, \UNFOtEQ.
We believe that treating this simpler setting first will help the reader to understand our ideas and techniques,
since it allows them to be presented without some quite complicated technical details appearing in
the general case.  Then in Section 4 we describe
the generalization of our construction working for full \UNFOEQ, pinpointing the main differences and additional difficulties arising in comparison
to the two-variable case.  In Section 5 we transfer our 
results to the one-dimensional guarded negation fragment with equivalences. Section 6 concludes the paper.

\section{Preliminaries}\label{s:preliminaries}

\subsection{Logics and structures}

We employ standard terminology and notation from model theory.
 In particular, we refer to structures using Gothic capital letters,
and their domains using the corresponding Roman capitals.
For a structure $\str{A}$ and $B \subseteq A$ we use 
$\str{A} \restr B$ or $\str{B}$ to denote the restriction of $\str{A}$ to $B$.

We work with purely relational signatures $\sigma=\sigma_{\cBase} \cup \sigma_{\cDist}$ where $\sigma_{\cBase}$ is the \emph{base signature}
and $\sigma_{\cDist}$ is the \emph{distinguished signature}. All symbols from $\sigma_{\cDist}$ are binary.
Over such signatures we define the \emph{unary negation fragment of first-order logic}, \UNFO{} as in \cite{StC13} by the following
grammar:
$$\phi=R(\bar{x}) \mid x=y \mid \phi \wedge \phi \mid \phi \vee \phi \mid \exists x \phi \mid \neg \phi(x) $$
where $R$ represents a relation symbol and, in the last clause, $\phi$ has no free variables besides (at most) $x$. 

A typical formula
not expressible in \UNFO{} is $x \not= y$.
We formally do not allow universal quantification. However we will allow
ourselves to use $\forall \bar{x} \neg \phi$ as an abbreviation for $\neg \exists \bar{x}  \phi$,
for an \UNFO{} formula $\phi$. Note that  $\forall xy \neg P(x,y)$ is in \UNFO{} but  $\forall xy  P(x,y)$ is not.

The \emph{unary negation fragment with equivalences}, \UNFOEQ{} is defined by the same grammar as \UNFO{}. When satisfiability of its
formulas is considered, we restrict the
class of admissible models to those that interpret all symbols from $\sigma_{\cDist}$ as equivalence relations.
We also mention an analogous logic \UNFOTR{} in which the symbols from $\sigma_{\cDist}$ are interpreted
as (arbitrary) transitive relations.

\subsection{Atomic types}

An \emph{atomic $k$-type} (or, shortly, a $k$-\emph{type}) over a signature $\sigma$ is a maximal satisfiable set of literals (atoms and negated atoms) over $\sigma$
with variables $x_1, \ldots, x_k$. 
We will sometimes identify a $k$-type with the conjunction of its elements. 
Given a $\sigma$-structure $\str{A}$ and a tuple $a_1, \ldots, a_k \in A$ we denote by $\type{\str{A}}{a_1, \ldots, a_k}$ the atomic
$k$-type \emph{realized} by $a_1, \ldots, a_k$, that is the unique $k$-type $\alpha(x_1, \ldots, x_k)$ such that $\str{A} \models \alpha(a_1, \ldots, a_k)$.

\subsection{Normal form and witness structures}

We say that an \UNFOEQ{} formula is in Scott-normal form if it is of the shape
\begin{eqnarray}
\forall x_1, \ldots, x_t \neg \phi_0(\bar{x}) \wedge \bigwedge_{i=1}^{m} \forall x \exists \bar{y} \phi_i(x,\bar{y})
\label{eq:nf}
\end{eqnarray}
where each $\phi_i$ is an \UNFOEQ{} quantifier-free formula.
This kind of normal form was introduced in the bachelor's thesis \cite{Dzi17}. 

\begin{lemma} \label{l:nf}
For any \UNFOEQ{} formula $\phi$ one can compute in polynomial time a normal form \UNFOEQ{} formula
$\phi'$ over signature extended by some fresh unary symbols, such that 
any model of $\phi'$ is a model of $\phi$ and any model of $\phi$ can be
expanded to a model of $\phi'$ by an appropriate interpretation of the additional
unary symbols. 
\end{lemma}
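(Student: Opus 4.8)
This is the standard Scott-normalization argument, adapted to the syntax of \UNFOEQ. I would proceed by structural induction on $\phi$, repeatedly introducing a fresh unary symbol to name a ``bad'' subformula and renormalizing. The key technical point is that the unary negation discipline means every subformula we want to abbreviate has at most one free variable, so a fresh \emph{unary} symbol suffices; no higher-arity auxiliary predicates are needed, which is exactly what keeps the resulting formula inside \UNFOEQ{} and preserves the distinguished signature $\sigma_{\cDist}$ untouched.

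\textbf{Step 1: Flatten existential quantifiers.} Pick an innermost subformula of the form $\exists \bar y\, \psi(x, \bar y)$ occurring in $\phi$ in a positive position, where $\psi$ is quantifier-free (such a subformula exists by taking one that is minimal). Introduce a fresh unary symbol $P_\psi$, replace the occurrence of $\exists \bar y\, \psi$ by the atom $P_\psi(x)$, and conjoin the two implications $\forall x\, \neg(P_\psi(x) \wedge \neg \exists \bar y\, \psi(x,\bar y))$ and $\forall x\, \neg(\exists \bar y\, \psi(x,\bar y) \wedge \neg P_\psi(x))$ to the formula. The first of these is already (after pushing the negation and reading $\exists\bar y$ appropriately) of the conjunct-shape $\forall x \exists \bar y\, \phi_i$; the second is a universal conjunct $\forall x \bar y\, \neg(\ldots)$. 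Both are legal \UNFOEQ{} formulas because $\exists \bar y\,\psi$ has only $x$ free, so $\neg \exists \bar y\,\psi$ is a licit unary negation. Crucially, in the one direction where $P_\psi$ is forced to hold, the original model can always be expanded by setting $P_\psi^{\str{A}} = \{a : \str{A}\models \exists\bar y\,\psi(a,\bar y)\}$, and conversely any model of the new formula interprets $P_\psi$ as a set containing exactly the right elements where it matters, so equisatisfiability in the sense demanded by the lemma holds.

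\textbf{Step 2: Iterate and collect.} Repeating Step 1 eliminates nested quantifiers outside-in; each step strictly decreases a suitable complexity measure (e.g.\ the total quantifier nesting depth, or the number of quantifiers not yet at the top level), so the process terminates, and since each step adds only a fixed amount of new material the whole transformation is polynomial. When no further $\exists\bar y$ is nested, $\phi$ has become a Boolean combination of the freshly introduced atoms $P_\psi(x)$ and the original atoms, conjoined with finitely many conjuncts of the two shapes above; by distributing and merging, the purely universal part can be folded into a single $\forall x_1 \ldots x_t\, \neg \phi_0$ conjunct and the remaining conjuncts are each of the form $\forall x \exists \bar y\, \phi_i$. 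The correctness direction ``model of $\phi'$ is a model of $\phi$'' follows because the added conjuncts pin each $P_\psi$ to (a superset suffices for one inclusion, but both inclusions are enforced) the truth set of $\exists\bar y\,\psi$, so substituting back is sound; the ``any model of $\phi$ expands'' direction follows by interpreting each $P_\psi$ by its intended truth set and checking the conjuncts hold.

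\textbf{Main obstacle.} The only genuinely delicate point is bookkeeping the free-variable count so that the unary negations we introduce are actually legal \UNFO{} negations: one must be careful that the subformula being abbreviated never has more than the one free variable that the enclosing quantifier binds, and that the witness formulas $\phi_i$ themselves stay quantifier-free \UNFOEQ{} formulas (in particular contain no new inequalities). This is a routine but slightly finicky induction on the \UNFO{} grammar — the fragment's restriction to unary negation is precisely what makes it go through — and it is the part where the conference version's ``minor bug'' presumably lived. Everything else (polynomial time bound, the two directions of the model correspondence) is bookkeeping.
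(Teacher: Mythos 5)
There is a genuine gap, and it sits exactly where you wave your hands. Your central claim --- that ``the unary negation discipline means every subformula we want to abbreviate has at most one free variable'' --- is false for raw \UNFO{} formulas. The grammar restricts only \emph{negated} subformulas to one free variable; existential subformulas may have arbitrarily many. For instance, in $\exists x y\, \bigl(R(x,y) \wedge \exists z\, (S(x,z) \wedge S(y,z))\bigr)$ the innermost quantified subformula $\exists z\,(S(x,z)\wedge S(y,z))$ has two free variables. Your Step 1 cannot be applied to it: naming it requires a fresh \emph{binary} predicate $P(x,y)$, and both axiomatizing conjuncts then contain the negation $\neg P(x,y)$ of a binary atom (or the binary negation $\neg\exists z\,\psi(x,y,z)$), neither of which is a legal \UNFO{} negation. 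So the induction you describe does not go through as stated, and your parenthetical ``such a subformula exists by taking one that is minimal'' does not deliver a subformula of the promised shape $\exists\bar y\,\psi(x,\bar y)$ with a single free variable.

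The missing idea is the preliminary conversion to UN-normal form (equivalently, one-dimensionalization), which is precisely the first step of the paper's proof: within each maximal negation-free ``positive existential'' segment one renames bound variables apart and pulls the existential quantifiers together into a single block, using $\exists z\,\alpha \wedge \beta \equiv \exists z\,(\alpha\wedge\beta)$ and $\exists z\,\alpha\vee\beta \equiv \exists z\,(\alpha\vee\beta)$ when $z$ is not free in $\beta$. After this merge, every maximal quantifier block leaves at most one variable free (because the only things that interrupt a positive existential segment are unary negations), and only then does Scott's renaming with fresh \emph{unary} predicates apply in the way you describe. Your Steps 1--2 are fine once this normalization is done first (adding both directions of the axiomatization is harmless and both are then expressible), but without it the argument breaks on essentially every non-one-dimensional \UNFOEQ{} formula. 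As a side remark, the ``minor bug'' mentioned in the title footnote concerns the inductive assumption of Lemma \ref{l:finiteeq}, not this normal-form lemma.
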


The proof of Lemma \ref{l:nf} first converts $\phi$ into the so-called UN-normal form (see \cite{StC13}) and then uses the standard Scott's technique \cite{Sco62}
of replacing subformulas starting with blocks of quantifiers by unary atoms built out using fresh unary symbols, and appropriately
axiomatizing the fresh unary relations. 

Lemma \ref{l:nf} allows us, when dealing with decidability/complexity issues for \UNFOEQ{}, or when considering 
the size of minimal finite models of formulas, to restrict attention to normal form sentences.

Given a structure $\str{A}$, a normal form formula $\phi$ as in (\ref{eq:nf}) and elements $a, \bar{b}$ of $A$ such that
$\str{A} \models \phi_i(a,\bar{b})$ we say that the elements of $\bar{b}$ are \emph{witnesses} for $a$ and $\phi_i$ and that
$\str{A} \restr \{a, \bar{b} \}$ is a \emph{witness structure} for $a$ and $\phi_i$.
For an element $a$ and every conjunct $\phi_i$ choose a witness structure $\str{W}_i$. Then the structure $\str{W}=\str{A} \restr \{W_1 \cup \ldots \cup W_m \}$ is called a $\phi$-\emph{witness structure} for $a$.

\subsection{Basic facts} \label{s:bf}

In \FOt{} or in \GFt{} extended by transitive relations one can enforce a transitive relation $T$ to be an equivalence (it suffices
to add conjuncts saying that $T$ is reflexive and symmetric). The same is possible, by means of a simple trick (see \cite{Kie05}), even in the variant of \GFt{} in which 
transitive relations can appear only as guards. 
It is however not possible in \UNFOTR{}. Indeed, it is not difficult to see that if $\str{A}$ is a model of an \UNFOTR{} formula $\phi$ in
which all  symbols from $\sigma_{\cDist}$ are interpreted as equivalences then another model of
$\phi$ can be constructed by taking two disjoint copies of $\str{A}$, choosing a symbol $T$ from
$\sigma_{\cDist}$, joining every element $a$ from the first copy of $\str{A}$ with its isomorphic image in the second copy by the  
$2$-type containing $T(x,y)$ as the only positive non-unary literal (in particular this $2$-type contains $\neg T(y,x)$),
and transitively closing $T$.
In this model the interpretation of $T$ is no longer an equivalence.
However:

\begin{lemma} \label{l:red}
There is a polynomial time reduction from the satisfiability  (finite satisfiability) problem for \UNFOEQ{} 
to the satisfiability  (finite satisfiability) problem for \UNFOTR{}.
\end{lemma}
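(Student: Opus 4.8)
The plan is to reduce \UNFOEQ{} satisfiability to \UNFOTR{} satisfiability by encoding each equivalence relation $E \in \sigma_{\cDist}$ using a fresh transitive symbol $T_E$ together with a syntactic enforcement that $T_E$ behaves, on the relevant part of the model, like the equivalence $E$. The naive idea of axiomatizing reflexivity and symmetry of $T_E$ fails in \UNFOTR{}, as the paragraph preceding the lemma explains: \UNFOTR{} cannot force a transitive relation to be symmetric, and the two-disjoint-copies construction shows symmetry is genuinely not expressible. So the reduction must work around this obstacle rather than state it directly.

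First I would take an arbitrary \UNFOEQ{} formula $\phi$ over $\sigma = \sigma_{\cBase} \cup \sigma_{\cDist}$ and, invoking Lemma~\ref{l:nf}, assume without loss of generality that $\phi$ is in the normal form~(\ref{eq:nf}). For each $E \in \sigma_{\cDist}$ introduce a fresh binary symbol $T_E$, to be interpreted transitively, and form $\phi^\dagger$ from $\phi$ by replacing every atom $E(s,t)$ with the formula $T_E(s,t) \vee T_E(t,s)$; note this substitution stays within \UNFO{} syntax because it never introduces negation at a subformula with more than one free variable --- the atoms $E(s,t)$ in a normal form formula occur only positively inside the quantifier-free parts, and a disjunction of two atoms is unproblematic (and when $E$ occurs negated, i.e. inside $\neg\phi_0$ or a negated unary subformula, we replace $\neg E(s,t)$ by $\neg T_E(s,t) \wedge \neg T_E(t,s)$, which is again fine since it sits under a negation of a formula with at most one free variable). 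The key semantic point is that the \emph{symmetric-transitive closure} of any binary relation $R$ equals the reachability relation of $R \cup R^{-1}$; so if we define $E := T_E \cup T_E^{-1}$ in a model where $T_E$ is transitive, then $E$ is automatically reflexive on its field, symmetric, and --- crucially --- transitive, because transitivity of $T_E$ makes $T_E \cup T_E^{-1}$ already transitive. Hence in every model of $\phi^\dagger$ with the $T_E$ transitive, the derived relations $E := T_E \cup T_E^{-1}$ are genuine equivalence relations, and the truth of $\phi^\dagger$ under the translation coincides with the truth of $\phi$ with these $E$'s; conversely, any \UNFOEQ{}-model of $\phi$ yields a model of $\phi^\dagger$ by setting $T_E := E$.

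For the reverse direction of the correspondence one must check that a \UNFOTR{}-model $\str{B} \models \phi^\dagger$ gives rise to a \UNFOEQ{}-model of $\phi$: take $\str{A}$ with the same domain, the same base relations, and $E^{\str{A}} := T_E^{\str{B}} \cup (T_E^{\str{B}})^{-1}$ for each $E$; by the observation above each $E^{\str{A}}$ is an equivalence relation, and a routine induction on the structure of subformulas shows $\str{A} \models \psi$ iff $\str{B} \models \psi^\dagger$ for the translated subformulas, so $\str{A} \models \phi$. Finiteness is preserved in both directions since the domain is untouched, which handles the finite satisfiability case simultaneously. The translation is clearly computable in polynomial time: it is a local rewriting of atoms plus the addition of $|\sigma_{\cDist}|$ fresh symbols and no new conjuncts beyond what the (already applied) normal-form lemma introduces.

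The main obstacle --- and the one point deserving care --- is verifying that the atom-replacement $E(s,t) \mapsto T_E(s,t) \vee T_E(t,s)$ does not break \UNFO{} syntax. In normal-form \UNFOEQ{} the distinguished atoms occur in two kinds of context: positively inside some $\phi_i$ under the $\forall x \exists \bar y$ prefix (unproblematic, disjunction of atoms), and inside $\phi_0$ under the leading $\neg\exists\bar x$ (so effectively negated) or inside negated unary subformulas that the normal-form translation has packaged into fresh unary predicates. One must therefore be slightly careful and argue at the level of the \emph{UN-normal form} used in the proof of Lemma~\ref{l:nf}: every maximal negated subformula has at most one free variable, so after the substitution the replaced block $T_E(s,t)\vee T_E(t,s)$ (or its De Morgan dual $\neg T_E(s,t) \wedge \neg T_E(t,s)$ when the occurrence was negative) still sits properly inside a negation-of-a-unary-or-sentence, and no new unary-negation violation is created. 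Granting this bookkeeping, the semantic equivalence is immediate from the symmetric-transitive-closure identity, and the reduction is complete. \qed
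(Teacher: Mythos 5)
There is a genuine gap, and it sits exactly at the step you flag as ``the key semantic point'': the claim that transitivity of $T_E$ makes $T_E \cup T_E^{-1}$ transitive is false. Take $T_E = \{(a,b),(c,b)\}$ on a three-element domain; this relation is (vacuously) transitive, yet $T_E \cup T_E^{-1}$ contains $(a,b)$ and $(b,c)$ but not $(a,c)$. The symmetric-transitive closure of $T_E$ is the \emph{transitive closure} of $T_E\cup T_E^{-1}$, not $T_E\cup T_E^{-1}$ itself, and for a transitive $T_E$ these can differ. Consequently your back-translation produces a structure in which the relations $E := T_E\cup T_E^{-1}$ need not be equivalences (they also need not be reflexive --- $\{(a,b)\}$ is transitive but its symmetrization omits $(a,a)$, and you add no reflexivity conjuncts), so satisfiability of $\phi^\dagger$ does not imply \UNFOEQ-satisfiability of $\phi$. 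This is not merely a missing verification; the reduction is unsound. For instance, $\exists xyz\,(P(x)\wedge R(z)\wedge E(x,y)\wedge E(y,z)) \wedge \forall xy\,\neg(P(x)\wedge R(y)\wedge E(x,y))$ is unsatisfiable over equivalences, but its translation is satisfied by the transitive $T_E=\{(a,b),(c,b)\}$ with $P=\{a\}$, $R=\{c\}$. Repairing this by closing $T_E\cup T_E^{-1}$ transitively after the fact does not work either, because adding pairs can newly satisfy the (positively occurring) $E$-atoms inside the universally negated conjunct.

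The paper's proof uses the dual trick, and the duality is precisely what makes it sound: it keeps the symbol $E$ (now interpreted transitively), replaces each atom $E(x,y)$ by the \emph{conjunction} $E(x,y)\wedge E(y,x)$, and adds the conjunct $\forall x\, E(x,x)$. The back-translation then passes to the symmetric \emph{kernel} $E\cap E^{-1}$, which, unlike the union, is automatically transitive when $E$ is (an intersection of transitive relations is transitive), is symmetric by construction, and is reflexive thanks to the added conjunct; moreover the rewritten atoms hold of exactly the pairs in this kernel, so no closure step is needed and truth is preserved verbatim in both directions. Your syntactic bookkeeping about negated occurrences is largely moot under either translation: in normal form only atoms with at most one free variable may be negated, so binary $E$-atoms on distinct variables occur only positively.
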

\begin{proof}
Take an \UNFOEQ{} formula $\phi$, convert it into normal form  formula $\phi'$ and transform $\phi'$ into \UNFOTR{} formula $\phi''$ in the following way:
(i) replace in $\phi'$ every atom of the form $E(x,y)$ (for any variables $x, y$) by $E(x,y) \wedge E(y,x)$, (ii) add to $\phi''$ the conjunct $\forall x E(x,x)$ for every distinguished symbol $E$.
Now, any model of $\phi'$  is a model of $\phi''$; and any model of $\phi''$ can be transformed into a model of
$\phi'$ by removing all non-symmetric transitive connections. \qed
\end{proof}

The decidability and \TwoExpTime-completeness of \UNFOTR{} has been recently shown in \cite{JLM18}. Taking into consideration
that even without equivalences/transitive relations \UNFO{} is \TwoExpTime-hard we can state the following corollary.

\begin{theorem} \label{t:globalsat}
The (general) satisfiability problem for \UNFOEQ{} is \TwoExpTime-complete.
\end{theorem}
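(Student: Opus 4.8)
The plan is to establish the two matching bounds separately.

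For the \emph{upper bound}, I would chain the reductions already available with the external \TwoExpTime{} decision procedure for \UNFOTR{}. Given an \UNFOEQ{} sentence $\phi$, first apply Lemma \ref{l:nf} to obtain in polynomial time a normal form sentence $\phi'$, over a signature enlarged only by fresh unary symbols (so the distinguished part is untouched), that is satisfiable iff $\phi$ is. Then apply the transformation from the proof of Lemma \ref{l:red}: rewrite each distinguished atom $E(x,y)$ as $E(x,y) \wedge E(y,x)$ and add the conjuncts $\forall x\, E(x,x)$ for every distinguished $E$, yielding an \UNFOTR{} sentence $\phi''$ that is satisfiable iff $\phi'$ is. Both steps are polynomial, so $|\phi''| = \mathrm{poly}(|\phi|)$. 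Finally run the \TwoExpTime{} algorithm of \cite{JLM18} on $\phi''$; the whole composition decides satisfiability of $\phi$ in time $2^{2^{\mathrm{poly}(|\phi|)}}$, i.e.\ within \TwoExpTime{}.

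For the \emph{lower bound}, I would observe that \UNFO{} is a syntactic fragment of \UNFOEQ{}, and that for a sentence using no distinguished symbols the class of \UNFOEQ{}-admissible models is simply the class of all models; hence such a sentence is \UNFOEQ{}-satisfiable iff it is \UNFO{}-satisfiable. Since \UNFO{} satisfiability is \TwoExpTime{}-hard \cite{StC13} — indeed already for a bounded number of variables — this transfers to \UNFOEQ{}. Together with the upper bound this gives \TwoExpTime{}-completeness.

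The hard part is essentially already done elsewhere: the substantial \TwoExpTime{} upper bound for \UNFOTR{} is the result of \cite{JLM18}, and the correctness of the equivalence-to-transitivity reduction is Lemma \ref{l:red}. The only points that need a quick sanity check are that every link in the chain stays polynomial (Lemma \ref{l:nf} and the rewriting of Lemma \ref{l:red}), and that the output $\phi''$ is a legitimate \UNFOTR{} sentence over a purely relational signature whose distinguished symbols are binary, so that \cite{JLM18} applies verbatim. No new model construction is required here — the finite model property proved in the rest of the paper is what will later let this same upper bound cover finite satisfiability as well.
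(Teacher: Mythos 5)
Your proof is correct and follows exactly the route the paper takes: the upper bound is the composition of the normal-form translation, the reduction of Lemma \ref{l:red} to \UNFOTR{}, and the \TwoExpTime{} procedure of \cite{JLM18}, while the lower bound is inherited from the \TwoExpTime{}-hardness of plain \UNFO{}. The paper states this theorem as an immediate corollary of precisely these facts, so no further comparison is needed.
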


We recall that \UNFOTR{}  is contained in  \emph{the base-guarded negation fragment} with transitivity, \GNFOTR{}, in which transitive relations are
allowed only at non-guard positions, and the latter logic has been recently shown decidable and \TwoExpTime-complete by Amarilli et al. in \cite{ABBB16}. This gives an alternative argument for Thm.~\ref{t:globalsat}.
We will return to  \GNFOTR{} in Section \ref{s:gnfo}. 

As said in the Introduction both the  decidability proof for \UNFOTR{} from \cite{JLM18} and the decidability proof for \GNFOTR{} from \cite{ABBB16} strongly rely on infinite tree-like unravelings of models, and thus they give no insight into the decidability/complexity of finite satisfiability.

\medskip
Let us formulate now a simple but crucial observation on models of \UNFOEQ{} formulas.
\begin{lemma} \label{l:homomorphisms}
Let $\str{A}$ be a model of a normal form \UNFOEQ{} formula $\phi$. Let $\str{A}'$ be a structure in which all relations from 
$\sigma_{\cDist}$ are equivalences such that
\begin{enumerate}[(1)]
\item 
for every $a' \in A'$ there is a $\phi$-witness structure for $a'$ in $\str{A}'$.
\item for every tuple $a'_1, \ldots, a'_t$ (recall that $t$ is the number of variables of the $\forall$-conjunct of $\phi$) of elements
of $A'$ 
there is a homomorphism $\fh: \str{A}' \restr \{a_1', \ldots, a_t' \} \rightarrow \str{A}$ which preserves $1$-types of elements.
\end{enumerate}
Then $\str{A}' \models \phi$.

\end{lemma}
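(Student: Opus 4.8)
The plan is to verify directly that $\str{A}'$ satisfies each conjunct of the normal form~(\ref{eq:nf}), i.e. the single $\forall$-conjunct $\forall \bar{x}\, \neg\phi_0(\bar{x})$ and each of the $\forall\exists$-conjuncts $\forall x \exists \bar{y}\, \phi_i(x,\bar{y})$. The $\forall\exists$-conjuncts are immediate from assumption~(1): given $a' \in A'$, the $\phi$-witness structure for $a'$ guaranteed by~(1) contains, for every $i$, witnesses $\bar{b}'$ with $\str{A}' \restr \{a',\bar{b}'\} \models \phi_i(a',\bar{b}')$, and since $\phi_i$ is quantifier-free this literal-level satisfaction transfers to $\str{A}'$ itself. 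So the real content is the $\forall$-conjunct.

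For the $\forall$-conjunct, fix an arbitrary tuple $a'_1,\ldots,a'_t \in A'$; I must show $\str{A}' \models \neg\phi_0(a'_1,\ldots,a'_t)$, i.e. $\str{A}' \not\models \phi_0(\bar{a}')$. By assumption~(2) there is a homomorphism $\fh\colon \str{A}'\restr\{a'_1,\ldots,a'_t\} \to \str{A}$ preserving $1$-types. Since $\str{A} \models \phi$, in particular $\str{A} \models \neg\phi_0(\fh(a'_1),\ldots,\fh(a'_t))$. The key step is then a \emph{homomorphism-preservation argument for the negated subformula}: I claim that $\phi_0$, being built from the \UNFO{} grammar, is preserved under homomorphisms in the appropriate sense, so that $\str{A}'\restr\{\bar{a}'\} \models \phi_0(\bar{a}')$ would force $\str{A} \models \phi_0(\fh(\bar{a}'))$, contradicting the previous line. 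This is proved by induction on the structure of $\phi_0$: atoms $R(\bar{x})$ and equalities $x=y$ are preserved because $\fh$ is a homomorphism (and, for unary atoms, because $\fh$ preserves $1$-types); conjunction, disjunction and existential quantification are all positive connectives and go through routinely; and the crucial case is $\neg\psi(x)$ with $\psi$ having at most one free variable — here one uses that a subformula with a single free variable ranging over a single element $a'$ is witnessed inside the one-element substructure $\str{A}'\restr\{a'\}$, which maps into $\str{A}$ via $\fh$ while preserving the $1$-type of that element, and a one-element homomorphism preserving $1$-types is in fact an isomorphism onto its image, so the truth value of $\neg\psi$ is preserved in both directions at that element.

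The main obstacle I anticipate is exactly this last point: handling the nested unary negations. One has to be careful that \UNFO{} formulas are \emph{not} preserved under arbitrary homomorphisms (because of the negations), and to see why the argument nevertheless works one must exploit that every negation is applied to a formula with at most one free variable, so that the relevant "local" substructure is a single element, on which the $1$-type-preserving homomorphism $\fh$ restricts to an isomorphism. It is worth stating the induction hypothesis carefully — something like: for every subformula $\psi(\bar{z})$ of $\phi_0$ and every tuple $\bar{c}'$ from $\{a'_1,\ldots,a'_t\}$, if $\str{A}'\restr\{\bar{a}'\} \models \psi(\bar{c}')$ then $\str{A} \models \psi(\fh(\bar{c}'))$, \emph{and} moreover when $\psi$ has at most one free variable the converse implication holds as well — so that the negation case can invoke the strengthened (two-directional) hypothesis on the immediate subformula. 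Once the induction is set up with this refinement, every case is mechanical, and the lemma follows by combining it with $\str{A} \models \phi$.
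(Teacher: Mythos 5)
Your proposal is correct and takes essentially the same route as the paper, which disposes of the $\forall$-conjunct in one line by observing that $\phi_0$ is quantifier-free with negations applied only to subformulas in a single variable, so positive atoms are preserved by the homomorphism and negated subformulas are determined by the preserved $1$-types; your two-directional induction hypothesis for one-free-variable subformulas is exactly the right way to make that ``straightforward'' claim precise. The only quibble is that the existential-quantifier case of your induction is vacuous (in normal form $\phi_0$ is quantifier-free), which is just as well, since for a quantified one-free-variable subformula the converse direction of your strengthened hypothesis --- truth being witnessed inside the one-element substructure --- would not hold.
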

\begin{proof}
Due to (1) all elements of $\str{A}'$ have the required witness structures for all $\forall\exists$-conjuncts. It remains
to see that the $\forall$-conjunct is not violated. But since $\str{A}  \models \neg \phi_0(\fh(a_1), \ldots, \fh(a_t))$ 
and $\phi_0$ is a quantifier-free formula in which only unary atoms may be negated, it is straightforward. \qed
\end{proof}

The above observation leads in particular to a tree-like model property for \UNFOEQ{}. 
We define a $\phi$-\emph{tree-like unraveling} $\str{A}'$ of $\str{A}$ and a function $\fh:A' \rightarrow A$  in the following way. 
$\str{A}'$ is divided into levels $L_0, L_1, \ldots$. Choose an
arbitrary element $a \in A$ and put to level $L_0$ of $A'$  an element $a'$ such that $\type{\str{A}'}{a'}=\type{\str{A}}{a}$; set $\fh(a')=a$. 
Having defined $L_i$ repeat the following for every $a' \in L_i$. Choose in $\str{A}$ a $\phi$-witness structure for $\fh(a')$. Assume it consists of $\; \fh(a'), a_1, \ldots, a_s$. Add a fresh copy $a_j'$  of every $a_j$ to $L_{i+1}$, make 
$\str{A}' \restr \{a', a_1', \ldots, a_s' \}$  isomorphic to $\str{A} \restr \{\fh(a'), a_1, \ldots, a_s\}$ and set $\fh(a_i')=a_i$. Complete the definition
of $\str{A}'$ transitively closing all equivalences.

\begin{lemma} \label{l:treelike}
Let $\str{A}$ be a model of a normal form \UNFOEQ{} formula $\phi$. Let $\str{A}'$ be a $\phi$-tree-like unraveling of $\str{A}$.
Then $\str{A}' \models \phi$.
\end{lemma}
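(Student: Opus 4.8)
The plan is to verify that the $\phi$-tree-like unraveling $\str{A}'$ of $\str{A}$, together with the accompanying map $\fh : A' \to A$, satisfies the two hypotheses of Lemma \ref{l:homomorphisms}; the conclusion $\str{A}' \models \phi$ is then immediate. First I would check condition (1): by construction, for every $a' \in A'$ we placed into the next level a fresh isomorphic copy of an entire $\phi$-witness structure for $\fh(a')$ chosen in $\str{A}$, and the restriction of $\str{A}'$ to $a'$ together with those copies is isomorphic to the corresponding witness structure in $\str{A}$. One small point to be careful about is that transitively closing the equivalences at the end does not destroy these local isomorphisms: new $\sigma_\cDist$-edges added by transitive closure only ever connect elements already in the same class, so the $2$-types within a single witness structure are unaffected (an $\sigma_\cBase$-relation is never touched, and the witness structure already contained all the relevant equivalence edges among its own elements). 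Hence each $a'$ retains a genuine $\phi$-witness structure, giving (1).

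The heart of the argument is condition (2): for any tuple $a_1', \ldots, a_t'$ of elements of $A'$, the map $\fh$ restricted to $\{a_1', \ldots, a_t'\}$ must be a homomorphism into $\str{A}$ preserving $1$-types. Preservation of $1$-types holds by construction, since whenever we add a copy $a_j'$ of $a_j$ we set $\type{\str{A}'}{a_j'} = \type{\str{A}}{a_j}$ and $\fh(a_j') = a_j$. For the $\sigma_\cBase$-relations: any base atom holding among $a_{i_1}', \ldots, a_{i_k}'$ in $\str{A}'$ must have been introduced when some witness structure was copied in, so all of those elements lie in a single copied witness structure, which is isomorphic to a substructure of $\str{A}$ via $\fh$; hence the same atom holds of their $\fh$-images. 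For the $\sigma_\cDist$-relations the argument is slightly different because of the transitive closure step: if $E(a_i', a_j')$ holds in $\str{A}'$ for some equivalence symbol $E$, then $a_i'$ and $a_j'$ are connected by a finite $E$-path $a_i' = c_0, c_1, \ldots, c_n = a_j'$ of edges each of which was present in some copied witness structure; applying $\fh$ to each such edge and using that $\fh$ maps each witness structure into $\str{A}$, we get an $E$-path from $\fh(a_i')$ to $\fh(a_j')$ in $\str{A}$, and since $E^{\str{A}}$ is an equivalence (in particular transitive), $E(\fh(a_i'), \fh(a_j'))$ holds. Thus $\fh$ restricted to any $t$-element set is the required homomorphism, establishing (2).

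I also need to record that $\str{A}'$ itself interprets every symbol of $\sigma_\cDist$ as an equivalence relation, which is part of the hypothesis of Lemma \ref{l:homomorphisms}: this is exactly what the final "transitively closing all equivalences" step achieves, provided we also observe that each copied witness structure, being isomorphic to a substructure of a model $\str{A}$ in which the distinguished symbols are equivalences, contributes only symmetric $\sigma_\cDist$-edges, so symmetry and reflexivity are preserved and only transitivity needs to be forced.

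The main obstacle — really the only subtlety — is the interaction between the transitive closure of the equivalences and the local isomorphisms with $\str{A}$: one must argue both that transitive closure does not spuriously add base or within-class structure that would break condition (1), and that the new long-range $\sigma_\cDist$-edges it creates are still respected by $\fh$ because the target equivalences in $\str{A}$ are genuinely transitive. Once this bookkeeping is in place, the lemma follows purely formally from Lemma \ref{l:homomorphisms}.
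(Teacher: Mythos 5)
Your proposal is correct and follows exactly the route the paper intends: the paper's proof is the one-line remark that $\str{A}'$ "readily" satisfies the hypotheses of Lemma \ref{l:homomorphisms} with $\fh$ as the homomorphism, and your write-up simply supplies the verification of conditions (1) and (2) — including the right observations that witness structures are internally transitively closed copies of substructures of $\str{A}$ and that $\fh$ maps $E$-paths to $E$-paths whose endpoints are $E$-related in $\str{A}$ by transitivity.
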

\begin{proof}
It is readily verified that $\str{A}'$ meets the properties required by Lemma \ref{l:homomorphisms}. In particular $\fh$ acts as the
required homomorphism. \qed
\end{proof}

Slightly informally, we say that a model of a normal form formula $\phi$ is \emph{tree-like} if it has a shape similar to the structure $\str{A}'$ from the above lemma, that is: 
(i) it can be divided into levels, 
(ii) every element of level $i$ has its $\phi$-witness structure completed in level $i+1$, 
(iii) $\phi$-witness structures for different elements of the same level are disjoint, 
(iv) only elements of the same witness structure may be joined by relations from $\sigma_{\cBase}$,  
(v) the only 
$\sigma_{\cDist}$-connections among elements not belonging to the same witness structure are the result of  closing transitively the equivalences in witness structures. 

\section{Small model theorem for \texorpdfstring{\UNFOtEQ}{UNFO2+EQ}}

In this section we consider \UNFOtEQ---the two-variable restriction of \UNFOEQ{}. We show the following theorem.

\begin{theorem} \label{t:smallmodeltwovars}
Every satisfiable \UNFOtEQ{} formula $\phi$ has a finite model of size bounded doubly exponentially in $|\phi|$.
\end{theorem}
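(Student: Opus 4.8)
\medskip\noindent\textbf{Proof plan.}
By Lemma~\ref{l:nf} we may assume $\phi$ is in normal form, $\phi = \forall x_1 x_2\,\neg\phi_0 \wedge \bigwedge_{i=1}^{m}\forall x\exists\bar{y}\,\phi_i$, and we fix an arbitrary model $\str{A}\models\phi$, which by Lemma~\ref{l:treelike} we may take to be tree-like. Let $E_1,\dots,E_n$ enumerate $\sigma_{\cDist}$. The plan is to build a finite structure $\str{A}^\star$ meeting the hypotheses of Lemma~\ref{l:homomorphisms} with $t=2$: every element of $\str{A}^\star$ carries a $\phi$-witness structure, and every two-element induced substructure of $\str{A}^\star$ admits a $1$-type-preserving homomorphism into $\str{A}$. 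Since $\phi_0$ negates only unary atoms, the second condition is secured once we maintain the single invariant $(\dag)$: \emph{for every pair $u,v$ of $\str{A}^\star$ there is a pair $a,b$ of $\str{A}$ with the same $1$-types such that every positive binary literal holding of $u,v$ (in either order) holds of $a,b$ accordingly} — in particular it suffices that every atomic $2$-type realized in $\str{A}^\star$ is already realized in $\str{A}$. Thus the whole task reduces to: supply witnesses and keep each $E_i$ a genuine equivalence, while never creating a $2$-type absent from $\str{A}$.

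\medskip
The construction proceeds by induction on the number of \emph{active} equivalences. I would build finite structures $\str{B}_0,\str{B}_1,\dots,\str{B}_n$ over $\sigma$ so that in $\str{B}_k$ the symbols $E_1,\dots,E_k$ are interpreted as equivalences, while $E_{k+1},\dots,E_n$ carry only the trivial interpretation (the diagonal plus the pairs forced inside copied witness structures); each $\str{B}_k$ satisfies $(\dag)$ and equips every element with a $\phi$-witness structure. Then $\str{B}_n$ is the desired model. In the base case $k=0$ no equivalence has to be transitively closed: take a finite initial fragment of the tree-like $\str{A}$, cut a branch once a $1$-type repeats on it, and fold each leaf back onto an interior element of the same $1$-type that carries a witness structure; only $2$-types present in $\str{A}$ are copied, so $(\dag)$ holds, and the fragment has size at most doubly exponential in $|\phi|$ (singly-exponential depth, polynomial branching). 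For the inductive step, $\str{B}_k$ is obtained by taking many disjoint copies of $\str{B}_{k-1}$ and reconnecting the pieces along $E_k$: since $E_k$ is still trivial in $\str{B}_{k-1}$, each ``would-be $E_k$-class'' is a small fragment, and with enough copies these fragments can be amalgamated into genuine $E_k$-classes so that, after closing $E_k$ transitively, every newly $E_k$-related pair realizes a $2$-type that $\str{A}$ already realizes between $E_k$-related elements; the amalgamation is routed so that an element never acquires an $E_k$-edge to an element it is already $E_j$-related to ($j<k$) unless $\str{A}$ contains the corresponding combined $2$-type, and so that no $E_j$-class with $j<k$ is collapsed — this is arranged by performing the $E_k$-reconnection across copies in which the lower equivalences look suitably different. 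Each step multiplies the size by a singly-exponential factor, so over all $n\le|\phi|$ steps the bound stays doubly exponential.

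\medskip
\textbf{Main obstacle.} The crux is the inductive step, that is, controlling the interaction between the freshly activated $E_k$ and the already-active $E_1,\dots,E_{k-1}$: transitively closing $E_k$ after reconnecting copies of $\str{B}_{k-1}$ threatens both to collapse lower-level classes and to create pairs simultaneously related by several equivalences whose $2$-type does not occur in $\str{A}$ — and, as the paper already notes, \UNFOtEQ{} genuinely needs models in which pairs are linked by more than one equivalence, so this cannot be legislated away. Designing the reconnection pattern (how many copies of $\str{B}_{k-1}$, which fragments to amalgamate into a single $E_k$-class, and in which copies) so as to avoid all such collisions while keeping witness structures intact and the size under control is the technical heart of the argument, and is where the doubly-exponential blow-up originates. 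Once $\str{B}_n$ has been built, checking that the three invariants hold and invoking Lemma~\ref{l:homomorphisms} is routine and yields $\str{B}_n\models\phi$ with $|\str{B}_n|$ bounded doubly exponentially in $|\phi|$; optimality of the bound follows from the \TwoExpTime{} lower bound mentioned for two equivalences.
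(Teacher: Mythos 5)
Your setup is right (normal form, tree-like pattern model, reduction to the two conditions of Lemma~\ref{l:homomorphisms} with $t=2$), and so is the general shape of the answer (induction on the number of equivalences, with a doubly exponential size analysis). But the inductive step — which you yourself flag as the ``technical heart'' — is exactly the part that is missing, and what you sketch for it does not work as stated. Your plan activates the equivalences one at a time, keeping $E_{k+1},\dots,E_n$ ``trivial'' in $\str{B}_k$ and all $E_j$-classes for $j\le k$ small and non-total; when you then reconnect copies of $\str{B}_{k-1}$ along $E_k$ and close transitively, you must simultaneously avoid collapsing lower classes and avoid creating pairs related by a combination of equivalences whose $2$-type is absent from $\str{A}$. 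Saying the reconnection is ``routed'' through copies where the lower equivalences ``look suitably different'' is not an argument: no invariant is stated that certifies a fresh $E_k$-edge between elements already $E_j$-related is legal, and your base case (fold each leaf back onto an earlier element of the same $1$-type) is precisely the naive cutting the paper's Section~3 warns against — a reused witness $b$ forces its several ``customers'' into a common class once closure happens, with no control over the resulting $2$-types.

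The paper resolves this with three devices, none of which appears in your proposal. First, the induction runs in the opposite direction and over \emph{classes}: the inductive object is a finite counterpart of a single equivalence class of $E^*=\bigcap_{E_i\in\cE_{tot}}E_i$, inside which all the already-handled equivalences are \emph{total} — so they can neither be collapsed nor conflict with anything. Second, a bookkeeping notion of \emph{generalized type} (an eq-visibility function $\cP(\sigma_{\cDist})\to\cP(\AAA)$ recording which $1$-types are reachable through each intersection of equivalences) is carried through the induction via a pattern map $\fp$; conditions (b\ref{bfour})--(b\ref{bsix}) of Lemma~\ref{t:ind} are what guarantee that any equivalence connections created by transitive closure still yield $2$-types realized in $\str{A}$. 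Third, witnesses for the last layer are provided by joining many copies of layered \emph{components} (built so that the root is separated from every leaf by each non-total $E_i$) leaf-to-root across a $2$-coloring, which is what prevents the closure from feeding back into the interior of any component. Without substitutes for these three mechanisms your construction cannot be completed, so the proposal has a genuine gap at its central step.
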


As in the case of unbounded number of variables we can restrict attention to normal form formulas, which in the two-variable case 
simplify to the standard Scott-normal form for \FOt{} \cite{Sco62}:
\begin{eqnarray}
\forall {xy} \neg \phi_0({x}) \wedge \bigwedge_{i=1}^{m} \forall x \exists {y} \phi_i(x,{y}),
\label{eq:nftwovar}
\end{eqnarray}
where all $\phi_i$ are quantifier-free \UNFOt{} formulas. Without loss of generality we 
assume that $\phi$ does not use relational symbols of arity greater than $2$ (cf.~\cite{GKV97}).

Let us fix a satisfiable normal form \UNFOEQ{} formula $\phi$, and the finite relational signature $\sigma=\sigma_{\cBase} \cup \sigma_{\cDist}$ consisting 
of those symbols that appear in $\phi$. Enumerate the equivalence relation symbols as $\sigma_{\cDist}=\{E_1, \ldots, E_k \}$. Fix a 
 (not necessarily finite) $\sigma$-structure $\str{A} \models \phi$.
We will show how to build a finite model of $\phi$.

Generally, we will work in an expected way, starting from copies of some elements of $\str{A}$, adding for them
fresh witnesses  (using some patterns of connections extracted from $\str{A})$, then providing fresh witnesses for the previous 
witnesses, and so on. At some point, instead of producing new witnesses,  we
need a strategy of using only a finite number of them. It is perhaps worth explaining what are the main 
difficulties in such a kind of construction. A naive approach would be to unravel  $\str{A}$ into a tree-like structure, like 
in Lemma \ref{l:treelike}, then try to cut each branch of the tree at some point $a$ and look for witnesses for $a$
among earlier elements. The problem is when we try to reuse an element $b$ as a witness for $a$, 
and $b$ is already connected to $a$ by some equivalence relations. Then, if $a$ needs a connection to $b$
by some other equivalences, the resulting $2$-type  may become inconsistent with $\neg \phi_0$.
Another danger, similar in spirit, is that some $b$ may be needed as a witness
for several elements,  $a_1, \ldots, a_s$. 
Then
some of the $a_i$ may become connected by some equivalences which, again, may be forbidden.

It seems to be a non-trivial task to find a safe strategy of providing witnesses using only finitely many elements and
avoiding conflicts described above. 
This is why we employ a rather intricate inductive approach. We will produce substructures of the desired finite model
in which some number of equivalences are total, using patterns extracted from the corresponding substructures of the original
model. Intuitively, knowing  that an equivalence is total, we can forget about it in our construction.
Roughly speaking, our induction goes on the number of equivalence relations that are not total in the given substructures.
The constructed substructures will later become fragments of bigger and bigger substructures, which will eventually form the whole model.
To enable composing bigger substructures from smaller ones in our inductive process we will additionally keep some 
information about the intended \emph{generalized types} of elements  in form of a pattern function pointing them to
 elements in the original model.

Let us turn to the details of the proof.
Denote by $\AAA$  the set of atomic $1$-types  realized in $\str{A}$.
Note that $|\AAA|$ is bounded exponentially in $|\sigma|$ and thus also in $|\phi|$.
In this section we will use (possibly decorated) symbol $\alpha$ to denote $1$-types and $\beta$ to denote $2$-types. 

We now introduce a notion of a generalized type which stores slightly more information about an element in a structure
than its atomic $1$-type. For a set $S$ we denote by $\cP(S)$ the powerset of $S$.
\begin{definition}
A \emph{generalized type} (over $\sigma$) is a pair $(\alpha, \ff)$ where $\alpha$ is an atomic  $1$-type, and $\ff$ is an \emph{eq-visibility function}, that is a function of type $\cP(\sigma_{\cDist}) \rightarrow \cP(\AAA)$, such that, for every $\cE \subseteq \sigma_{\cDist}$
we have $\alpha \in \ff(\cE)$, and for every $\cE_1 \subseteq \cE_2 \subseteq \sigma_{\cDist}$ we have $\ff(\cE_2) \subseteq \ff(\cE_1)$.
Given a generalized type $\bar{\alpha}$ we will denote by $\bar{\alpha}.\ff$ its eq-visibility function.
We say that an element $a \in A$ \emph{realizes} a generalized type $\bar{\alpha}=(\alpha, \ff)$ in $\str{A}$,
and write $\gtype{\str{A}}{a}=\bar{\alpha}$ if 
(i) $\alpha=\type{\str{A}}{a}$, (ii)
for $\cE \subseteq \sigma_{\cDist}$, ${\bar{\alpha}.\ff}(\cE)=\{\type{\str{A}}{b}: \str{A} \models  E_i a b \text{ for all }  E_i \in \cE \}$.
We say that a generalized type $\bar{\alpha}_1=(\alpha_1, \ff_1)$ is a \emph{safe reduction} of $\bar{\alpha}_2=(\alpha_2, \ff_2)$
if $\alpha_1=\alpha_2$ and for every $\cE \subseteq \sigma_{\cDist}$ we have $\ff_1(\cE) \subseteq \ff_2(\cE)$.
We denote by $\bar{\AAA}$ the set of generalized types realized in $\str{A}$, and for $B \subseteq A$ we denote by $\bar{\AAA}[B]$ the
subset of $\bar{\AAA}$ consisting of the generalized types realized by elements of $B$.
\end{definition}

We are ready to formulate our inductive lemma.

\begin{lemma} \label{t:ind}
Let $l_0$ be a natural number $0 \le l_0 \le k$ and let $\cE_0$ be a subset of  $\sigma_{\cDist}$  of size $l_0$. 
Denote by ${\cE}_{tot}$ the set $\sigma_{\cDist} \setminus \cE_0$, and by $E^*$ the equivalence relation
$\bigcap_{E_i \in {\cE_{tot}}} E_i$.
\footnote{If $\cE_{tot}=\emptyset$ then $E^*$ is the total relation.}
Let $a_0 \in A$, let ${A}_0$ be the $E^*${-}equivalence class  of $a_0$ in $\str{A}$, and let $\str{A}_0$ be the induced substructure of $\str{A}$. 
Then there exists a finite structure $\str{B}_0$ 
and a function $\fp: B_0 \rightarrow A_0$ such that:
\begin{enumerate}[(b1)]
\item All relations from $\cE_{tot}$ are total in $\str{B}_0$. \label{btwo}
\item For every $b \in B_0$ if $\fp(b)$ has a witness $w$  for  $\phi_i (x,y)$ in $\str{A}_0$ then 
there is $w' \in B_0$ such that $\type{\str{A}_0}{\fp(b),w}=\type{\str{B}_0}{b,w'}$. \label{bthree} 
\item For every $\cE \subseteq \sigma_{\cDist}$
and $b_1, b_2 \in B_0$, if for all
 $E_i \in \cE$  $\str{B}_0 \models E_i(b_1, b_2)$ then $\gtype{\str{A}}{\fp(b_1)}.\ff(\cE)=\gtype{\str{A}}{\fp(b_2)}.\ff(\cE)$. \label{bfour}
\item For every $b \in B_0$ we have that $\gtype{\str{B}_0}{b}$ is a safe reduction of $\gtype{\str{A}}{\fp(b)}$. \label{bfive}
\item Every $2$-type realized in $\str{B}_0$ is either also realized  in $\str{A}_0$ or is obtained from a type realized
in $\str{A}_0$ by removing from it all  positive $\sigma_{\cBase}$-binary atoms and possibly some equivalence connections
and/or equalities. \label{bsix} 

\item $a_0$ is in the image of $\fp$ \label{bseven}.

\end{enumerate}
\end{lemma}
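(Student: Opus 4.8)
The proof will proceed by induction on $l_0$, the number of non-total equivalences. The base case $l_0 = 0$: then $\cE_{tot} = \sigma_{\cDist}$, $E^*$ is the intersection of all equivalences, $A_0$ is a single $E^*$-class, and inside $\str{A}_0$ \emph{every} distinguished relation is total. So we can essentially take a $\phi$-tree-like unraveling of $\str{A}_0$ restricted to the equivalence-free reduct, then close all equivalences — since they are total on $A_0$, making them total on $B_0$ is automatic and creates no new $2$-types beyond adding equivalence connections. Here the key is that with all equivalences total, witnesses can be reused freely: the "conflict" danger described in the discussion before the lemma simply does not arise, because any two elements are already $E^*$-equivalent. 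A straightforward finite tree-like-then-fold construction (build a few levels of witnesses, then loop back) yields a finite $\str{B}_0$ with a pattern function $\fp$ mapping each $b$ to the element of $\str{A}_0$ it was copied from; properties (b\ref{btwo})–(b\ref{bseven}) are then immediate, with (b\ref{bsix}) holding because the only modification to realized $2$-types is deletion of base-binary atoms / equalities and addition of equivalence links (which here, since all equivalences are total in $\str{A}_0$, were present already).

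For the inductive step, fix $\cE_0$ of size $l_0 > 0$, pick some $E \in \cE_0$, and set $\cE_0' = \cE_0 \setminus \{E\}$ (size $l_0 - 1$). Now $\str{A}_0$, the $E^*$-class of $a_0$, decomposes into $E$-classes (intersected with $E^*$); on each such sub-class all of $\cE_{tot} \cup \{E\}$ is total, so each sub-class is covered by the inductive hypothesis applied with $\cE_0'$. The plan is: build a tree-like skeleton whose nodes are "super-nodes", each super-node being an $(\cE_0', \ldots)$-structure $\str{B}$ obtained from the IH applied to an appropriate $E$-class of $\str{A}_0$; wire these super-nodes together along the tree so that (i) every element gets its $\phi$-witnesses — witnesses lying in the same $E$-class are supplied inside the same super-node by property (b\ref{bthree}) of the IH, witnesses in a different $E$-class are supplied by a child super-node built from the target $E$-class; (ii) the $E$-relation is then closed transitively across super-nodes along tree edges. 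After finitely many levels we fold back: to close off the construction we need every element's cross-$E$-class witnesses to point to \emph{already existing} super-nodes rather than fresh ones. This is where generalized types and the pattern function do their work — an element $b$ with $\gtype{\str{A}}{\fp(b)}.\ff$ knows, for each $\cE$, exactly which $1$-types are $\cE$-reachable from $\fp(b)$; property (b\ref{bfour}) guarantees that $E$-equivalent elements in $\str{B}_0$ agree on this visibility data, so when we identify the $E$-closure of a newly needed witness-class with an existing super-node of the right generalized-type profile, the resulting merged $2$-types remain of the form allowed by (b\ref{bsix}) and remain consistent with $\neg\phi_0$ (via Lemma \ref{l:homomorphisms}, with $\fp$ composed with the original structure providing the required $1$-type-preserving homomorphism on every $t$-tuple).

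The main obstacle — and the reason the inductive statement is phrased with all six bookkeeping conditions — is the folding/reuse step in the inductive case: ensuring that when two super-nodes (or a fresh super-node and an old one) get merged via $E$, no pair of elements ends up with an inconsistent $2$-type. Concretely, when $b_1$ from one super-node and $b_2$ from another are forced $E$-related, they may \emph{a priori} need further connections by relations in $\cE_0'$, and we must verify such connections already hold or can be safely added. Property (b\ref{bfour}) on eq-visibility is exactly the invariant that lets us control this: it says $E$-closure respects the $\ff$-data, so the set of "what still needs to be seen by $b_1$ via which equivalences" is determined by the generalized type, and we only ever merge along matching generalized types. I expect the bulk of the technical work to be: (a) a careful definition of the tree skeleton and the gluing operation on super-nodes that preserves (b\ref{btwo})–(b\ref{bseven}) as invariants at every stage; (b) a bound argument showing the process terminates after a number of levels polynomial in $k$ and the number of generalized types (which is doubly exponential), giving the doubly-exponential size bound needed for Theorem \ref{t:smallmodeltwovars}; and (c) the verification, via Lemma \ref{l:homomorphisms}, that the final $\str{B}_0$ together with $\fp$ composed with the ambient structure still satisfies $\neg\phi_0$ on all $t$-tuples. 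I would present (a) and the invariant-preservation carefully and treat (b), (c) more briskly.
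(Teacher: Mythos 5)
Your overall strategy --- induction on the number of non-total equivalences, with the inductive step assembling ``super-nodes'' obtained from the inductive hypothesis and using generalized types as the gluing invariant --- is exactly the strategy of the paper's proof (your super-nodes are its subcomponents). The base case differs cosmetically: the paper sidesteps your fold-back argument entirely by adding a fake distinguished symbol interpreted as the identity, so that $A_0$ may be assumed to be a singleton and $\str{B}_0=\str{A}_0$ works outright.

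The genuine gap is in the inductive step, precisely at the point you flag as ``the main obstacle'' and then do not resolve. Two problems. First, your skeleton breaks only the single chosen equivalence $E$ between a parent super-node and its children; every other $E'\in\cE_0\setminus\{E\}$ may occur on the cross edges and is not total inside the super-nodes, so an $E'$-path can traverse the entire skeleton from top to bottom. When you ``fold back'' by identifying a needed witness-class with an already existing super-node, such paths close into cycles, and transitively closing $E'$ then adds new $E'$-atoms to pairs \emph{inside} a super-node and, worse, to the pairs $(b,w')$ whose $2$-types condition (b\ref{bthree}) requires to equal $\type{\str{A}_0}{\fp(b),w}$ exactly. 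Matching generalized types via (b\ref{bfour}) keeps the corrupted $2$-types consistent with (b\ref{bsix}), but it does not restore (b\ref{bthree}), which is the condition that actually supplies witnesses in the final model. The paper's construction is engineered specifically against this: a component has $l+1$ layers, and layer $i$ is built from the IH applied to the $E_i$-class (not a fixed $E$), so that between layers $L_i$ and $L_{i+1}$ the relation $E_i$ is absent; hence no relation of $\cE_0$ connects the root of a component to any of its leaves. Second, the closing-off itself is not a generic ``reuse a compatible existing super-node'': it is a bipartite two-color wiring (leaves of color $g$ take as witnesses only roots of components of color $1-g$, with enough indexed copies that each root serves copies of one fixed leaf), and the root-to-leaf disconnection plus an isomorphic-projection argument (Claim 3.10 in the paper) is what shows the transitive closures create no new internal connections. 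Without specifying some such mechanism, your invariant-preservation claim for (b\ref{bthree}) and the internal structure of super-nodes cannot be verified, so the proof as proposed does not go through.
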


$\str{B}_0$ may be seen as a small counterpart of $\str{A}_0$ in which every element $b$
has witnesses for those $\phi_i$ for which $\fp(b)$ has a $\phi_i$-witness in $\str{A}_0$.
Intuitively, we may think that other witnesses required by $b$ are \emph{promised}
by a link to $\fp(b)$ and will be provided in further steps.

Before we prove Lemma \ref{t:ind} let us see that it indeed implies the desired finite model property from
Thm.~\ref{t:smallmodeltwovars}. To this end, take as $a_0$ an arbitrary element of $\str{A}$ and consider $l_0=k$. In this case $\cE_0=\{E_1, \ldots, E_k$\}, $\cE_{tot}=\emptyset$,
and $\str{A}_0=\str{A}$. We claim that the structure $\str{B}_0$ produced now by an application of Lemma \ref{t:ind}
is a model of $\phi$. First, Condition (b\ref{bthree}) ensures that all elements of $\str{B}_0$ have the required witnesses.
Second, (b\ref{bsix}) guarantees that for every pair of elements $b_1, b_2 \in {B}_0$ there is a homomorphism $\str{B}_0\restr \{b_1, b_2\} \rightarrow \str{A}$ preserving the $1$-types of elements; due to part (2) of Lemma \ref{l:homomorphisms} this implies that the conjunct
$\forall xy \neg\phi_0(x,y)$ is satisfied in $\str{B}_0$.

\medskip
The rest of this section is devoted to a proof of Lemma~\ref{t:ind}.
We proceed  by induction over $l_0$. Consider the base of induction, $l_0=0$.
In this case all equivalences in $\str{A}_0$ are total.
Without loss of generality assume that $|A_0|=1$. If this is not the case
just add to $\sigma_{\cDist}$ a fake symbol $E_{k+1}$ and interpret it in $\str{A}$ as the
identity relation. We take $\str{B}_0=\str{A}_0$ and
$\fp(a)=a$ for the only $a \in A_0$. Properties (b\ref{btwo})--(b\ref{bseven}) are obvious.

Let us turn to the inductive step. Assume that Thm.~\ref{t:ind} holds for some $l_0=l-1$, $0<l<k$ and let us show that it also holds for $l_0=l$. To this
end let $\cE_0$ be a subset of $\sigma_{\cDist}$ of size $l$, $a_0 \in A$ and let $\cE_{tot}$, $E^*$ and $\str{A}_0$
be as in the statement of Thm.~\ref{t:ind}. 
 Without loss of generality let us assume that $\cE_0=\{E_1, \ldots, E_l \}$. 

To build $\str{B}_0$ we first prepare some basic building blocks for our construction,
called 
\emph{components}.

\subsection{The components}

\subsubsection{Informal description and the desired properties}
A 
component is a finite structure 
having shape resembling a tree (however, not tree-like in the sense of Section \ref{s:preliminaries})
whose universe is divided into layers $L_1, \ldots, L_{l+1}$. In each layer $L_i$ we
additionally distinguish its \emph{initial part}, $L_i^{init}$.
 $L_1^{init}$ consists of a single element, called the \emph{root} of the component. The elements of layer $L_{l+1}$ are
called \emph{leaves} of the component. It may happen that some $L_i$ is empty. In such case also all layers $L_j$ for $j>i$ are 
empty, in particular there are no leaves.

We define a \emph{pattern component} for every generalized type from $\bar{\AAA}[A_0]$. 
The pattern component constructed for $\bar{\alpha}$ will be denoted $\str{C}^{\bar{\alpha}}$.
Along with the
construction of $\str{C}^{\bar{\alpha}}$  we are going to define a function $\fp$ assigning elements of $A_0$ to elements of 
$C^{\bar{\alpha}}$. Later we take some number of copies of every pattern component and join them forming the desired structure $\str{B}_0$.
The values of $\fp$ will be imported to $B_0$ from the pattern components.

Let us describe the properties which we are going to obtain during the construction of  $\str{C}^{\bar{\alpha}}$:
{\em
\begin{enumerate}[(c1)]
\item  All relations from $\cE_{tot}$ are total in $\str{C}^{\bar{\alpha}}$ . \label{ctwo}
 \item For every $c \in C^{\bar{\alpha}} \setminus L_{l+1}$ if $\, \fp(c)$ has a witness $w$  for  $\phi_i (x,y)$ in $\str{A}_0$ then there is $w' \in C^{\bar{\alpha}}$ such that $\type{\str{A}_0}{\fp(b),w}=\type{\str{C}^{\bar{\alpha}}}{c,w'}$. \label{cthree}

\item For every $\cE \subseteq \sigma_{\cDist}$
and
$c_1, c_2 \in C^{\bar{\alpha}}$, if for all 
 $E_i \in \cE$ $\str{C}^{\bar{\alpha}} \models E_i(c_1, c_2)$  then $\gtype{\str{A}}{\fp(c_1)}.\ff(\cE)=\gtype{\str{A}}{\fp(c_2)}.\ff(\cE)$. \label{cfour}
\item For every $c \in C^{\bar{\alpha}}$ we have that $\gtype{\str{C}^{\bar{\alpha}}}{c}$ is a safe reduction of $\gtype{\str{A}}{\fp(c)}$. \label{cfive} 
\item every $2$-type realized in $\str{C}^{\bar{\alpha}}$ is either a type realized also in $\str{A}_0$ or is obtained from a type realized
in $\str{A}_0$ by removing from it all  $\sigma_{\cBase}$-binary symbols and possibly some equivalences and/or equalities. \label{csix}
\item If a pair of elements is joined by a relation from $\sigma_{\cBase}$ then they belong to the same layer or to two consecutive layers. \label{cseven}

\item For $0 < i < l+1$ the elements of $L_{i}$ and $L_{i+1}$ are not joined by relation \label{ceight}
$E_i$; hence the root is not connected to any leaf by any relation from $\cE_0$.

\end{enumerate}
}

In particular, a component will satisfy almost all the properties required for $\str{B}_0$ by Thm.~\ref{t:ind}. What is missing are witnesses for leaves. 
A schematic view of a component is shown in Fig.~\ref{f:compview}.

\begin{figure} 
\begin{center}
\begin{tikzpicture}[scale=0.8]

\draw[dotted, thick] (9,0.75) -- (13.5,0.75);

\draw[dotted, thick] (11.5,3.25) -- (13.5,3.25);

\coordinate [label=center:$L_4$] (A) at ($(15,1-0.2)$); 	
\coordinate [label=center:$L_3$] (A) at ($(15,3.5-0.2)$); 	
\coordinate [label=center:$L_2$] (A) at ($(15,6-0.2)$); 	
\coordinate [label=center:$L_1$] (A) at ($(15,8.5-0.2)$); 	

\draw[->] (1,2.5) -- (2,2);
\draw[->] (4,5) -- (5,4.5);
\draw[->] (7,7.5) -- (8,7);
\coordinate [label=center:{\color{blue} $E_3$}] (A) at ($(0.5,3)$); 
\draw (0.2,2.7) -- (0.8,3.3);
\draw (0.8,2.7) -- (0.2,3.3);
\coordinate [label=center:{\color{black} $E_2$}] (A) at ($(3.5,5.5)$); 
\draw (3.2,5.2) -- (3.8,5.8);
\draw (3.8,5.2) -- (3.2,5.8);
\coordinate [label=center:{\color{red} $E_1$}] (A) at ($(6.5,8)$); 
\draw (6.2,7.7) -- (6.8,8.3);
\draw (6.8,7.7) -- (6.2,8.3);

\draw[color=red, densely dashed, thick] (6.5,5.3) -- (4,3.5);
\draw[color=blue] (6.5,5.3) -- (7,3.5);
\draw[color=blue] (6.5+0.08,5.3) -- (10+0.08,3.5);
\draw[color=red, densely dashed, thick] (6.5-0.08,5.3) -- (10-0.08,3.5);
\draw[color=black, densely dotted, thick] (6.5,2.8) -- (7.5,2.8);
\draw[color=red, densely dashed, thick] (6.5,2.8) -- (7,3.5);
\draw[color=red, densely dashed, thick] (9.5,2.8) -- (10.5,2.8);

\draw[color=blue] (10,6)-- (9.5,7.8) -- (7,6);
\draw[color=blue] (9.5,7.8) -- (10.5,7.8);
\draw[color=black, densely dotted, thick] (10,8.5) -- (10.5,7.8);
\draw[color=blue] (10.5+0.08,7.8) -- (13+0.08,6);
\draw[color=black, densely dotted, thick] (10.5-0.08,7.8) -- (13-0.08,6);
\draw[color=blue] (6.5,5.3) -- (7.5,5.3) -- (7,6) -- (6.5,5.3);
\draw[color=red, densely dashed, thick] (10,5.3) -- (9.5,5.3) -- (10,6) -- (10,5.3);
\draw[color=blue] (10,6) -- (10.5,5.3);
\draw[color=blue] (13,6) -- (13.5,5.3);
\draw[color=red, densely dashed, thick] (13,6) -- (12.5,5.3);

\begin{scope}[shift={(-6,-5)}]
\draw[color=red, densely dashed, thick] (10,6)-- (9.5,7.8) -- (7,6);
\draw[color=blue] (9.5,7.8) -- (10.5,7.8);
\draw[color=black, densely dotted, thick] (10,8.5) -- (10.5,7.8);
\draw[color=red, densely dashed, thick] (10.5+0.08,7.8) -- (13+0.08,6);
\draw[color=black, densely dotted, thick] (10.5-0.08,7.8) -- (13-0.08,6);
\draw[color=blue] (6.5,5.3) -- (7.5,5.3) -- (7,6) -- (6.5,5.3);
\draw[color=black, densely dotted, thick] (10,5.3) -- (9.5,5.3) -- (10,6) -- (10,5.3);
\draw[color=blue] (10,6) -- (10.5,5.3);
\draw[color=blue] (13,6) -- (13.5,5.3);
\draw[color=black, densely dotted, thick] (13,6) -- (12.5,5.3);
\end{scope}

\fill[black] (10,5.3) circle (0.07);

\fill[black] (4,0.3) circle (0.07);

\foreach \x in {0,3,6} {
\fill[black] (\x+1, 1.0) circle (0.07);
\fill[black] (\x+0.5, 0.3) circle (0.07);
\fill[black] (\x+1.5, 0.3) circle (0.07);
\draw[color=red, densely dashed, thick] (\x,0) -- (\x+1,1.5) -- (\x+2,0) -- (\x,0);}

\foreach \x in {3,6,9} {
\fill[black] (\x+1, 3.5) circle (0.07);
\fill[black] (\x+0.5, 2.8) circle (0.07);
\fill[black] (\x+1.5, 2.8) circle (0.07);
\draw[color=blue] (\x,2.5) -- (\x+1,4) -- (\x+2,2.5) -- (\x,2.5);}

\foreach \x in {6,9,12} {
\fill[black] (\x+1, 6.0) circle (0.07);
\fill[black] (\x+0.5, 5.3) circle (0.07);
\fill[black] (\x+1.5, 5.3) circle (0.07);
\draw[color=black, densely dotted, thick] (\x,5) -- (\x+1,6.5) -- (\x+2,5) -- (\x,5);}

\foreach \x in {9} {
\fill[black] (\x+1, 8.5) circle (0.07);
\fill[black] (\x+0.5, 7.8) circle (0.07);
\fill[black] (\x+1.5, 7.8) circle (0.07);
\draw[color=red, densely dashed, thick] (\x,7.5) -- (\x+1,9) -- (\x+2,7.5) -- (\x,7.5);}

\draw[color=red, densely dashed, thick] (6+0.2,5+0.1) -- (7,6.5-0.2) -- (8-0.2,5+0.1) -- (6+0.2,5+0.1);
\draw[color=red, densely dashed, thick] (3+0.2,2.5+0.1) -- (4,4-0.2) -- (5-0.2,2.5+0.1) -- (3+0.2,2.5+0.1);

\end{tikzpicture}

\caption{A component for $l=3$. Triangles correspond to subcomponents. Dashed lines represent $E_1$, dotted are used for $E_2$ and solid for $E_3$. $L_i$ and $L_{i+1}$ are not joined by $E_i$.} \label{f:compview}
\end{center}
\end{figure}
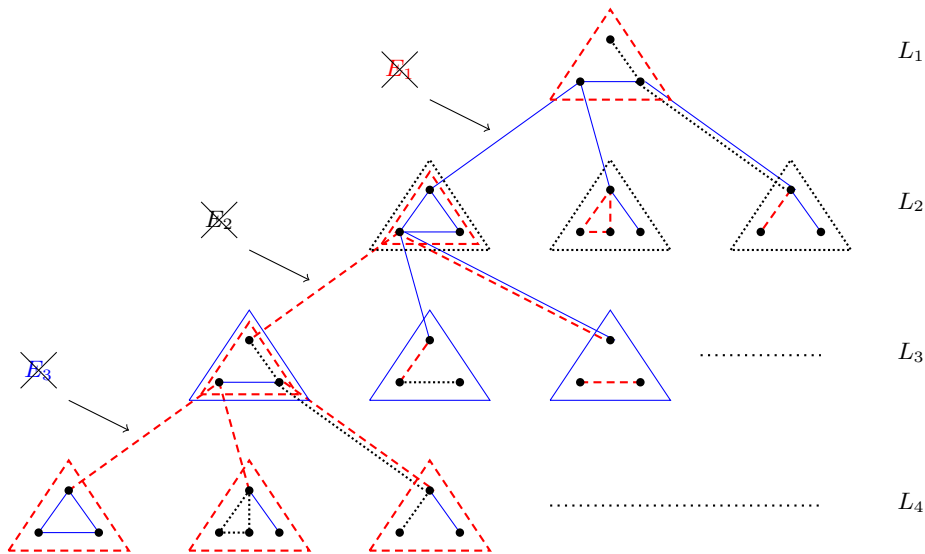
\subsubsection{Building a pattern component.}
Let us turn to the details of construction.
Let  $\bar{\alpha}$ be a generalized type realized in $\str{A}$ by an element $r \in A_0$.
If $\bar{\alpha}$ is the type of $a_0$ then assume $r=a_0$.
We define a component $\str{C}^{\bar{\alpha}}$. To $L_1^{init}$ we put $r'$ which is a copy of $r$ (that is, $\type{\str{C}^{\bar{\alpha}}}{r'}=\type{\str{A}}{r}$),
and set $\fp(r')=r$. The element $r'$ is the root of $\str{C}^{\bar{\alpha}}$.

\medskip \noindent
{\em Step 1: Subcomponents.} Assume that we have defined $L_1, \ldots, L_{i-1}$, the initial part of $L_i$, and the structure of $\str{C}^{\bar{\alpha}}$ on $L_1 \cup \ldots \cup L_{i-1} \cup L_i^{init}$ for some $i \ge  1$. 
Assume that the values of $\fp$ on $L_1 \cup \ldots \cup L_{i-1} \cup L_i^{init}$ have also been defined. 
Let us explain how to construct the remaining part of layer $L_{i}$. 
Take any element $c \in L_{i}^{init}$. Let $a_1 = \fp(c)$. Let $A_1 \subseteq A_0$ be the $E_i$-equivalence class of $a_1$ in $\str{A}_0$
(note that $A_1$ need not be the whole $E_i$-equivalence class of $a_1$ in $\str{A}$).
Let $\cE_1=\cE_0 \setminus \{E_i \}$. Note that all relations from $\sigma_{\cDist} \setminus \cE_1$ are total in $\str{A}_1$,
and $|\cE_1|=l-1$. Thus we can use the inductive assumption for $\cE_1$, $a_1$ and $\str{A}_1$ and produce a structure $\str{B}_1$
and a function $\fp: B_1 \rightarrow A_1$, satisfying properties listed in Thm.~\ref{t:ind}. 
We put to $L_{i}\setminus L_{i}^{init}$ a copy of each element of $B_1$ besides one element $b_1$ such that $\fp(b_1)=a_1$ (such element exists due to
Condition (b\ref{bseven}) of the inductive assumption).
On the set consisting of $c$ and all the elements added in this step we define the structure isomorphic to $\str{B}_1$,
identifying $c$ with $b_1$.  We will further call such substructures 
of  components  \emph{subcomponents}.
We import the values of $\fp$ to the newly added elements of $\str{B}_1$.  
We repeat it independently for all $c \in L_{i}^{init}$. To complete the definition of  the structure on $L_1 \cup \ldots \cup L_{i}$ we just transitively  
close all the equivalences.

\medskip\noindent
{\em Step 2: Adding witnesses.}
Having defined $L_{i}$, if $i<l+1$ we now define $L_{i+1}^{init}$. 
Take any element $c \in L_{i}$. For every $1 \le j \le m$, if 
$\fp(c)$ has a witness $w \in A_0$ for $\phi_j(x,y)$ then we want to reproduce such a witness for $c$.
Let us denote $\beta=\type{\str{A}}{\fp(c),w}$. 
If $E_i(x,y) \in \beta$ then by Condition (b\ref{bthree}) of the inductive assumption $c$ has an appropriate witness in 
the subcomponent added in the previous step. 
If $E_i(x,y) \not \in \beta$ then we
add a copy $w'$ of $w$ to $L_{i+1}^{init}$, join $c$ with $w'$ by $\beta$ and set $\fp(w')=w$.  
Repeat this procedure independently for all $c \in L_{i}$. To complete the definition of  the structure on $L_1 \cup \ldots \cup L_{i} \cup L_{i+1}^{init}$ we again transitively   close all the equivalences.

The construction of the component is finished when $L_{l+1}$ is defined. 
For further purposes let us number the elements of $L_{l+1}$ of the defined pattern component $\str{C}^{\bar{\alpha}}$ as $c^{\bar{\alpha}}_1, c^{\bar{\alpha}}_2, \ldots$

Let us see that we
indeed obtain the desired properties. 

\begin{claim}
The constructed component satisfies the conditions  
below.
\begin{enumerate}[(c1)]
\item Any pair of elements belonging to the same subcomponent is connected by all relations from $\cE_{tot}$ by the inductive
assumption; every $2$-type used to connect an element of one subcomponent with its witness in another subcomponent is copied
from $\str{A}_0$, and thus it contains all relations from $\cE_{tot}$; from any element of the component one can reach every other element by connections inside subcomponents and by connections joining elements with their witnesses which means that the steps of
transitively closing $\sigma_{\cDist}$-connections will make all pairs of elements connected by all relations from $\cE_{tot}$.
\item This is explicitly taken care in \emph{Step: Adding witnesses.} A suspicious reader may be afraid that during the
step of taking transitive closure of equivalences some additional equivalences may be added to a $2$-type used to join
an element with its witness. This however cannot happen. 
It follows from 
the tree shape
of components and
 from the inductive assumption.

\item If $\cE \subseteq \cE_{tot}$ then observe that $\fp(c_1)$ and $\fp(c_2)$ are connected by all relations from $\cE$ since they both belong to $A_0$; this immediately implies the claim. 
If $\cE$ contains $E_i \not\in E_{tot}$ then by construction there is a sequence of elements $c_1=d_1, d_2, \ldots, d_{2u-1}, d_{2u}=c_2$
such that (i) $d_i$ is joined with $d_{i+1}$ by all equivalences from $\cE$, (ii) $d_{2i-1}$ and $d_{2i}$ belong to same subcomponent
(it may happen that $d_{2i-1}=d_{2i}$), and (iii) $d_{2i}$ and $d_{2i+1}$ belong to two different subcomponents and $d_{2i}$ was added as a witness for $d_{2i+1}$ or vice versa. Now,  by Condition (b\ref{bfour}) of the inductive assumption applied to subcomponents
$\gtype{\str{A}}{\fp(d_{2i-1})}.\ff(\cE)=\gtype{\str{A}}{\fp(d_{2i})}.\ff(\cE)$. By our construction $\type{\str{A}}{\fp(d_{2i}), \fp(d_{2i+1})}=
\type{\str{C}^{\alpha}}{d_{2i}, d_{2i+1}}$ and thus $\fp(d_{2i})$ and $\fp(d_{2i+1})$ are joined in $\str{A}$ by all equivalences from
$\cE$, which gives that $\gtype{\str{A}}{\fp(d_{2i-1})}.\ff(\cE)=\gtype{\str{A}}{\fp(d_{2i})}.\ff(\cE)$. It follows that
$\gtype{\str{A}}{\fp(c_1)}.\ff(\cE)=$ $\gtype{\str{A}}{\fp(c_2)}.\ff(\cE)$. 
\item The equality of $1$-types of $c_1$ and $\fp(c_1)$ follows from our choices of values of $\fp$. 
Take any $\cE \subseteq \sigma_{\cDist}$ and let $\alpha' \in \gtype{\str{C}^{\bar{\alpha}}}{c}.\ff(\cE)$. This means that there exists an element $c' \in C^{\alpha}$ of 1-type $\alpha'$ joined with $c$ by all relations from $\cE$. By (c\ref{cfour}) $\gtype{\str{A}}{\fp(c)}.\ff(\cE)=\gtype{\str{A}}{\fp(c')}.\ff(\cE)$, and since all relations
from $\cE$ are equivalences $\alpha' \in \gtype{\str{A}}{\fp(c')}.\ff(\cE)$ and thus also $\alpha' \in \gtype{\str{A}}{\fp(c)}.\ff(\cE)$.
This shows that $\gtype{\str{C}^{\bar{\alpha}}}{c}$ is a safe reduction of $\gtype{\str{A}}{\fp(c)}$.

\item  Take a $2$-type $\beta$ realized in $\str{C}^{\bar{\alpha}}$ by a pair $c_1, c_2$. If $\beta$ is realized in a subcomponent then the claim
follows by the inductive assumption applied to this substructure and the  
tree 
shape of $\str{C}^{\bar{\alpha}}$. If it joins an element of one subcomponent with its witness in another subcomponent then this
$2$-type is explicitly taken as a copy of a $2$-type from $\str{A}_0$ (cf.~also (c\ref{cthree})). Otherwise, the only positive non-unary atoms it may contain
are equivalences added in one of the steps of taking transitive closures. Let $\cE$ be the set of all equivalences belonging to $\beta$, 
and let $\alpha'$ be the $1$-type of $c_2$. By (c\ref{cfive}) $\gtype{\str{C}^{\alpha}}{c_1}$ is a safe reduction of $\gtype{\str{A}}{\fp(c_1)}$, which means that $\alpha' \in \gtype{\str{A}}{\fp(c_1)}.\ff(\cE)$. Thus there is an element $a \in A$ of $1$-type $\alpha'$ such that $\fp(c_1)$  is joined with $a$ by all equivalences from $\cE$. Observe now that $\type{\str{A}}{\fp(c_1), a}$ agrees with $\beta$ on the $1$-types it contains and contains  all equivalences which are present in $\beta$. So the claim follows.
\item Follows directly from our construction.

\item Recall that layer $L_{i+1}$ contains witnesses for elements of $L_{i}$, but each such element is joined with its witness by
a $2$-type not containing $E_i$; any path from the root to a leaf must go through all layers, thus 
for each equivalence $E_j$, $1 \le j \le l$, there is a pair of consecutive elements on this path, not joined by $E_j$.
\end{enumerate}
\end{claim}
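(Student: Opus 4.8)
The plan is to verify the properties (c\ref{ctwo})--(c\ref{ceight}) essentially in the order in which they are listed, relying throughout on two structural observations about the construction. First, transitively closing the equivalences never creates a positive $\sigma_{\cBase}$-binary atom, so every such atom of $\str{C}^{\bar{\alpha}}$ is inherited either from a single subcomponent or from a witness $2$-type copied verbatim from $\str{A}_0$; in particular it always joins elements of one layer or of two consecutive layers. Second, the subcomponents, each glued to its parent only along the single witness edge ending at its root, form a tree, so any path of $\str{C}^{\bar{\alpha}}$ that leaves a subcomponent must re-enter it through the very element by which it left. From the second observation I would extract a pristineness statement: for two elements lying in one subcomponent $\str{B}_1$ the $2$-type they realise in $\str{C}^{\bar{\alpha}}$ is exactly the one they realise in $\str{B}_1$, and the $2$-type joining an element $c$ to a witness $w'$ freshly added for it in \emph{Step~2} stays exactly the type $\beta$ used --- the only edge of $w'$ leading back toward the root is that very edge, and the tree shape leaves no other route from $w'$ to $c$. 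In both cases the tree shape is precisely what prevents extra equivalences from being manufactured by the repeated transitive closures, the danger flagged as worrying the ``suspicious reader''.

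Granting this, (c\ref{ctwo}) holds because each subcomponent makes all relations of $\cE_{tot}$ total by the inductive assumption (b\ref{btwo}), each witness edge copies a $2$-type of $\str{A}_0$ and hence carries all of $\cE_{tot}$ (its two $\fp$-images both lie in the single $E^*$-class $A_0$), and the component is connected, so the closures totalise $\cE_{tot}$. Property (c\ref{cthree}) is then immediate by construction: for a required witness whose $2$-type $\beta$ contains the relation $E_i$ removed at the layer of $c$, the witness already sits in the subcomponent attached to $c$ in \emph{Step~1}, by assumption (b\ref{bthree}) and pristineness; and if $E_i\notin\beta$ a fresh witness joined by exactly $\beta$ is added, again pristine. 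Properties (c\ref{cseven}) and (c\ref{ceight}) are read off the construction: a subcomponent lies inside one layer and a witness edge joins two consecutive layers; and since every witness edge from $L_i$ to $L_{i+1}$ omits $E_i$ while, by the tree shape, every path from an element of $L_i$ to an element of $L_{i+1}$ must traverse such an edge, no pair split between $L_i$ and $L_{i+1}$ is joined by $E_i$.

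For (c\ref{cfour}) I would split on whether $\cE\subseteq\cE_{tot}$. If it is, then $\fp(c_1)$ and $\fp(c_2)$ both lie in $A_0$ and are joined in $\str{A}$ by every relation of $\cE$, and any two elements of $\str{A}$ so joined plainly have the same $\cE$-visibility. Otherwise $\cE$ contains some $E_i\notin\cE_{tot}$; I would fix a path from $c_1$ to $c_2$ all of whose edges contain $\cE$ and cut it into segments alternately internal to one subcomponent and crossing one witness edge (degenerate single-vertex subcomponent segments allowed): assumption (b\ref{bfour}) keeps the $\cE$-visibility of the $\fp$-images constant along each subcomponent segment, while across a witness edge the two $\fp$-images, being joined by a copied $2$-type of $\str{A}$, are joined in $\str{A}$ by all of $\cE$ and so again share $\cE$-visibility; composing these equalities yields the claim. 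Then (c\ref{cfive}) follows from (c\ref{cfour}): the $1$-types of $c$ and $\fp(c)$ coincide by the choice of $\fp$, and if $\alpha'\in\gtype{\str{C}^{\bar{\alpha}}}{c}.\ff(\cE)$ is witnessed by some $c'$ joined to $c$ by all of $\cE$, then $\gtype{\str{A}}{\fp(c)}.\ff(\cE)=\gtype{\str{A}}{\fp(c')}.\ff(\cE)$ by (c\ref{cfour}) and $\alpha'=\type{\str{A}}{\fp(c')}$ lies in this set since the relations of $\cE$ are reflexive. Finally, for (c\ref{csix}): a $2$-type realised in $\str{C}^{\bar{\alpha}}$ that contains a positive $\sigma_{\cBase}$-binary atom is, by the first structural observation, realised inside a subcomponent (then invoke pristineness and assumption (b\ref{bsix})) or on a witness edge (a verbatim $\str{A}_0$-type); a $2$-type without such an atom consists of equivalences only, it already contains $\cE_{tot}$ by (c\ref{ctwo}), so taking $\cE$ to be its equivalences and applying (c\ref{cfive}) produces an $a\in A$ of the $1$-type of $c_2$ joined to $\fp(c_1)$ by all of $\cE$; as $\cE\supseteq\cE_{tot}$ we get $a\in A_0$, and $\type{\str{A}_0}{\fp(c_1),a}$ is then a type of $\str{A}_0$ from which our $2$-type arises by deleting the $\sigma_{\cBase}$-binary atoms and possibly some equivalences or the equality.

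The part I expect to cost the most work is not any individual item but the bookkeeping behind the two structural observations --- making precise that, although the equivalences are closed transitively after every layer, this never enriches the pristine $2$-types (the witness edges and the pairs internal to a subcomponent) nor manufactures an $\cE$-connection whose endpoints carry disagreeing patterns. Once the tree shape of the subcomponents is set up carefully, (c\ref{ctwo})--(c\ref{ceight}) reduce fairly mechanically to the inductive assumption, with (c\ref{cfour}) and (c\ref{csix}) the only ones demanding genuine analysis of paths.
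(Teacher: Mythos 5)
Your proposal is correct and follows essentially the same route as the paper: the same item-by-item verification, the same path-decomposition argument for (c\ref{cfour}), the same case split for (c\ref{csix}), and the same appeal to the tree shape of subcomponents to show that transitive closures do not enrich the witness $2$-types or the $2$-types internal to a subcomponent (your ``pristineness'' observation is exactly what the paper invokes informally for the ``suspicious reader''). Your added remark that the element $a$ in the proof of (c\ref{csix}) actually lies in $A_0$ because $\cE\supseteq\cE_{tot}$ is a small but welcome sharpening of the paper's wording.
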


\subsection{Joining the components}

In this step we are going to arrange a number of copies of our pattern  
components to obtain the desired structure $\str{B}_0$. 
We explicitly connect leaves of components with the roots of other components.
We do it
carefully, avoiding modifications to the internal structure
of components, which could potentially result from transitivity of relations from $\sigma_{\cDist}$. 
In particular, a pair of elements that are not connected by an equivalence $E_i \in \cE_0$ in $\str{C}$ 
will not become connected by a chain of $E_i$-connections external to $\str{C}$.

Let $max$ be the maximal number of elements in layers $L_{l+1}$ over all pattern components constructed for types from $\bar{\AAA}[A_0]$.
For every $\bar{\alpha} \in \bar{\AAA}[A_0]$ we take isomorphic copies $\str{C}^{\bar{\alpha}, g}_{i, j, \bar{\alpha'}}$ of $\str{C}^{\bar{\alpha}}$, for $g=0,1$ (we will call $g$ the \emph{color} of a component), $i=1, \ldots, max$, $j=1, \ldots, m$,  and every $\bar{\alpha}' \in \bar{\AAA}[A_0]$. This constitutes the universe of a structure $\str{B}_0^+$, together with partially defined structure (on the copies of pattern components). 
A substructure of $\str{B}_0^+$ will be later taken as $\str{B}_0$. 
We import the values of $\fp$ from $\str{C}^{\bar{\alpha}}$ to all its copies. 
Let us denote the copy of element $c^{\bar{\alpha}}_s$ from $\str{C}^{\bar{\alpha},g}_{i, j, \bar{\alpha}'}$ as $c^{\bar{\alpha},g}_{s, (i,j, \bar{\alpha}')}$.

Our strategy is now as follows: if necessary, the root of $\str{C}^{\bar{\alpha}^*,g}_{i,j, \bar{\alpha}}$ will serve as a witness of type $\bar{\alpha}^*$ for 
$\phi_j(x,y)$ and the $i$-th element
from layer $L_{l+1}$ of all copies of $\str{C}^{\bar{\alpha}}$ of color $(1{-}g)$.

Formally, for every element $c^{\bar{\alpha},g}_{s, (i',j', \bar{\alpha}')}$, 
for every $1 \le j \le m$ if
$\fp(c^{\bar{\alpha}}_s)$ has a witness $w$ for $\phi_j(x,y)$ in $\str{A}_0$ then, denoting $\bar{\alpha}^*=\type{\str{A}}{w}$
and $\beta=\type{\str{A}}{\fp(c^{\bar{\alpha}}_s), w}$, we 
join $c^{\bar{\alpha},g}_{s, (i',j', \bar{\alpha}')}$ with the root of $\str{C}^{\bar{\alpha}^*,1-g}_{s,j, \bar{\alpha}}$ using $\beta$.
See Fig.~\ref{f:joining}.
Transitively close all equivalences. 
This finishes the definition of $\str{B}_0^+$.

Finally, we choose any component $\str{C}$ whose root is mapped by $\fp$ to $a_0$ and remove from $\str{B}_0^+$ all the components which
are not accessible from $\str{C}$ in the \emph{graph of components}, formed by joining  a pair of components iff
the root of one of them serves as a witness for a leaf of another. We take the structure restricted to the remaining components as $\str{B}_0$.

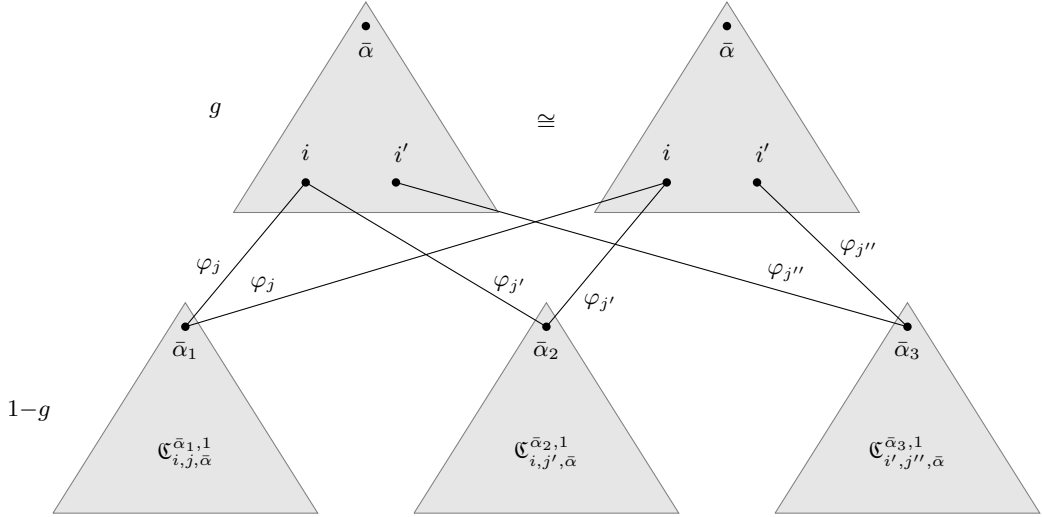
\begin{figure}  
\begin{center}
\begin{tikzpicture}[scale=0.8]

\foreach \x in {1,7,13}{
\draw[color=gray, fill=gray!20] (\x-0.2,-0.5) -- (\x+4+0.2,-0.5) -- (\x+2,3) -- (\x-0.2,-0.5);
\fill[black] (\x+2,2.6) circle (0.07);
}
\foreach \x in {4,10}{
\draw[color=gray, fill=gray!20] (\x-0.2,4.5) -- (\x+4+0.2,4.5) -- (\x+2,8) -- (\x-0.2,4.5);
\fill[black] (\x+2,7.6) circle (0.07);
}

\coordinate [label=center:$g$] (A) at ($(3.5,6.2)$); 
\coordinate [label=center:$1{-}g$] (A) at ($(0.4,1.2)$); 

\coordinate [label=center:$\bar{\alpha}_1$] (A) at ($(3,2.2)$); 
\coordinate [label=center:$\bar{\alpha}_2$] (A) at ($(9,2.2)$);
\coordinate [label=center:$\bar{\alpha}_3$] (A) at ($(15,2.2)$);

\coordinate [label=center:$\str{C}^{\bar{\alpha}_1,1}_{i,j,\bar{\alpha}}$] (A) at ($(3,0.5)$); 
\coordinate [label=center:$\str{C}^{\bar{\alpha}_2,1}_{i,j',\bar{\alpha}}$] (A) at ($(9,0.5)$);
\coordinate [label=center:$\str{C}^{\bar{\alpha}_3,1}_{i',j'',\bar{\alpha}}$] (A) at ($(15,0.5)$);

\coordinate [label=center:$\bar{\alpha}$] (A) at ($(6,7.2)$); 
\coordinate [label=center:$\bar{\alpha}$] (A) at ($(12,7.2)$);

\coordinate [label=center:$i$] (A) at ($(5,5.5)$); 
\coordinate [label=center:$i'$] (A) at ($(6.6,5.5)$); 
\coordinate [label=center:$i$] (A) at ($(11,5.5)$); 
\coordinate [label=center:$i'$] (A) at ($(12.6,5.5)$);

\coordinate [label=center:$\phi_j$] (A) at ($(3.4,3.6)$); 
\coordinate [label=center:$\phi_j$] (A) at ($(4.3,3.3)$); 
\coordinate [label=center:$\phi_{j'}$] (A) at ($(9.9,3.0)$);
\coordinate [label=center:$\phi_{j'}$] (A) at ($(8.4,3.3)$);
\coordinate [label=center:$\phi_{j''}$] (A) at ($(13,3.5)$);
\coordinate [label=center:$\phi_{j''}$] (A) at ($(14.2,3.9)$);

\coordinate [label=center:$\cong$] (A) at ($(9,6)$); 

\foreach \x in {5,6.5, 11,12.5}{
\fill[black] (\x,5) circle (0.07);
}

\draw (5,5) -- (3,2.6) -- (11,5) -- (9,2.6) -- (5,5);
\draw (6.5,5) -- (15,2.6) -- (12.5,5);

\end{tikzpicture}
\caption{Joining the components.} \label{f:joining}
\end{center}
\end{figure}

\subsection{Correctness of the construction}

Let us first observe the following basic fact.

\begin{claim} \label{c:klejm}
The process of joining the 
components does not change the previously defined internal structure of any component.
\end{claim}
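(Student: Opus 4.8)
The plan is to check that the passage from the (partially defined) disjoint union of the component copies to $\str{B}_0^+$ does only two things --- (i) it adds the connecting edges, each joining a leaf of one component to the root of another, and (ii) it transitively closes the distinguished relations --- and that neither of them adds a $\sigma$-atom between two elements of one and the same component (the later restriction of $\str{B}_0^+$ to some of its components, which produces $\str{B}_0$, plainly changes no internal structure either). Only binary atoms are ever added, and every element keeps the $1$-type it was created with, because the $2$-type $\beta$ used for a connecting edge agrees, on its two arguments, with the $1$-types of the leaf and of the root it joins. For base relations there is nothing to prove: a connecting edge runs between two \emph{different} components, and transitive closure affects only equivalences, so no $\sigma_{\cBase}$-atom is ever added inside a component. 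For $E_i \in \cE_{tot}$ there is nothing to prove either: by property (c\ref{ctwo}) such an $E_i$ is already total inside every component, so closing it adds no internal $E_i$-edge. Hence the claim reduces to: for $E_i \in \cE_0$, closing $E_i$ transitively in $\str{B}_0^+$ adds no $E_i$-edge between two elements of a common component which were not already $E_i$-related there.

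To prove this I would first note the structural consequence of property (c\ref{ceight}): inside any component no $E_i$-edge joins layer $L_i$ with layer $L_{i+1}$, so every $E_i$-class of the component is entirely contained in $L_1 \cup \dots \cup L_i$ (a \emph{top} class) or entirely contained in $L_{i+1} \cup \dots \cup L_{l+1}$ (a \emph{bottom} class); the root lies in a top class, every leaf in a bottom class, and the two regions are disjoint because $1 \le i \le l$. Since every connecting edge has one endpoint a leaf and one endpoint a root, a connecting edge whose $2$-type contains $E_i$ --- call it an \emph{$E_i$-link} --- always runs between a bottom class of its leaf's component and the necessarily unique top class containing a root, the \emph{root class}, of its root's component.

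Next I would trace how $E_i$-links chain together, exploiting the indexing of the copies. By the joining rule one fixed root --- that of $\str{C}^{\bar{\alpha}^*,1-g}_{s,j,\bar{\alpha}}$ --- is joined, by one and the same $2$-type $\beta$, precisely to the $s$-th leaves of all color-$g$ copies of the pattern component $\str{C}^{\bar{\alpha}}$, and each of these leaves occupies, inside its own copy, a copy of the one $E_i$-class $K$ that $c^{\bar{\alpha}}_s$ has in $\str{C}^{\bar{\alpha}}$. Dually, each leaf $c^{\bar{\alpha}}_{s'}$ of $\str{C}^{\bar{\alpha}}$ is joined only to roots of components of the shape $\str{C}^{\bar{\gamma},1-g}_{s',j',\bar{\alpha}}$. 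Combining these two facts, a straightforward induction on the length of a chain of $E_i$-links shows that every connected component (a \emph{blob}) of the graph whose vertices are the internal $E_i$-classes of all components and whose edges are the $E_i$-links has its vertex set contained in: the copies of one fixed $E_i$-class $K$ of one fixed pattern component $\str{C}^{\bar{\alpha}}$, one copy inside each color-$g$ copy of $\str{C}^{\bar{\alpha}}$ (for a fixed color $g$); together with the root classes of the color-$(1-g)$ components whose third index is $\bar{\alpha}$ and whose first copy-index is the number of a leaf of $\str{C}^{\bar{\alpha}}$ belonging to $K$. The color tag keeps these two families of components disjoint, so every component meets a given blob in at most one internal $E_i$-class: a color-$g$ copy of $\str{C}^{\bar{\alpha}}$ only ever contributes the leaf class $K$, while every other participating component only ever contributes its single root class. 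So no blob merges two distinct internal $E_i$-classes of any component; that is, the transitive closure of $E_i$ adds no internal $E_i$-edge, and the internal structure of every component is unchanged.

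The step I expect to be the main obstacle is exactly this inductive description of the blobs: one must carry the copy indices $(i,j,\bar{\alpha}')$ and the color $g$ along a chain of $E_i$-links, verify that re-entering a component always lands in the $E_i$-class forced by the indexing, and cope with the asymmetry that a bottom $E_i$-class may contain several leaves --- so that a single blob can touch the root classes of many components --- whereas a top class contains at most one root, so that on the top side only one class, the root class, is ever exposed to connecting edges.
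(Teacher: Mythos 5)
Your proof is correct and rests on exactly the same two facts as the paper's own argument: Condition (c8), which separates the root's $E_i$-class from all leaf classes inside a component, together with the color discipline and the indexing rule that each root serves as a witness only for copies of one fixed leaf of one fixed pattern component. The paper packages this as a contradiction obtained by projecting the segments of a hypothetical new internal $E_i$-path isomorphically onto a single component, whereas you analyze the connected components of the class--link graph directly and show each component contributes at most one $E_i$-class to each of them; this is essentially the same argument in a different (arguably cleaner) organization.
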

\begin{proof}
Potential changes could result only from  closing transitively the equivalences which join leaves of some components with their witnesses---the roots of other components. Recall that by Condition (c\ref{ceight}) the root of a component is not connected by any equivalence to any leaf of this component and note first that this condition cannot be violated in the step of joining components. This is guaranteed by our strategy requiring
leaves of components of color $g$ to take as witnesses the roots of  components of color $(1{-}g)$, for $g=0,1$.

Consider now any $E_i \in \cE_0$ and elements $c_1, c_2$ belonging to the same component $\str{C}$. Assume that 
$\str{C} \not\models E_i(c_1, c_2)$, but $\str{B}_0 \models E_i(c_1, c_2)$. This means
that during the process of providing witnesses for leaves, an  $E_i$-path joining $c_1$ and $c_2$ was formed. Take such a path. Due to
Condition (c\ref{ceight}) such a path cannot enter a component through a leaf and leave it through the root. Thus, 
without loss of generality, we can assume that  it is of the form $c_1, d_1^{\sss out}$, $r_1, d_2^{\sss in}$, $d_2^{\sss out}, r_2, \ldots$, $d_{s-1}^{\sss in}, d_{s-1}^{\sss out}, r_{s-1}, d_{s}^{\sss in}, c_2$, where 
the two elements of every pair ($c_1, d_1^{\sss out}$), ($d_2^{\sss in}$, $d_2^{\sss out}$), $\ldots$, ($d_{s}^{\sss in}, c_2$) are members of the same component, all $d^{(\cdot)}_i$ are leaves, and each $r_i$ is the root of a component  used as a witness for $d_i^{\sss out}$ and $d_{i+1}^{\sss in}$. See Fig.~\ref{f:path}.
Recalling our strategy, allowing a root to be used as a witness only for copies of the same leaf from some pattern component, 
we see that the components containing  pairs ($d_i^{\sss in}$, $d_{i}^{\sss out}$), for $i=2, \ldots, s-1$ and the component $\str{C}$ containing $c_1$ and $c_2$
are isomorphic to one another. 
Mapping isomorphically the $E_i$-edges joining $d_{i}^{\sss in}$ with $d_{i}^{\sss out}$ to $\str{C}$  we see
that $c_1$ and $c_2$ were already connected by  $E_i$ in $\str{C}$. Contradiction. \qed 
\end{proof}

\begin{figure}  
\begin{center}
\begin{tikzpicture}[scale=0.8]

\foreach \x in {1,7,13}{
\draw[color=gray, fill=gray!20] (\x-0.2,4.5) -- (\x+4+0.2,4.5) -- (\x+2,8) -- (\x-0.2,4.5);

\draw[color=gray, fill=gray!20] (\x-0.2,-0.5) -- (\x+4+0.2,-0.5) -- (\x+2,3) -- (\x-0.2,-0.5);

\fill[black] (\x+2,2.6) circle (0.07);
}

\coordinate [label=center:$\str{C}$] (A) at ($(3,7)$); 
\fill[black] (2.5,6) circle (0.07);
\coordinate [label=center:$c_1$] (A) at ($(2.5,6.35)$); 
\fill[black] (3.5,6) circle (0.07);
\coordinate [label=center:$c_2$] (A) at ($(3.5,6.35)$); 
\coordinate [label=center:$d_1^{\sss out}$] (A) at ($(1.7,5.4)$); 
\coordinate [label=center:$d_4^{\sss in}$] (A) at ($(4.4,5.4)$); 
\coordinate [label=center:$d_2^{\sss in}$] (A) at ($(8,5.53)$); 
\coordinate [label=center:$d_2^{\sss out}$] (A) at ($(9.1,5.5)$); 
\coordinate [label=center:$d_3^{\sss in}$] (A) at ($(15,5.53)$); 
\coordinate [label=center:$d_3^{\sss out}$] (A) at ($(16.1,5.5)$);
\coordinate [label=center:$E_i$] (A) at ($(2,3.5)$);

\coordinate [label=center:$r_1$] (A) at ($(3,2.2)$); 
\coordinate [label=center:$r_2$] (A) at ($(9,2.2)$);
\coordinate [label=center:$r_3$] (A) at ($(15,2.2)$);

\coordinate [label=center:$\cong$] (A) at ($(6,6)$); 

\coordinate [label=center:$\cong$] (A) at ($(12,6)$); 

\foreach \x in {2,3,4,8,9,10,14,15,16}{
\fill[black] (\x,5) circle (0.07);
}

\draw (2.5,6) -- (2,5) -- (3,2.6) -- (8,5) -- (9,5) -- (9,2.6) -- (15,5) -- (16,5) -- (15,2.6) -- (4,5) -- (3.5,6);

\draw[dotted] (2, 5) -- (3,5) -- (4,5);  

\end{tikzpicture}
\caption{An $E_i$-path joining $c_1$ and $c_2$} \label{f:path}
\end{center}
\end{figure}
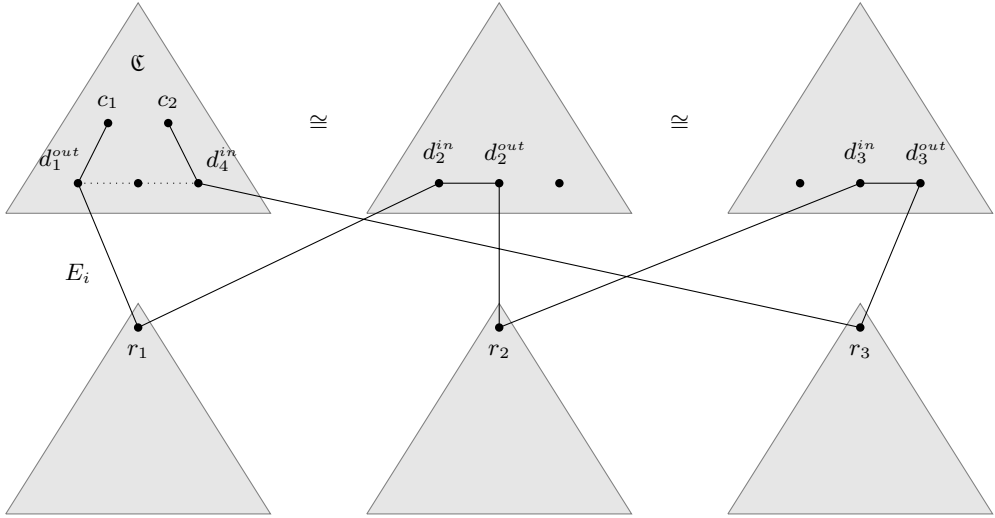

Now, Conditions  (b\ref{btwo})--(b\ref{bseven}) can be shown using
arguments similar to the ones used in the proofs of the above claim and (c\ref{ctwo})--(c\ref{cseven}). 
\subsection{Proof of conditions (b\ref{btwo})--(b\ref{bseven})}
\begin{enumerate}[(b1)]
\item This is taken care in the last step of the construction, when we take $\str{B_0}$ as a "connected" substructure
of $\str{B}_0^+$. Recall that by (c\ref{ctwo}) all relations from $\cE_{tot}$ are total in components, and that every
$2$-type joining a leaf with its witness contains all equivalences from $\cE_{tot}$.
\item All elements of layers $L_1, \ldots, L_{l}$ of any component have witnesses in their component.
For witnesses from the last layer $L_{l+1}$ of every component we take care in the step of joining the components.
The argument that the $2$-types declared during the step of providing witnesses will not be modified during the step
of taking transitive closures of equivalences is similar to the one in the proof of Claim \ref{c:klejm}.

\item The proof is very similar to the proof of Condition (c\ref{cfour}) for components, but has a slight modification due to the joining procedure.
By (c\ref{ceight}) there exists $g\in\{0,1\}$ such that there is no $E$-path joining $b_1$ and $b_2$ which uses a direct connection between the root of a component of color $g$ and a leaf of a component of color $1-g$. Firstly, by Claim \ref{c:klejm}, isomorphic components observation, if both $b_1$ and $b_2$ are in some components of colors $g$, we can assume that they are is the same component. Now we can use Claim \ref{c:klejm}-like projection argument to find for all $E\in\cE$ such $E$-paths joining $b_1$ with $b_2$ that they all use the same set of edges created during the joining step. Now we can proceed as in (c\ref{cfour}) using (c\ref{cfour}) and the fact, that for neighbouring $b,b'$ belonging to different components $\gtype{\str{A}}{(\fp(b)}.\ff(\cE)=\gtype{\str{A}}{(\fp(b')}.\ff(\cE)$.

\item Follows from (b\ref{cfour}) exactly as (c\ref{cfive}) follows from (c\ref{cfour}).
\item The proof is analogous to the proof of (c\ref{csix}). Again, this time the role of basic substructures is played by components.
\item This condition is taken care explicitly when a component for the generalized type of $a_0$ is constructed: $a_0$ becomes then
the value of $\fp$ for the root of the component.

\end{enumerate}

This finishes the proof of Lemma \ref{t:ind} and thus also the proof of the finite model property for \UNFOtEQ.

\subsection{Size of models and complexity of \texorpdfstring{\UNFOtEQ}{UNFO2+EQ}}

 To complete the proof of Thm.~\ref{t:smallmodeltwovars} we need to
estimate the size of finite models produced by our construction.
This can be done by formulating
a recurrence relation for $T_l$---an upper bound 
on the size of structure $\str{B}_0$ constructed in the proof of
Lemma~\ref{t:ind} for $l_0=l$. Note that the size of our final model
is bounded by $T_{k+1}$. (We use $T_{k+1}$ rather than $T_{k}$ since in the base of induction we may need to add an auxiliary equivalence.)

Clearly $T_0=1$. The size of a single basic substructure used
in the case $l_0=l+1$ is bounded by $T_l$. In $L_1$ there is one
such substructure. Each of its elements produces at most $m$
elements in $L_2^{init}$, each of them expanding to a basic substructure.
Thus $|L_2| \le T_lm T_l$. Inductively, $|L_i| \le T_l^{i} m ^{i-1}$. The values of estimates of $|L_i|$ form
a geometric series, whose sum (=an estimate on the size of a component) can be bounded by $T_l^{l+2}m^{l+1}$.
Denoting the number of generalized types realized in $\str{A}_0$ by $K$ the number of components is
$K \cdot 2 \cdot (T_l^{l+1}m^{l}) \cdot m \cdot K$.
Thus we get
$$T_{l+1} \le 2K^2 T^{l+1}m^{l+1} \cdot T_l^{l+2}m^{l+1}=2K^2m^{2l+2}T_l^{2l+3}.$$
Since $K$ is bounded doubly exponentially and $m, l$---polynomially in $|\phi|$, the solution of this recurrence relation
allows to estimate $|T_{k+1}|$  doubly exponentially in $|\phi|$.

\medskip
We conclude this section with the following observation.
\begin{theorem}
The satisfiability (= finite satisfiability) problem for \UNFOtEQ{} is \TwoExpTime-complete.
\end{theorem}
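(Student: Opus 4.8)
The plan is to get the \TwoExpTime{} upper bound essentially for free from the machinery already in place, and to prove \TwoExpTime-hardness by a direct encoding of alternating computations that exploits the two equivalence relations. For the upper bound I would first note that \UNFOtEQ{} is literally the two-variable fragment of \UNFOEQ, so every \UNFOtEQ{} sentence is a \UNFOEQ{} sentence and Theorem~\ref{t:globalsat} already provides a \TwoExpTime{} decision procedure for its satisfiability; then, invoking Theorem~\ref{t:smallmodeltwovars} (\UNFOtEQ{} has the finite model property), finite satisfiability and satisfiability coincide, so the same bound transfers to finite satisfiability. No separate argument is needed here.

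The real work is the lower bound. I would reduce from a canonical \TwoExpTime-complete problem---acceptance of an alternating Turing machine $M$ running in space $2^n$ on an input $w$ of length $n$---by constructing a normal form \UNFOtEQ{} sentence $\phi_{M,w}$, over a signature with two distinguished symbols $E_{\sss conf}$ and $E_{\sss pos}$ and $O(n)$ base unary predicates, such that $\phi_{M,w}$ is (finitely) satisfiable iff $M$ accepts $w$. The intended models encode an accepting run tree of $M$: its branching mirrors the existential/universal choices of $M$, which is exactly the tree-like shape of model that \UNFO{} supports (Lemma~\ref{l:treelike}). Each configuration is represented by a path of $2^n$ cell-elements, the address of a cell being stored in $n$ of the base unary predicates used as a binary counter; conjuncts of the form $\forall x\exists y\,\psi(x,y)$ force the counter to increment correctly along such a path and conjuncts $\forall xy\,\neg\psi_0(x,y)$ exclude illegal local patterns, while the endpoints of a path are marked as carrying the addresses $0\cdots 0$ and $1\cdots 1$.

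The two equivalences handle the synchronization between successive configurations. I would arrange $E_{\sss conf}$ so that each of its classes is exactly one configuration, and $E_{\sss pos}$ so that the address-$a$ cell of a configuration is $E_{\sss pos}$-linked to the address-$a$ cell of each of its child configurations in the run tree; transitively closing $E_{\sss pos}$ then merely pools, within a subtree, all cells of a common address, which is harmless. Consistency with the transition function of $M$ is imposed in the familiar two-variable style: a head cell of address $a$ announces to its neighbours of addresses $a-1,a,a+1$ along the cell-path (all reachable with two variables) the rewritten symbol and the head move, and every cell of a child configuration checks---along an $E_{\sss pos}$-edge additionally marked as a ``next-configuration'' edge---that its content is what its $E_{\sss pos}$-predecessor announced; acceptance is then propagated up the run tree by a unary predicate and $\phi_{M,w}$ asserts that the root is accepting.

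I expect the main obstacle to lie exactly where \UNFO{} is weakest: inequalities are unavailable, so one cannot force configurations to consist of \emph{precisely} $2^n$ distinct cells or force the relevant $E$-classes to be disjoint by direct assertion. The remedy is to make soundness (``$M$ accepts $\Rightarrow\phi_{M,w}$ satisfiable'') a straightforward construction, and to prove completeness through the tree-like model property of \UNFOtEQ: one may assume the model is tree-like, read off a legal accepting run of $M$ from it, and use the counter and marking conjuncts to guarantee that each putative configuration really has the right cells and that the relevant $E_{\sss pos}$-classes are not degenerate in a way that trivialises the transition checks. The delicate sub-task will be making the two equivalences cooperate---ensuring that the transitive closure of $E_{\sss pos}$ never merges configurations that $E_{\sss conf}$ keeps apart and never manufactures spurious ``next-configuration'' links---after which both directions of the reduction are routine. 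Since the construction uses only two distinguished symbols, this matches the claim that two equivalences already suffice for \TwoExpTime-hardness.
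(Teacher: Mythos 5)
Your upper bound is exactly the paper's: the finite model property (Theorem~\ref{t:smallmodeltwovars}) collapses finite satisfiability into satisfiability, and Theorem~\ref{t:globalsat} supplies the \TwoExpTime{} procedure. For the lower bound the paper does not build a reduction from scratch; it observes that the known \TwoExpTime-hardness proof for \GFt{} with two equivalence relations from \cite{Kie05} uses only properties expressible with unary negation, so a routine adaptation suffices. Your plan --- a direct encoding of exponential-space alternating Turing machines using two equivalences --- is the same strategy that underlies that cited construction, so in spirit you are on the paper's track.

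However, the specific design you sketch has a genuine unresolved problem, and it sits precisely at the point you yourself flag as ``delicate.'' If $E_{pos}$ links the address-$a$ cell of every configuration to the address-$a$ cell of each of its children, then after transitive closure \emph{all} address-$a$ cells along a branch of the run tree lie in a single $E_{pos}$-class. Any consistency constraint of the form ``every $E_{pos}$-classmate of $x$ carrying such-and-such unary marking must have content compatible with what $x$ announced'' will therefore fire not only between a configuration and its immediate successor but also between configurations arbitrarily many steps apart whose unary markings happen to coincide --- and with only finitely many unary predicates, markings must repeat along a long branch. A mod-$2$ or mod-$3$ colouring of depths does not escape this, since the pooled class contains cells of every colour. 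Relegating the genuine successor links to a base binary ``next-configuration'' relation does not help either: the whole reason the equivalence is needed is to locate the matching cell across an exponential distance, and a constraint guarded only by the base relation cannot be combined with membership tests in a way that filters out the spurious transitive links, because those links are indistinguishable by unary information. The known constructions resolve this by a more careful arrangement (e.g., alternating which of the two equivalences carries the cross-configuration links at successive levels, so that no single equivalence class ever spans more than one generation). Without that --- or an equivalent device --- your reduction is not sound, so the hardness half of your proof is incomplete as it stands; the fix is exactly what you would find by consulting \cite{Kie05}.
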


\begin{proof}
The upper bound follows from the finite model property and the upper bound for general satisfiability problem for \UNFOEQ{} formulated in  Thm.~\ref{t:globalsat}.
The lower bound can be shown by a routine adaptation of the proof of a \TwoExpTime-lower
bound for the two-variable guarded fragment with two equivalence relations from \cite{Kie05}.
A simple inspection of the properties  needed to be expressed in that proof shows that they need only unary negations. 

We also remark that a similar construction can be used to show that the doubly exponential upper bound on the size of models of satisfiable
\UNFOtEQ{} formulas is essentially optimal, that is \UNFOtEQ{} it is possible to enforce models of at least doubly exponential size. \qed
\end{proof}

\section{Small model theorem for full \UNFOEQ} \label{s:maineq}

In this section we explain how to extend the small model theorem from the previous section to the case in which the
number of variables is unbounded. The general approach is similar: given a pattern model we 
inductively rebuild it into a finite one. The first difference is that this inductive construction will be preceded
by a pre-processing step producing from an arbitrary pattern model a model which has regular tree-like shape. 
Assuming such regularity 
will allow not only for a simpler description of the main construction,
but, more importantly, for a simpler argument that the finite model we build satisfies part (2) from Lemma \ref{l:homomorphisms}. 

Secondly, the number of layers
of 
components we are going to construct needs to be increased with respect to the two-variable case. 
This
time we not only require that the root of a component is not connected with any leaf by any (non-total) equivalence---we
 use a stronger property that
in particular
implies that there is
no path from the root to a leaf built out of equivalence connections, on which the equivalences alternate less
than $t$ times (recall that $t$ is the number of variables in the $\forall$-conjunct). 

The third difference we
want to point out concerns the construction of witness structures. In the two-variable case a witness structure 
for a given element $a$ and $\phi_i$ consisted of $a$ and just one additional element and in the inductive process
it was created at once. Now such witness structures are bigger.  Moreover,  for simplicity, 
we will deal with full  $\phi$-witness structures rather
than with witness structures for various $\phi_i$ separately. 
Given a tree-like model we will allow ourselves to speak about \emph{the}  $\phi$-witness structure
for an element, meaning the 
witness structure consisting of this element and its all children, even if, accidentally, some other $\phi$-witness structures
for this element exist. 
In a single inductive step  usually only
some parts of $\phi$-witness structures are created (the parts in which the appropriate equivalences are total)
and the remaining
parts are completed in the higher levels of induction. 
Such fragments of $\phi$-witness structures considered in a single inductive step
will be referred to as \emph{partial $\phi$-witness structures}.

 Finally, generalized types from Section 3 will no longer be sufficient for our purposes. The role of a type
of an element will be played this time by the isomorphism type of the subtree rooted at the pattern 
of this element.

\subsection{Regular tree-like models}

\begin{lemma} \label{l:regular} Every satisfiable \UNFO{} normal form  formula  $\phi$ has
a tree-like model $\str{A}\models\phi$ with doubly exponentially many (with respect to $|\phi|$)  non-isomorphic subtrees.
\end{lemma}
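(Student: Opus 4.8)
The plan is to produce the required regular tree-like model by a two-step argument: first obtain an (infinite) tree-like model from Lemma~\ref{l:treelike}, then show that it can be taken to be \emph{regular} — having only doubly exponentially many non-isomorphic subtrees — by a standard pruning/folding argument based on a suitable finitary invariant of subtrees. Concretely, I would start from a model $\str{A}\models\phi$, form its $\phi$-tree-like unraveling $\str{A}'$ as in the excerpt, and then look for an invariant $\tau(a)$ of the subtree rooted at each node $a$ that (i) takes only doubly exponentially many values, and (ii) controls the subtree strongly enough that if $\tau(a)=\tau(b)$ then the subtree below $a$ may be replaced by (a copy of) the subtree below $b$ without destroying satisfaction of $\phi$.

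First I would identify the right invariant. Because $\phi$ is in normal form, the only ``global'' constraint is the $\forall x_1\ldots x_t\,\neg\phi_0(\bar x)$ conjunct, and by part~(2) of Lemma~\ref{l:homomorphisms} this is respected as soon as every $t$-tuple of the new model admits a $1$-type-preserving homomorphism into $\str{A}$. In a tree-like structure every such $t$-tuple lies within a bounded-radius neighbourhood (its elements are spread over at most $t$ consecutive levels along at most $t$ branches), so what must be preserved when we graft is: the $1$-type of the root, the multiset of $\phi_i$-witness-structure ``shapes'' hanging off it, and — crucially, because equivalences are transitively closed globally — the pattern of which equivalence classes the root's subtree contributes to, i.e. essentially the generalized-type information $\gtype{\str{A}}{-}$ augmented with a bounded amount of data about equivalence connections reaching up out of the subtree. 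I would package all of this into $\tau(a)$: the $1$-type $\type{\str A}{a}$, the eq-visibility function, and the isomorphism type of the ``top part'' of the subtree up to depth $t$ together with the set of $\tau$-values that occur at that cut. The number of such objects is doubly exponential in $|\phi|$, since $1$-types, subsets of $\sigma_{\cDist}$, and bounded-depth neighbourhoods are each at most exponential in $|\phi|$.

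Next I would perform the folding: process $\str{A}'$ top-down and whenever a node $b$ has a strict ancestor (or earlier-processed node) $a$ with $\tau(a)=\tau(b)$, redirect the edge entering $b$ to a fresh copy of the subtree below $a$, discarding the old subtree at $b$; then re-close all equivalences transitively. Equivalently, collapse the tree of $\tau$-values into a finite ``blueprint'' of doubly exponential size and re-unravel. By construction the resulting $\str{A}''$ is still tree-like and still satisfies conditions (1)–(2) of Lemma~\ref{l:homomorphisms} — condition~(1) because each node keeps a full $\phi$-witness structure (witnesses live one level down, inside the preserved top part of the grafted subtree), and condition~(2) because any $t$-tuple of $\str{A}''$ sits inside a neighbourhood that is isomorphic, via the $\tau$-preservation, to one occurring in $\str{A}$, so the homomorphism $\fh$ into $\str{A}$ still exists; hence $\str{A}''\models\phi$. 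Finally $\str{A}''$ has at most doubly exponentially many non-isomorphic subtrees, namely one for each $\tau$-value actually used.

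The main obstacle I expect is verifying that transitive closure of the equivalences does not spoil the argument after grafting. Cutting a subtree and pasting a different one can, in principle, create new equivalence connections between nodes that lie outside the pasted region (a fresh $E_i$-path routed through the new subtree), and those new connections could make some $2$-type inconsistent with $\neg\phi_0$, or could change the $\gtype$ of distant elements. This is exactly the phenomenon the authors warn about in the discussion preceding Lemma~\ref{t:ind}. To handle it I would make $\tau$ record not merely the generalized type but enough about the equivalence structure of the top $t$ levels of the subtree — in the spirit of condition~(c\ref{ceight})/the $t$-fold alternation property mentioned just above Lemma~\ref{l:regular} — to guarantee that grafting a subtree with the same $\tau$ cannot shorten any equivalence-alternation path, so no genuinely new short $E_i$-connections arise and the homomorphism-into-$\str{A}$ property is stable. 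Getting this bookkeeping exactly right (how deep a cut to remember, how to phrase the invariant so it is both finite-valued and graft-stable) is the delicate part; everything else is routine.
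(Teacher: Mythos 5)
Your high-level plan --- unravel via Lemma \ref{l:treelike}, then fold the tree along a finitary invariant of subtrees and re-unravel from a doubly exponential blueprint --- is the same as the paper's. The gap is in the invariant itself. As you define it, $\tau(a)$ contains ``the set of $\tau$-values that occur at the cut'' at depth $t$; on an infinite tree this is circular (the $\tau$-value of a node on the cut depends on its own subtree, hence on $\tau$-values at depth $2t$, and so on), so it only makes sense as a coinductive, bisimulation-style quotient whose index is precisely what would have to be bounded --- and even setting that aside, an invariant that records a \emph{set} of its own values over an exponentially large cut cannot satisfy a doubly exponential cardinality bound. If you drop the recursive component and keep only the $1$-type, the eq-visibility function and the isomorphism type of the depth-$t$ prefix, the invariant is well defined and correctly counted, but it is then too weak, for a reason you state and then argue past: your opening locality claim is false for the structures at hand. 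In a tree-like model the equivalences are transitively closed, so a tuple witnessing $\phi_0$ need \emph{not} sit in a bounded-radius neighbourhood --- two of its elements may be at arbitrary tree-distance and still be joined by $E_i$ through a long chain of witness structures. A depth-$t$ prefix plus per-element eq-visibility data does not determine whether a given $t$-element configuration, with prescribed equivalence atoms among several of its elements, can be completed somewhere inside the (unbounded) subtree; the eq-visibility function is exactly the right information for pairs, i.e.\ for the two-variable case, but it does not generalize to $t$-element patterns.

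What the paper uses instead is a whole-subtree semantic invariant: each node is assigned a \emph{declaration}, the set of those partial patterns --- small configurations that could be extended to a tuple falsifying the $\forall$-conjunct --- which do \emph{not} occur in the subtree rooted at that node. There are exponentially many such patterns, hence doubly exponentially many declarations, and no circularity arises because a declaration is a property of the entire subtree rather than a recursively defined local type. One representative node is chosen per realized declaration and the regular model is rebuilt from copies of the representatives and their $\phi$-witness structures; preservation of $\forall \bar{x}\,\neg\phi_0$ then follows because grafting a subtree with the same declaration cannot make a previously absent partial pattern realizable. Replacing your $\tau$ by such a declaration (or showing your bounded-depth data determines it, which it does not) is the missing step; the rest of your plan, including the re-unravelling and the witness-structure bookkeeping, matches the paper's intent, and the paper itself defers the full details of this argument to \cite{DK18c}.
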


The proof starts from a tree-like model guaranteed by Lemma \ref{l:treelike}. Then, roughly speaking, 
some patterns which could possibly be extended to  substructures falsifying the $\forall$-conjunct of $\phi$
are defined. A node of a tree-like model is assigned a \emph{declaration}, that is 
the list of such patterns which do not appear in its subtree. We choose one node for every realized declaration and
build a regular tree-like model out of copies of the chosen elements and their $\phi$-witness structures.
As the number of possible declarations is bounded doubly exponentially,  the claim follows. We omit the details of the proof,
referring the reader to the proof of an analogous fact for a more general scenario involving arbitrary transitive 
relations rather than equivalences, see \cite{DK18c}.

\subsection{Main theorem}

We are now  ready to show the main result of this paper.

\begin{theorem} \label{t:maintr}
Every satisfiable \UNFOEQ{} formula $\phi$ has a model of size bounded doubly exponentially in $|\phi|$.
\end{theorem}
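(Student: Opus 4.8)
The plan is to mirror the two-variable argument of Section 3 but with the enriched notion of type announced in the text, working over a regular tree-like model. First I would invoke Lemma~\ref{l:regular} to fix a regular tree-like model $\str{A}\models\phi$ with only doubly exponentially many non-isomorphic subtrees, and let $\GGG$ be the (doubly exponential) set of isomorphism types of subtrees occurring in $\str{A}$. For a node $a$ write $\tau(a)$ for the isomorphism type of the subtree rooted at $a$; this replaces the generalized type $\bar\alpha$ from the two-variable case. I would then extract, for each subtree-type, a \emph{pattern $\phi$-witness structure}: the restriction of $\str{A}$ to $a$ together with its children, recorded up to isomorphism-plus-the-subtree-types-of-the-children, so that witness structures can be copied faithfully. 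Because the model is tree-like, the only $\sigma_{\cDist}$-connections crossing between distinct witness structures are those forced by transitive closure of equivalences that already live inside witness structures, which is what will let us control conflicts.

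Next I would state and prove the generalized inductive lemma --- the analogue of Lemma~\ref{t:ind}. The induction is again on $l_0=|\cE_0|$, the number of non-total equivalences, with $\cE_{tot}=\sigma_{\cDist}\setminus\cE_0$ and $E^*=\bigcap_{E_i\in\cE_{tot}}E_i$; given $a_0$ and its $E^*$-class $\str{A}_0$ we build a finite $\str{B}_0$ with a pattern map $\fp:B_0\to A_0$ satisfying conditions (b1)--(b7), but now the "type-preservation" clauses (b4),(b5) are phrased in terms of subtree-types rather than eq-visibility functions, and (b3) is replaced by: every $b$ with $\fp(b)$ not a leaf of $\str{A}_0$ has, for each $\phi_i$, a full partial $\phi$-witness structure isomorphic to the one of $\fp(b)$ restricted to those parts where the equivalences in $\cE_{tot}$ are total. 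The base case $l_0=0$ is as before ($\str{B}_0=\str{A}_0$, a singleton after adding a fake identity equivalence). For the inductive step I build \emph{components}: finite tree-shaped structures with layers $L_1,\dots,L_{N}$, where --- and this is the first genuine novelty --- $N$ must be large enough (roughly $l\cdot(t+1)+1$ instead of $l+1$) to guarantee the stronger separation property: no path of equivalence-edges from the root to a leaf alternates among the equivalences fewer than $t$ times. Each $L_i^{init}$-element is expanded, via the inductive hypothesis applied to one fewer non-total equivalence, into a subcomponent, and then partial $\phi$-witness structures for the resulting elements are attached in $L_{i+1}^{init}$ exactly when the relevant equivalence is absent from the connecting $2$-type; layer indices are cycled through the equivalences $E_1,\dots,E_l$ so that the alternation-counting argument goes through.

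Then I would join copies of the pattern components: take $2$ colors and enough indexed copies $\str{C}^{\tau,g}_{\cdots}$, and let a leaf of a component of color $g$ take as its missing $\phi$-witness structure the appropriate copy of a component of color $1-g$ whose root has the right subtree-type, finally restricting to the components reachable from a component whose root maps to $a_0$. The correctness proof repeats the claims of Section 3: (c1)/(b1) totality of $\cE_{tot}$ by tree-shape plus transitive closure; the "no internal structure is changed on joining" claim via the isomorphic-components projection argument, now using the stronger $t$-fold-alternation separation in place of mere root-leaf disconnection; (c4)/(b4), (c5)/(b5) by chaining through subcomponents and witness-links and invoking the inductive hypothesis; (c6)/(b6) --- every realized $2$-type embeds (on its $1$-types and equivalences) into one of $\str{A}_0$ --- which via Lemma~\ref{l:homomorphisms}(2) needs more than pairs: we need, for every $t$-tuple of elements of $\str{B}_0$, a $1$-type-preserving homomorphism of the induced substructure into $\str{A}$, and \emph{this} is exactly where regularity and the $t$-fold separation pay off, because a $t$-element subset of $\str{B}_0$ can only "see" $\sigma_{\cBase}$-edges within a single witness structure and equivalence-chains short enough to be traced back, via the isomorphic-component projections, to a single subtree of $\str{A}$. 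Finally the size bound: the component has $\le T_l^{N+1}m^{N}$ elements with $N=O(lt)$, the number of components is $(\#\GGG)^2\cdot\mathrm{poly}$, and since $\#\GGG$ is doubly exponential while $N,m$ are polynomial in $|\phi|$, solving the resulting recurrence $T_{l+1}\le\mathrm{poly}(|\phi|)\cdot(\#\GGG)^2\cdot m^{O(lt)}\cdot T_l^{O(lt)}$ still gives $T_{k+1}$ doubly exponential.

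The main obstacle I expect is precisely the verification of part (2) of Lemma~\ref{l:homomorphisms} for all $t$-element subsets rather than just pairs --- one must argue that any such subset of the glued structure, together with its induced $\sigma_{\cBase}$- and $\sigma_{\cDist}$-structure, homomorphically maps into $\str{A}$ preserving $1$-types; controlling the equivalence-connections among $t$ arbitrary elements (which can be linked by long transitively-closed chains threading through many components) is what forces both the increase in the number of layers and the recourse to a regular tree-like pattern model, and getting the alternation-counting bookkeeping right so that every relevant configuration projects back into one subtree of $\str{A}$ is the delicate heart of the proof.
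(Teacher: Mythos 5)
Your plan follows the paper's route almost exactly: a regular tree-like pattern model via Lemma~\ref{l:regular}, subtree isomorphism types in place of the generalized types of Section~3, induction on the number of non-total equivalences, taller components with a $t$-dependent separation property, and the two-color joining of components. The architecture is right, and you correctly locate the crux in verifying part (2) of Lemma~\ref{l:homomorphisms} for $t$-tuples rather than pairs.

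But that crux is exactly where the proposal stops, and the invariant you propose to carry through the induction is not the one that makes the argument close. You phrase the needed condition as ``for every $t$-tuple of elements of $\str{B}_0$, a $1$-type-preserving homomorphism of the induced substructure into $\str{A}$.'' As an inductive hypothesis this is too weak to be glued: when a tuple $\bar a$ straddles a subcomponent and several of its children, one obtains separate homomorphisms $\fh_0,\fh_1,\dots,\fh_K$ landing in different subtrees of $\str{A}$, and to combine them one must (i) know that each $\fh_i$ sends the root $b_i$ of its subcomponent to a node with $\str{A}_{\fh_i(b_i)}\cong\str{A}_{\fp(b_i)}$, so that $\fh_i$ can be re-rooted at the corresponding witness of $\fh_0(c_i)$ --- this is where the regularity of $\str{A}$ is actually used --- and (ii) control the domain: the paper's condition (d3) in Lemma~\ref{l:finiteeq} asserts the homomorphism only on $\str{W}_{\bar a}$, the union of the partial $\phi$-witness structures of the tuple elements, and additionally demands $\str{A}_{\fp(a)}\cong\str{A}_{\fh(a)}$ for each $a\in\bar a$ and that $\fh\restr W_a$ be an isomorphism. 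The paper also introduces auxiliary relations $W^i$ numbering the witnesses in $\str{A}$, which force every homomorphism to respect the witness assignment and make the extension of $\fh_0$ to the witness sets unique; without some such device the gluing is not even well defined. Finally, ``every relevant configuration projects back into one subtree of $\str{A}$'' is not literally what happens: after the projection reductions one is left with a structure shaped like a component of double height, and the homomorphism is then built by a separate bottom-up induction over its tree of subcomponents (Claim~\ref{c:joiningeq}). So the proposal has the right skeleton and size analysis, but the inductive statement must be strengthened to something like the paper's (d1)--(d4), and the gluing argument must be supplied, before this is a proof.
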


Let us fix a satisfiable normal form \UNFOEQ{} formula $\phi$, and the finite relational signature $\sigma=\sigma_{\cBase} \cup \sigma_{\cDist}$ consisting of all symbols appearing in $\phi$.  Enumerate the equivalences as $\sigma_{\cDist}=\{E_1, \ldots, E_k \}$. Fix a regular tree-like $\sigma$-structure $\str{A} \models \phi$ with at most doubly exponentially many non-isomorphic subtrees, which exists due to Lemma \ref{l:regular}. We show how to build a finite model of $\phi$.
We mimic the inductive approach and the main steps of a finite model construction for $\phi$ from the previous section. However, the details are more complicated.

Recall that in the two-variable case, we built our finite structure together with a function $\fp$ whose purpose was to assign to elements of the new model elements of the original model of similar generalized types. Intuitively, in the current construction the role of generalized types of elements will be   played by the isomorphism types of subtrees of $\str{A}$.

An important property of the substructures created during our inductive process is that they admit some partial homomorphisms to the pattern tree-like model $\str{A}$ which restricted to (partial) witness structures act as isomorphisms into the corresponding parts of the $\phi$-witness structures in $\str{A}$.
We impose that every homomorphism respects the
this condition using directly the structure of $\str{A}$.
To this end we introduce further fresh (non-equivalence) binary symbols $W^i$ whose purpose is to relate elements to their witnesses.
We number the elements of the $\phi$-witness structures in $\str{A}$ arbitrarily (recall that each element is a member of its own $\phi$-witness structure)
and interpret $W^i$ in $\str{A}$ so that
for each $a,b\in A$, $\str{A}\models W^iab$ iff $b$ is the  $i$-th element of the $\phi$-witness structure for $a$ (from now, for short, we refer to the element $b$ satisfying $W^iab$ as the $i$-th witness for $a$).
We do this in such a way that if two subtrees of $\str{A}$ were isomorphic before interpreting the $W^i$ then they still are after such expansion.
Now, if we mark $b$ as the $i$-th witness for $a$ during the construction (that is set $\str{A}'\models W^iab$), then for any 
homomorphism $\fh$ we have $\str{A}\models W^i\fh(a)\fh(b)$.

To shorten notation we will denote by $[a]_E$ the $E$-equivalence class of an element $a$ (the structure will be clear from the context).
We denote by $\str{A}_a$ the subtree rooted at $a$ (from now on such subtrees will be considered only in  $\str{A}$).
We state the counterpart of Lemma \ref{t:ind} as follows. 

\begin{lemma} \label{l:finiteeq} Let $\cE_0\subseteq\sigma_{\cDist}$, $\cE_{tot}=\sigma_{\cDist}\backslash\cE_0$, $E^*=\bigcap_{E_i\in\cE_{tot}}E_i$, $a_0\in A$, $\str{A}_0$ be the induced substructure of $\str{A}$ on $A_{a_0}\cap[a_0]_{E^*}$. Then there exists a finite structure $\str{A}'_0$, an element (called the \origin{} of $\str{A}_0'$) $a_0'\in A'_0$ and a function $\fp:A'_0\to A_0$ such that:
	\begin{enumerate}[(d1)]
		\item  $E^*$ is total on $\str{A}_0'$.\label{done}
		
		\item  $\fp(a_0')=a_0$.\label{dtwo}
		
		\item
	
		For each $a'\in A'_0$ and each $i$, if the $i$-th witness for $\fp(a')$
		lies in $A_0$ (that is $\str{A}_0\models\exists y\; W^i\fp(a')y$) then there exists a unique element $b'\in A_0'$ such that $\str{A}_0'\models W^i a'b'$. Otherwise there exists no such element. Denote $W_{a'}=\{b':\exists i\;\str{A}_0'\models W^i a'b'\}$ and for a tuple $\bar{a}$ let $W_{\bar{a}}=\bigcup_{a\in\bar{a}}W_{a}$. \label{dtwohalf}
		
		\item 
		For each $\bar{a}\subseteq A_0'$ satisfying $|\bar{a}|\leq t$ there exists a homomorphism $\fh:\str{W}_{\bar{a}}\to\str{A}_0$ such that for each $a\in\bar{a}$ we have $\str{A}_{\fp(a)}\cong\str{A}_{\fh(a)}$ and $\fh\restr W_a$ is an isomorphism (onto its image).
	
		Moreover, if $a_0'\in\bar{a}$ then we can choose $\fh$ so that $\fh(a_0')=a_0$.\label{dthree}
		
		\item
	
		For each $a\in A_0'$ we have $\str{W}_a\cong\str{W}\restr A_0$ where $\str{W}$ is the $\phi$-witness structure for $\fp(a)$.
	(Note that, by the definition of the $W^i$, each such isomorphism sends $a$ to $\fp(a)$.)
		\label{dfour}

	\end{enumerate}
\end{lemma}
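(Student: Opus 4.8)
The plan is to follow the blueprint of Lemma~\ref{t:ind}, proceeding by induction on $l_0=|\cE_0|$, with the three modifications anticipated above: the pattern model $\str{A}$ is now regular tree-like (Lemma~\ref{l:regular}), the components have more layers, and single-element witnesses are replaced by (partial) $\phi$-witness structures. As in the two-variable case I first assume, without loss of generality, that $\sigma_{\cDist}$ contains an auxiliary equivalence interpreted in $\str{A}$ as the identity; this trivialises the base case $l_0=0$, since then $E^*$ refines the identity, so $\str{A}_0=\str{A}\restr\{a_0\}$ is a single element and I may take $\str{A}_0'=\str{A}_0$, $a_0'=a_0$ and $\fp$ the identity, with conditions (d\ref{done})--(d\ref{dfour}) immediate.

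For the inductive step, fix $\cE_0=\{E_1,\dots,E_l\}$ and $a_0,\cE_{tot},E^*,\str{A}_0$ as in the statement, and assume the lemma for $l_0=l-1$. As building blocks I construct, for every isomorphism type $\tau$ of a subtree $\str{A}_a$ with $a\in A_0$, a \emph{pattern component} $\str{C}^{\tau}$: a finite tree-shaped structure whose universe is split into layers $L_1,\dots,L_N$, with $N$ large enough (polynomial in $t$ and $l$) to force the strengthened alternation property. The construction alternates two steps exactly as in Section~3. In the subcomponent step, for each $c$ in the initial part of a layer $L_i$ with $\fp(c)=a_1$, I set $\cE_1=\cE_0\setminus\{E_i\}$, let $A_1=A_{a_1}\cap[a_1]_{E^*\cap E_i}$, invoke the inductive hypothesis for $\cE_1,a_1$ to obtain a finite structure with origin and a map $\fp_1$, and glue a copy of it into $L_i$, identifying $c$ with that origin and importing $\fp$. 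In the witness step, for each $c$ in the now-complete $L_i$ and each $W^j$-obligation of $\fp(c)$ not already realised inside the subcomponent of $c$, I add to $L_{i+1}^{init}$ a copy of the missing part of the $\phi$-witness structure of $\fp(c)$ in $\str{A}$, transport the $W^j$-markers and $\fp$, and then transitively close all equivalences; I iterate until $L_N$ is built. The component properties — the counterparts of (c\ref{ctwo})--(c\ref{ceight}): $\cE_{tot}$ total; non-leaf elements carry their full $\phi$-witness structures; every realised $2$-type is ``safe''; $\sigma_{\cBase}$-edges join only elements of the same or consecutive layers; and no root-to-leaf route uses fewer than $t$ alternations of equivalence symbols — are verified exactly as the corresponding Claim in Section~3, the inductive hypothesis playing the role of the inner claim and the $W^i$-markers forcing (partial) witness structures to be reproduced rigidly under homomorphisms.

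Next I join the components: I take many coloured copies $\str{C}^{\tau,g}$, $g\in\{0,1\}$, indexed so that there are enough of them to serve all obligations, and for every leaf $c$ of a colour-$g$ component whose $W^j$-witness for $\fp(c)$ is still missing I connect $c$ to the \origin{} of a fresh colour-$(1{-}g)$ copy of $\str{C}^{\tau'}$, where $\tau'$ is the subtree type of that witness, using a copy of the relevant part of the $\phi$-witness structure of $\fp(c)$ in $\str{A}$; then I transitively close the equivalences, pick a component whose root maps to $a_0$, and restrict to the components reachable from it in the graph of components, obtaining $\str{A}_0'$ with \origin{} that root (giving (d\ref{dtwo})). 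The analogue of Claim~\ref{c:klejm} — joining does not alter the interior of any component — follows from the alternation property together with the colouring trick, exactly as before: any equivalence path that would create a new internal edge runs through roots and leaves of a chain of copies of one and the same pattern component attached at copies of one and the same leaf, so projecting it isomorphically back yields an already-present edge. Given this, (d\ref{done}) follows since $\cE_{tot}$ is total in components and every leaf-to-witness $2$-type contains $\cE_{tot}$; (d\ref{dtwohalf}) and (d\ref{dfour}) follow since all witness structures, and in particular the $W^i$-edges, are copied verbatim from $\str{A}$, so each $\str{W}_a$ is an isomorphic copy of $\str{W}\restr A_0$; and (d\ref{dthree}) is obtained by reading off a homomorphism from the $\fp$-images and the subtree-type data carried along, extended coherently across the equivalence chains linking the components meeting $\bar{a}$. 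Finally, applying the lemma with the auxiliary equivalence added so that $\cE_0=\sigma_{\cDist}$, $\cE_{tot}=\emptyset$, $\str{A}_0=\str{A}$, conditions (d\ref{dtwohalf})--(d\ref{dfour}) together with Lemma~\ref{l:homomorphisms} show $\str{A}_0'\models\phi$, and the doubly-exponential size bound of Theorem~\ref{t:maintr} drops out of the same kind of recurrence as in Section~3, using that $\str{A}$ has only doubly-exponentially many subtree types.

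The main obstacle is condition (d\ref{dthree}). In the two-variable case its role was played by the eq-visibility conditions (b\ref{bfour}) and the generalized types; here it must be delivered directly as a homomorphism on whole (partial) witness structures, which forces control over how the elements of $\bar{a}$ and their witnesses are connected by equivalences across the components joined in the last step. The key point, just as for (c\ref{cfour}), is that a chain of equivalence edges linking two elements of $\bar{a}$ passes through a sequence of copies of one and the same pattern component attached at copies of one and the same leaf, hence projects down to a chain inside $\str{A}$; the strengthened alternation property — no root-to-leaf route with fewer than $t$ alternations — is what prevents a tuple of size $\le t$ from ``tunnelling through'' a component and thereby creating equivalence connections absent among the $\fp$-images in $\str{A}$. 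Assembling all these projected chains into a single homomorphism, compatibly with the constraints $\str{A}_{\fp(a)}\cong\str{A}_{\fh(a)}$ and, when $a_0'\in\bar{a}$, with $\fh(a_0')=a_0$, is the delicate part; it is exactly where the conference version was incomplete, and the repair is the strengthening of the inductive assumption recorded in (d\ref{dthree})--(d\ref{dfour}).
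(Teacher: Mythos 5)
Your overall architecture is the paper's: induction on $|\cE_0|$, pattern components built from subcomponents obtained by the inductive hypothesis and indexed by isomorphism types of subtrees, many more layers than in the two-variable case, the two-colour joining scheme, restriction to the reachable part of the graph of components, and a Claim-\ref{c:klejm}-style projection argument to show that joining does not disturb component interiors. The base case via an auxiliary identity equivalence is also what the paper does. (Two small wording issues: attaching a ``fresh'' copy for every missing leaf-witness would not terminate --- the construction must, as you say a sentence earlier, draw from a \emph{fixed} finite supply of copies so that one root serves many leaves; and ``$\cE_1=\cE_0\setminus\{E_i\}$'' for layer index $i$ only makes sense if the killed equivalence cycles through $E_1,\dots,E_l$ as $i$ grows, since you have far more than $l$ layers.)

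The genuine gap is condition (d\ref{dthree}), and you have located it precisely but not closed it. ``Reading off a homomorphism from the $\fp$-images and the subtree-type data'' does not work literally: $\fp$ is not a homomorphism on $\str{W}_{\bar a}$, because elements lying in different subcomponents (or different components) have $\fp$-images that need not be correctly related in $\str{A}$ --- indeed $\fp$ maps each subcomponent into a subtree of $\str{A}$ hanging below the $\fp$-image of that subcomponent's origin, and these subtrees sit at unrelated positions. The paper's resolution, after the reductions you sketch, is to view the surviving structure as a tree $\tau$ of subcomponents and run a \emph{separate bottom-up induction} (Claim \ref{c:joiningeq}): at a node $\str{B}_0$ one takes $\fh_0$ from the Lemma's inductive hypothesis applied to the subcomponent itself on the tuple $(\bar a\cap B_0)c_1\dots c_K$, extends it canonically over the attached witness structures using the $W^i$-markers, takes $\fh_i$ from the Claim's inductive hypothesis at each relevant child, and then --- this is the step your proposal omits --- post-composes each $\fh_i$ with an isomorphism $\str{A}_{\fp(b_i)}\cong\str{A}_{b_i'}$, where $b_i'$ is the $j$-th witness of $\fh_0(c_i)$ in $\str{A}$; that isomorphism exists precisely because $\str{A}$ is regular and both $b_i'$ and $\fp(b_i)$ are $j$-th witnesses of roots of isomorphic subtrees. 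Only after this re-rooting do the pieces agree on the overlap points $b_i$ and glue to a single homomorphism, and only the re-rooted maps satisfy $\str{A}_{\fh(a)}\cong\str{A}_{\fp(a)}$ rather than $\fh(a)=\fp(a)$. Without this mechanism (or an equivalent one) the assembly you describe as ``the delicate part'' is not a proof, and it is exactly the point where a naive union of locally defined maps fails.
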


The proof goes by induction on $l=|\mathcal{E}_0|$. 
The base of induction, $l=0$, can be treated as in the two-variable case.
For the inductive step, suppose that theorem holds for $l-1$. We show that it holds for $l$.
Without loss of generality let $\cE_0=\{E_1,\ldots,E_l\}$. The rest of the proof is presented in Sections \ref{s:patterns}--\ref{s:correctenss}.

\subsection{Pattern components}  \label{s:patterns}
 In the two variable case we created a single type of a building block 
for every generalized type realized in substructure $\str{A}_0$ of the original model.
Now we create one type of a building block for every isomorphism type of a subtree rooted at a node of $\str{A}_0$.
We denote by $\GGG[A_0]$ the set of such isomorphism types. 
 Let $\gamma_{a_0}$ be the type of $\str{A}_{a_0}$. 

 Take $\gamma \in \GGG[A_0]$ and the root $a \in A_0$ of a subtree of type $\gamma$. If $\gamma=\gamma_{a_0}$, take $a=a_0$.
We explain how to construct a finite \emph{pattern component} 
$\str{C}^{\gamma}$. The main steps of this construction are similar to the ones in the two-variable case. 
This time the component is divided into $l(2t+1)+1$ \emph{layers} $L_1,\ldots,L_{l(2t+1)+1}$.  The first $l(2t+1)$ of them are called
\emph{inner layers} while the last one is called the \emph{interface layer}. We start the construction of an inner layer $L_i$ by defining its initial part, $L_i^{init}$, and then expand it to a full layer. 
The interface layer $L_{l(2t+1)+1}$ has no internal division but, for convenience, is sometimes
referred to as $L_{l(2t+1)+1}^{init}$.
The elements of $L_{l(2t+1)}$ are called \emph{leaves} and 
the elements of $L_{l(2t+1)+1}$ are called \emph{interface elements}. For technical reasons, the bottom of a component is organized in a slightly different way than in the two-variable case, where leaves were in the last layer and there was no notion of an interface layer.
In the current construction, the interface elements will be later identified with the roots of some other components.

 $\str{C}^\gamma$ will have a shape resembling a tree, with  structures obtained by the inductive assumption as nodes. All elements of the inner layers of $\str{C}^\gamma$ will have appropriate partial $\phi$-witness structures provided.

 We remark that, in contrast to the two-variable case, during the process of building a pattern component we do not yet apply the
transitive closure to the equivalence relations.
 Taking the transitive closures would not affect the correctness of the construction, but not doing this at this point
will allow us for a simpler presentation of the correctness proof.
 Given a pattern component $\str{C}$ we will sometimes denote by $\str{C}_+$ the structure obtained from $\str{C}$ by
applying the appropriate transitive closures.
 The crucial property we want to enforce is that the root of $\str{C}^\gamma$ will be  \emph{far} from its leaves in the following sense. 
Denote by $G_l(\str{S})$,  for a $\sigma$-structure $\str{S}$, the  Gaifman graph of the structure obtained by removing from $\str{S}$ the equivalences $E_{l+1}, \ldots, E_k$. Then
there will be no connected induced subgraph of $G_l(\str{C}^\gamma_+)$ of size $t$ containing an element of one of the first $l$ layers
and, simultaneously, an element of one of the last $l$ inner layers of $\str{C}^\gamma$.

 We set $L_1^{init}=\{a'\}$ to consist of a copy of element $a$, i.e., we set $\type{\str{C}^{{\gamma}}}{a'}:=\type{\str{A}_0}{a}$. Put $\fp(a')=a$.  We call $a'$ the \emph{root} of  $\str{C^{\gamma}}$.

\smallskip\noindent
 \emph{Construction of a layer}. Suppose we have defined layers $L_1,\ldots,L_{i-1}$ and $L_i^{init}$, $1\leq i\leq l(2t+1)$, and the structure
and the values of $\fp$  
on $L_1 \cup \ldots \cup L_{i-1} \cup L_{i}^{init}$. 
We now explain how to define $L_i$ and $L_{i+1}^{init}$. Let $s=1+(i-1 \mod l)$.

\smallskip\noindent
\emph{Step 1: Subcomponents}. Take any element $c\in L_i^{init}$. 
From the inductive assumption we have a structure $\str{B}_0$ with $E^*\cap E_s$ total on it, its \origin{} $b_0\in B_0$ and a function $\fp_c:B_0\to A_{\fp(c)}\cap[\fp(c)]_{E^*\cap E_s}\subseteq A_0$
with $\fp_c(b_0)=\fp(c)$. The substructures obtained owing to the inductive assumption are called \emph{subcomponents}.
We identify $b_0$ with $c$, add isomorphically $\str{B}_0$ to $L_i$, and extend function $\fp$ so that $\fp\restr B_0=\fp_c$. We do this independently for all $c\in L_i^{init}$. 

\smallskip\noindent
\emph{Step 2: Providing witnesses}. 
This step is slightly different compared to its two-variable counterpart. For $i<l(2t+1)+1$ we now define $L_{i+1}^{init}$. 
Take $c\in L_i$. Let $\str{W}$ be the $\phi$-witness structure for $\fp(c)$ in $\str{A}$. 
Let $\str{F}$ be the restriction of $\str{W}$ to $[\fp(c)]_{E^*\cap E_s}$.
Let $\str{F}'$ be the  isomorphic copy of $\str{F}$  created for $c$ in
the subcomponent 
$\str{B}_0$ 
built in Step 1 that contains $c$
($\str{F}'$ exists due to (d\ref{dfour})).
Let $\str{E} = \str{W} \restr [\fp(c)]_{E^*}$. 
We add $F''$---a copy of $E\setminus F$ to $L_{i+1}^{init}$, and isomorphically copy the structure of $\str{E}$ to $F'\cup F''$
identifying $F'$ with $F$.
See Fig.~\ref{f:witnesseseq}.
Note that this operation is consistent with the previously defined  structure on $\str{F}'$. 
 The structure on $F'\cup F''$ will be the structure $\str{W}_c$ in $\str{C}^{\gamma}$ and then in $\str{A}_0'$.
  We define $\fp\restr F''$ in a natural way, for each element  $b \in F''$ choosing as the value of $\fp(b)$ the  isomorphic counterpart of $b$ in $E \setminus F$.
We repeat this step independently for all for all $c\in L_i$.

\begin{figure}
	\centering
	\begin{tikzpicture}[scale=1.6*0.3]
	\draw (6,8) -- (10,8) -- (11,4) -- (10,2) -- (6,2) -- (5,4) -- (6,8);
	\draw[fill=gray!20] (5,4) -- (6,8) -- (7,4) -- (5,4);
	\draw[fill=gray!10] (5,4) -- (0,4) -- (6,8) -- (5,4);
	
	\fill[black] (4,4) circle (0.12);
	\fill[black] (3,4) circle (0.12);
	\fill[black] (2,4) circle (0.12);
	\fill[black] (1,4) circle (0.12);
	\fill[black] (6,8) circle (0.12);
	\fill[black] (5.5,4) circle (0.12);
	\fill[black] (6.5,4) circle (0.12);

	\draw[fill=gray!20] (19,4) -- (20,8) -- (21,4) -- (19,4);
	\draw[fill=gray!10] (19,4) -- (14,4) -- (20,8) -- (19,4);
	\draw (20,8) -- (26,4);
	\draw (26,4) -- (19,4);
	\draw[dotted] (26,4)--(28,0);
	\draw[dotted] (14,4)--(12,0);

	\fill[black] (18,4) circle (0.12);
	\fill[black] (17,4) circle (0.12);
	\fill[black] (16,4) circle (0.12);
	\fill[black] (15,4) circle (0.12);
	\fill[black] (20,8) circle (0.12);
	\fill[black] (19.5,4) circle (0.12);
	\fill[black] (20.5,4) circle (0.12);

	\draw[bend left=20, ->, very thin] (6.1,8.1) to (19.9,8.1);

	\draw [
	decoration={
		brace,
		mirror,
		raise=0.5cm
	},
	decorate
	] (0.6,4.8) -- (4.4,4.8);
	
	\coordinate [label=center:$L_{i+1}^{init}$] (A) at ($(2.5,1)$);
	\draw[->] (2.5,3.3) -- (2.5,2);

	\draw [
	decoration={
		brace,
		mirror,
		raise=0.5cm
	},
	decorate
	] (5,2.5) -- (11,2.5);
	
	\coordinate [label=center:in $L_i$] (A) at ($(8,-0.3)$);


	\coordinate [label=center:${E}{\setminus}F$] (A) at ($(16.3,4.6)$); 
	
	\coordinate [label=center:${F''}$] (A) at ($(2.5,4.6)$); 
	
	\coordinate [label=center:$F$] (A) at ($(20,4.6)$); 
	
	\coordinate [label=center:$F'$] (A) at ($(6,4.6)$); 
	
	\coordinate [label=center:$\str{B}_0$] (A) at ($(9,4)$); 
	
	\coordinate [label=center:$\str{W}$] (A) at ($(24,6.5)$);

	\coordinate [label=center:$c$] (A) at ($(6,8.7)$); 
	\coordinate [label=center:$\fp(c)$] (A) at ($(20,8.7)$); 
	
	\coordinate [label=center:$\str{A}_{\fp(c)}$] (A) at ($(25,2.5)$);

	\end{tikzpicture}
	\caption{Providing witnesses.}
	\label{f:witnesseseq}%
\end{figure}
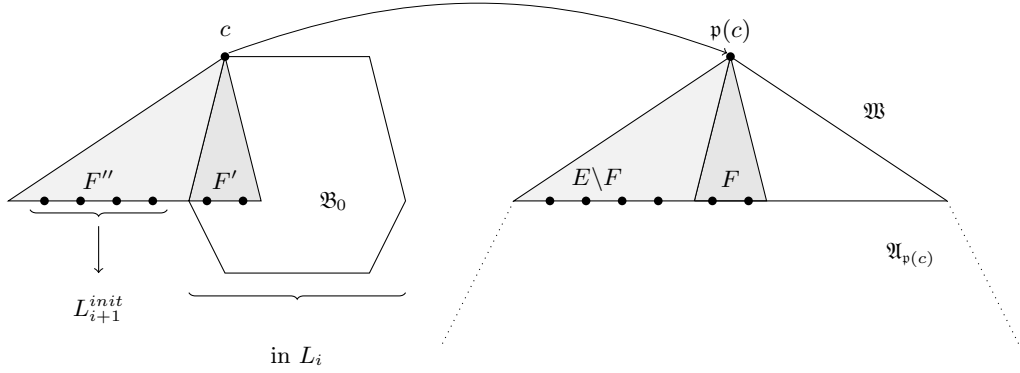

\smallskip
When the interface layer, $L_{l(2t+1)+1}^{init}$ ($= L_{l(2t+1)+1}$),  is created the construction of $\str{C}^{\gamma}$ is completed.

\subsection{Joining the components}

 As in the case of \UNFOtEQ, this step consists in joining some leaves with some roots of components. 
To deal with the additional `moreover' part of condition (d\ref{dthree}) we will simply define $a_0'$ in such a way that it will not be used as a witness for any leaf.
As promised above we create pattern components for all types from $\GGG[A_0]$. 
Let $max$ be the maximal number of 
interface elements
over all pattern components.
For each $\str{C}^{\gamma}$ we number its interface elements.
We create components $\str{C}^{\gamma,g}_{i,\gamma'}$  for all $\gamma, \gamma' \in \GGG[A_0]$, $g \in \{0,1\}$ ($g$ is often called a \emph{color}), $1 \le i \le max$, 
as isomorphic copies of $\str{C}^{\gamma}$. We also create an additional component 
$\str{C}^{\gamma_{a_0},0}_{\bot, \bot}$ as a copy of $\str{C}^{\gamma_{a_0}}$, and define $a_0'$ to be its root.

 For each $\gamma$, $g$ consider components of the form $\str{C}^{\gamma,g}_{\cdotp,\cdotp}$. Perform the following
procedure for each $i$---the number of an interface element. 
Let $b$ be the $i$-th interface element of any such component, let  $\gamma'$ be the type of $\str{A}_{\fp(b)}$.
Identify the $i$-th interface elements of all $\str{C}^{\gamma,g}_{\cdotp,\cdotp}$
with the root $c_0$ of $\str{C}^{\gamma',1-g}_{i,\gamma}$. 
 Note that the values of $\fp(c_0)$ and $\fp(b)$ (the latter equals to the value of $\fp$ on the $i$-th interface element in all the $\str{C}^{\gamma,g}_{\cdotp,\cdotp}$) may differ. However, by construction,   $\str{A}_{\fp(b)}\cong\str{A}_{\fp(c_0)}$
(in particular, the 1-types of $b$ and $c_0$ match). For the element $c^*$ obtained in this identification step we 
define $\fp(c^*)=\fp(c_0)$.

 Finally, we take as $\str{A}_0^0$ the 
structure restricted to the components accessible 
in the graph of components from $\str{C}^{\gamma_{a_0}, 0}_{\bot, \bot}$. 
The graph of
components $G^{comp}$ is formed by joining  a pair of components iff
we identified the root of one of them with an interface element of the other. 

 We now define $\str{A}_0'$ as $\str{A}_0^0$ with transitively closed equivalences
and set the root of $\str{C}_{\bot,\bot}^{\gamma_{a_0},0}$ to be its \origin.
Recall that in the structure $\str{A}^0_0$ we, exceptionally, do not transitively close $\sigma_{\cDist}$-connections, and thus allow the interpretations of the symbols from $\sigma_{\cDist}$ 
not to be transitive  (we will keep using superscript $0$ for auxiliary structures of this kind).

\subsection{Correctness of the construction} \label{s:correctenss}
 Now we proceed to the proof that $\str{A}_0'$ satisfies Conditions (d\ref{done})--(d\ref{dfour}).

\smallskip\noindent
(d\ref{done}) 
After taking the transitive closures, $E^*$ is total on each pattern component. Thus, by the definition of the graph of components $G^{comp}$, $E^*$ is total on $\str{A}_0'$. 

\smallskip\noindent
(d\ref{dtwo})  Follows directly from the  definition of $L_1^{init}$  in $\str{C}^{\gamma_{a_0}}_{\bot, \bot}$ and the fact that $C^{\gamma_{a_0}}_{\bot, \bot}\subseteq A_0'$.

\smallskip\noindent
(d\ref{dtwohalf}) 
The interpretations of the $W^i$ are defined in the step of providing witnesses where, implicitly, we take
care of this condition for every element $a'$ of the inner layers by extending the fragment of the partial $\phi$-witness structure
for $a'$
created on the previous level of induction
by a copy of a further fragment of the same pattern  $\phi$-witness structure. The identifications of elements during the step of joining the components  
do not spoil the required property and cause that it holds for all elements of $\str{A}_0'$.

\smallskip\noindent
(d\ref{dthree})  This is the key part of our argumentation. 
 For simplicity, let us ignore the `moreover' part of this condition for some time. We will explain how to take care of it near the end of this proof.
Now we find a homomorphism $\fh$ such that 
$\str{A}_{\fp(a)}\cong\str{A}_{\fh(a)}$ for all $a\in\bar{a}$ (we say that such a homomorphism has \emph{the subtree isomorphism property}).
Later we will show that its restrictions to the substructures $\str{W}_a$  are indeed isomorphisms. 
 The proof consists of several homomorphic reductions performed in order to show that we
can restrict attention to a structure built as a component but twice as high.

\begin{figure}  
\begin{center}
	\begin{tikzpicture}[scale=0.75]

\foreach \x in {1,7,13}{
\draw[color=gray, fill=gray!20] (\x-0.2,-0.5) -- (\x+4+0.2,-0.5) -- (\x+2,3) -- (\x-0.2,-0.5);
\fill[black] (\x+2,2.6) circle (0.07);
}

\foreach \x in {4,10}{
\draw[color=gray, fill=gray!20] (\x-0.2,4.5) -- (\x+4+0.2,4.5) -- (\x+2,8) -- (\x-0.2,4.5);
\draw[color=gray, fill=gray!20] (\x+0.1,5) -- (\x+4-0.1,5);

\coordinate [label=center:$\cong$] (A) at ($(9,6.3)$); 
\coordinate [label=center:$\str{F}_0'$] (A) at ($(2.2,4.8)$); 
\coordinate [label=center:$g$] (A) at ($(3.5,6.2)$); 
\coordinate [label=center:$1{-}g$] (A) at ($(0.4,1.2)$); 
\coordinate [label=center:$\str{C}^{\gamma}$] (A) at ($(6,6.9)$); 
\coordinate [label=center:$\pi$] (A) at ($(9,7.7)$);

\draw[->] (2.6, 4.5) -- (3,4);

\foreach \x in {4.5,6,7.5 , 10.5,12, 13.5}{
\fill[black] (\x,4.75) circle (0.07);
}

\coordinate [label=center:$b_1$] (A) at ($(4.8,5.25)$); 
\coordinate [label=center:$b_2$] (A) at ($(6,5.25)$); 
\coordinate [label=center:$b_3$] (A) at ($(7.3,5.25)$); 

\coordinate [label=center:$c_0$] (A) at ($(3,2.2)$);

\draw[dashed] (4.5,4.75) -- (3,2.6) -- (10.5,4.75);
\draw[dashed] (6,4.75) -- (9,2.6) -- (12,4.75);
\draw[dashed] (7.5,4.75) -- (15,2.6) -- (13.5,4.75);

\draw[decorate, decoration={snake, segment length=10, amplitude=1}] (4.9,6.2) -- (7.1,6.2);
\draw[decorate, decoration={snake, segment length=10, amplitude=1}] (10.9,6.2) -- (13.1,6.2);
\draw[decorate, decoration={snake, segment length=10, amplitude=1}] (1.9,1.2) -- (4.1,1.2);
\draw[decorate, decoration={snake, segment length=10, amplitude=1}] (7.9,1.2) -- (10.1,1.2);
\draw[decorate, decoration={snake, segment length=10, amplitude=1}] (13.9,1.2) -- (16.1,1.2);

\draw [decorate,decoration={brace,amplitude=10pt},rotate=0] (14,6.2) -- (17,1.2) node [black,midway, xshift=17, yshift=5] {$\;\;\;\supseteq \str{W}_{\bar{a}}$};

\draw[dotted] (18,-1) -- (0,-1) -- (6,9) -- (9,4) -- (15,4) -- (18,-1);
\draw[loosely dotted] (18.3,-1.2) -- (-0.3, -1.2) -- (6,9.3) -- (12,9.3) -- (18.3,-1.2); 

\draw[bend right=10, ->,  dashed] (10.5,7.2) to (7.5,7.2);

}

\end{tikzpicture}
	\caption{Joining the components and Reductions 1 and 2. Elements connected by dashed lines are identified. } \label{f:reductions}
	\end{center}
\end{figure}

\medskip\noindent
\emph{Reduction 0}.  Take $\bar{a}\subseteq A_0'$, $|\bar{a}|\leq t$. 
 Observe that for each $a\in\bar{a}$ the structure $\str{W}_a$ is connected in $G_l(\str{A}_0'\restr W_{\bar{a}})$ (recall the definition of Gaifman graph $G_l(\str{S})$ and the interpretation of the symbols $W^i$).
Let $\str{W}_{\bar{a}_1},\ldots,\str{W}_{\bar{a}_K}$ be the connected components of $\str{W}_{\bar{a}}$ in $G_l(\str{A}_0'\restr W_{\bar{a}})$.
 If we have homomorphisms $\fh_i:\str{W}_{\bar{a}_i}\to\str{A}_0$, it is sufficient to put $\fh=\bigcup\fh_i$ as the desired homomorphism, since $E^*$ is total on $\str{A}_0$ and for $a\in\bar{a}_i$ we also have $\str{A}_{\fh(a)}=\str{A}_{\fh_i(a)}\cong\str{A}_{\fp(a)}$.
So \emph{we can restrict attention to tuples $\bar{a}$ with $\str{W}_{\bar{a}}$ connected in the above sense}.

\medskip\noindent
\emph{Reduction 1}.  The key fact is that, informally, $\str{W}_{\bar{a}}$ is contained `on a
boundary of two colors'. That is, there exists $g\in\{0,1\}$ such that 
removing all the connections between leaves of color $1-g$ and roots of color $g$ (in other words: any connections between
elements of $L_{l(2t+1)}$ and elements of $L_{l(2t+1)+1}$ in components of color $1-g$) does not remove any connection among the elements of $\str{W}_{\bar{a}}$. This property follows from the fact that each subcomponent `kills' one of the $E_i$, therefore, by the arrangement of subcomponents in a component, a connected $\str{W}_{\bar{a}}$ may be spread over a limited number of layers and the number of layers in a component is chosen high enough so the above property holds.

 Reformulating, let $\str{D}_0^0$ be a structure obtained from $\str{A}_0^0$ by removing all direct connections between roots of color $g$ and leaves of color $1-g$ and $\str{D}_0'$ its minimal extension in which  equivalences are transitively closed. We have just proved that the inclusion map $\iota:\str{W}_{\bar{a}}\to\str{D}_0'$ is a homomorphism, and since for all $a\in\bar{a}$, $\str{A_{\fp(a)}}=\str{A_{\fp(\iota(a))}}$,  \emph{we can restrict attention to a tuple $\bar{a}$ for which $\str{W}_{\bar{a}}$ is connected and search for a homomorphism $\str{W}_{\bar{a}}\to\str{A}_0$ treating $\str{W}_{\bar{a}}$ as a substructure of $\str{D}_0'$}.

\medskip\noindent
\emph{Reduction 2.}  Consider the shape of a connected fragment of the graph of components $G^{comp}$ with connections between leaves of color $g$ and roots of color $1-g$ removed.
 Observe that there is at most one type $\gamma$ of components of color $g$, chosen in the previous reduction, containing some element of $\str{W}_{\bar{a}}$ and all elements of $\str{W}_{\bar{a}}$ of color $1-g$ are contained in components of the form $\str{C}^{\cdotp,1-g}_{\cdotp,\gamma}$.  See~Fig.~\ref{f:reductions}. 
Now we can naturally `project' all the elements of $\str{W}_{\bar{a}}$ of color $g$ on one chosen component $\str{C}^{\gamma}$ of type $\gamma$ and color $g$. Call this projection $\pi$. Then we remove from $\str{D}_0^0$ all components of color $g$ other than $\str{C}^{\gamma}$ and all components of color $1-g$ of form other than $\str{C}^{\cdotp, 1-g}_{\cdotp,\gamma}$ obtaining a structure $\str{F}_0^0$. Let $\str{F}_0'$ be created by closing transitively all equivalences in $\str{F}_0^0$. We claim that $\pi$ is a homomorphism from $\str{W}_{\bar{a}}$ to $\str{F}_0'$. Indeed such projection can be applied to paths in $\str{D}_0^0$ to get corresponding paths in $\str{F}_0^0$. Since for all $a\in\bar{a}$ we have $\str{A}_{\fp(a)}=\str{A}_{\fp(\pi(a))}$,  \emph{we may restrict attention to a tuple $\bar{a}$ for which $\str{W}_{\bar{a}}$ is connected and search for a homomorphism $\str{W}_{\bar{a}}\to\str{A}_0$ treating $\str{W}_{\bar{a}}$ as a substructure of $\str{F}_0'$}. 

\medskip\noindent
\emph{Essential homomorphism construction}.
 By the construction of $\str{A}_0'$ we can see that $\str{F}_0^0$ can be considered as a component of height $2l(2t+1)$ and such component can be viewed, as a tree $\tau$ whose nodes are  subcomponents: we make subcomponent $\str{B}$ a parent of $\str{B}'$ iff $\str{B}'$ contains a witness for an element of $\str{B}$. We will build a homomorphism $\fh:\str{W}_{\bar{a}}\to\str{A}_{a_0}$ inductively using a bottom-up approach on tree $\tau$. 
For a subcomponent $\str{B}$ denote by $B^{\wedge}$ the union of the domains of all the subcomponents belonging to the subtree of $\tau$ rooted at $\str{B}$.

 Since we might have cut some connections between an element and some of its witnesses during Reduction 1, we define for each $a\in F_0'$ the surviving
part $\str{V}_a$ of $\str{W}_a$ by $\str{V}_a=\str{F}_0'\restr V_a$ where $V_a=\{b:\exists i\;\str{F}_0'\models W^iab\}$. For a tuple $\bar{b}$ denote $V_{\bar{b}}=\bigcup_{b\in\bar{b}} V_b$ and $\str{V}_{\bar{b}}=\str{F}_0'\restr V_{\bar{b}}$.
Note that $V_a\subseteq W_a$,
and generally, this inclusion may be strict,
but for all $a\in\bar{a}$ we have $\str{V}_a = \str{W}_a$,
and thus,
in particular, the claim below finishes the proof of the currently
considered part of (d\ref{dthree}), that is the
proof of
the existence of a homomorphism satisfying the subtree isomorphism property.

 Returning to the shape of $\str{F}_0^0$, it consists of some subcomponents arranged into tree $\tau$ glued together by the structure on the surviving parts. Note that all such building blocks (that is both the subcomponents and the surviving parts of the partial witness structures) are transitively closed. Moreover, by the tree structure of $\tau$, if some elements of such a  building block are connected by some atom in $\str{F}_0'$, then they already have been connected by the same atom in $\str{F}_0^0$, therefore  the identity map from $\str{F}_0^0$ to $\str{F}_0'$ acts as an isomorphism when restricted to such a building block. 

Recall that due to the expansion of the structure defined before the statement of Lemma \ref{l:finiteeq}, all homomorphisms $\str{A}_0'\to\str{A}_0$ respect the numbering of witnesses. This property will be particularly important in the proof of the following claim.

\begin{claim}\label{c:joiningeq} 
	 For every subcomponent $\str{B}_0\in\tau$ with \origin{} $b_0$, and for every  $\bar{a}\subseteq B_0^{\wedge}$,
	$|\bar{a}|\leq t$, there exists a homomorphism $\fh:\str{V}_{\bar{a}}\to\str{A}_{\fp(b_0)}\restr[\fp(b_0)]_{E^*}$ such that for all $a\in\bar{a}$
	we have $\str{A}_{\fh(a)}\cong\str{A}_{\fp(a)}$, 
	and if $b_0\in\bar{a}$ then $\fh(b_0)=\fp(b_0)$.
\end{claim}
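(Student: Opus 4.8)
The plan is to prove the claim by induction on the height of the subtree of $\tau$ rooted at $\str{B}_0$, following the bottom-up strategy announced before the claim. The inductive hypothesis gives, for every proper descendant subcomponent of $\str{B}_0$, a homomorphism with the required subtree-isomorphism property, mapping the origin of that subcomponent to its $\fp$-image and landing inside the appropriate $E^*$-class. In the base case $\str{B}_0$ is a leaf of $\tau$, so $B_0^\wedge = B_0$ and the claim is exactly the `moreover' strengthening of part~(d\ref{dthree}) of the inductive assumption of Lemma~\ref{l:finiteeq} applied to the subcomponent $\str{B}_0$: indeed $\str{B}_0$ was produced by that assumption for the parameter set $\cE_0 \setminus \{E_s\}$, its origin is $b_0$, and since all surviving parts in $B_0$ coincide with full witness structures, $\str{V}_{\bar a} = \str{W}_{\bar a}$ there. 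So we directly invoke (d\ref{dthree}).

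For the inductive step, fix $\bar a \subseteq B_0^\wedge$ with $|\bar a| \le t$. First I would split $\bar a$ according to which subcomponents its elements and their surviving witnesses touch. Because $|\bar a| \le t$ and each subcomponent `kills' one equivalence, the set $\str{V}_{\bar a}$, being connected in the relevant Gaifman graph $G_l$, can meet only a bounded, controlled family of subcomponents along the $\tau$-tree; the decisive point is that $\str{V}_{\bar a}$ cannot simultaneously reach down $t$ levels of alternating equivalences, which is exactly the `far from leaves' property enforced in the construction of pattern components. Concretely, I would argue that the elements of $\bar a$ whose surviving witnesses leave $\str{B}_0$ lie in the children subtrees, apply the induction hypothesis to each affected child subcomponent $\str{B}'$ to obtain $\fh'$ mapping into $\str{A}_{\fp(b_0')}\restr [\fp(b_0')]_{E^*}$, then glue these partial homomorphisms to the homomorphism obtained for the part of $\bar a$ internal to $\str{B}_0$ via (d\ref{dthree}). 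The gluing is where the numbering of witnesses (the fresh $W^i$ symbols) is used: when an element $a$ of $\str{B}_0$ has a witness $b'$ which is the origin of a child $\str{B}'$, both the $\str{B}_0$-side homomorphism and the $\str{B}'$-side homomorphism send $b'$ to the $i$-th witness of the image of $a$ in $\str{A}$ (the $\str{B}'$-side because of the `$\fh(b_0)=\fp(b_0)$' clause, the $\str{B}_0$-side because homomorphisms respect the $W^i$), so the two maps agree on the overlap, and their union is a well-defined function.

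The main obstacle I expect is checking that this union is genuinely a homomorphism: that no $\sigma$-atom spanning two different building blocks (a subcomponent and a glued-on surviving part, or two subcomponents meeting at an origin) is violated. Here I would use the observation, already recorded in the paragraph preceding the claim, that in $\str{F}_0^0$ the structure is the union of transitively-closed building blocks glued only at single shared elements arranged in a tree, so that any atom holding in $\str{F}_0'$ between elements of one building block already held in that building block; consequently a connection between elements living in two distinct blocks must pass through the shared origin, and can be handled block by block. The subtree-isomorphism property $\str{A}_{\fh(a)}\cong\str{A}_{\fp(a)}$ is preserved because each piece of $\fh$ has it (from (d\ref{dthree}) or from induction), and the `$\fh(b_0)=\fp(b_0)$' clause for $\str{B}_0$ itself follows since $b_0$ lies in $\str{B}_0$ and we took the $\str{B}_0$-side homomorphism to fix $b_0$ using the `moreover' part of (d\ref{dthree}). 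Finally one checks that the image lands in $[\fp(b_0)]_{E^*}$: the $\str{B}_0$-piece does by (d\ref{dthree}), and for a child $\str{B}'$ its origin satisfies $\fp(b_0') \in [\fp(b_0)]_{E^*}$ by the construction of subcomponents (the inductive application was for $E^*\cap E_s$-classes inside $[\fp(b_0)]_{E^*}$), while $E^*$ is total on $\str{A}_{\fp(b_0')}\restr[\fp(b_0')]_{E^*}$, so transitivity of $E^*$ closes the argument.
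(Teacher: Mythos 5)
Your overall plan coincides with the paper's: bottom-up induction on $\tau$, base case from the inductive assumption of Lemma~\ref{l:finiteeq}, and gluing the child homomorphisms to a homomorphism of the $\str{B}_0$-part at the shared connecting elements, using the $W^i$-numbering. The gap is in the gluing itself. You claim the $\str{B}_0$-side map and the child-side map literally agree on the child's origin $b'$, ``both sending it to the $i$-th witness of the image of $a$''. They do not. The child homomorphism sends $b'$ to $\fp(b')$, which is the $j$-th witness of $\fp(c)$ (where $c\in B_0$ is the element for which $b'$ was created as a witness), whereas the $\str{B}_0$-side map, extended along $W^j$, must send $b'$ to the $j$-th witness of $\fh_0(c)$. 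Condition (d\ref{dthree}) only gives $\str{A}_{\fh_0(c)}\cong\str{A}_{\fp(c)}$, not $\fh_0(c)=\fp(c)$, so these two witnesses are in general distinct elements of $A$, and the union of your partial maps is not well-defined as described.

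The repair is exactly where the paper invokes the \emph{regularity} of $\str{A}$ (Lemma~\ref{l:regular}), which your proposal never mentions: if $b'_i$ denotes the $j$-th witness of $\fh_0(c_i)$, then $\str{A}_{b'_i}\cong\str{A}_{\fp(b_i)}$ because both are $j$-th witnesses of roots of isomorphic subtrees (this is also why the $W^i$ had to be interpreted compatibly with subtree isomorphisms), and one must post-compose each child homomorphism $\fh_i$ with such an isomorphism, obtaining $\fh_i^*$ with $\fh_i^*(b_i)=b'_i$, before taking the union. Two smaller points are left implicit in your sketch: the inductive hypotheses have to be applied to the enlarged tuples $(\bar a\cap B_i^{\wedge})b_i$ and $(\bar a\cap B_0)c_1\ldots c_K$ so that the partial maps are defined on the connecting elements at all (and one should check these tuples still have size at most $t$, which holds since the sets $\bar a\cap B_i^{\wedge}$ and $\bar a \cap B_0$ are pairwise disjoint subsets of $\bar a$); and the verification that the glued map preserves the transitively closed equivalences requires splitting $E_u$-paths of $\str{F}_0^0$ along the tree of building blocks, which you gesture at correctly via the ``glued only at single shared elements'' observation.
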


\begin{proof}
 Bottom-up induction on subtrees of $\tau$.	
	
\medskip\noindent
\emph{Base of induction}.  In this case $\str{W}_{\bar{a}}\subseteq\str{B}_0$ and the claim follows from the inductive assumption of Lemma \ref{l:finiteeq}.

	\begin{figure}  
		\begin{center}
		\begin{tikzpicture}[scale=0.8]
		
		\draw (10,10) ellipse (50pt and 25pt);
		\fill[black] (10,10.3) circle (0.07);
		\coordinate [label=center:$a_1$] (A) at ($(10,10.6)$); 
		\fill[black] (9,9.5) circle (0.07);
		\coordinate [label=center:$c_1$] (A) at ($(9,9.8)$); 
		\fill[black] (11,9.5) circle (0.07);
		\coordinate [label=center:$c_2$] (A) at ($(11,9.8)$); 
		\coordinate [label=center:$\str{B}_0$] (A) at ($(10,9.5)$); 
		
		\draw (9,9.5) -- (8,8.2);
		\draw (11,9.5) -- (12,8.2);
		\draw (1.5,4) -- (2,5);
		\draw (4.5,4) -- (4,5);
		\fill[black] (2,5) circle (0.07);
		\fill[black] (4,5) circle (0.07);
		\coordinate [label=center:$\fh_0(c_1)$] (A) at ($(2.1,5.5)$); 
		\coordinate [label=center:$\fh_0(c_2)$] (A) at ($(3.9,5.5)$);

		\draw (8,8) ellipse (50pt and 25pt);
		\fill[black] (8,8.2) circle (0.07);
		\coordinate [label=center:$b_1$] (A) at ($(7.8,8.6)$); 
		\fill[black] (7,7.5) circle (0.07);
		\coordinate [label=center:$a_2$] (A) at ($(7,7.8)$); 
		\fill[black] (9,7.5) circle (0.07);
		\coordinate [label=center:$a_3$] (A) at ($(9,7.8)$); 
		\coordinate [label=center:${B}_1^{\wedge}$] (A) at ($(8,7.6)$); 
		
		\draw (12,8) ellipse (50pt and 25pt);
		\fill[black] (12,8.2) circle (0.07);
		\coordinate [label=center:$b_2$] (A) at ($(12.2,8.6)$); 
		\fill[black] (13,7.5) circle (0.07);
		\coordinate [label=center:$a_4$] (A) at ($(13,7.8)$); 
		\coordinate [label=center:${B}_2^{\wedge}$] (A) at ($(12,7.6)$); 
		
		\draw (7,3) -- (8,5.5) -- (9,3);
		\fill[black] (8,5) circle (0.07);
		\coordinate [label=center:$\fh_1(b_1)$] (A) at ($(9.1,5)$); 
		\fill[black] (7.5,3.5) circle (0.07);
		\fill[black] (8.5,3.5) circle (0.07);
		
		\draw (11,2.5) -- (12,5.0) -- (13,2.5) ;
		\fill[black] (12,4.5) circle (0.07);
		\coordinate [label=center:$\fh_2(b_2)$] (A) at ($(13.1,4.5)$); 
		\fill[black] (12.5,3) circle (0.07);

		\draw (-0.5,2) -- (3,9) -- (6.5,2);
		\fill[black] (3,7.5) circle (0.07);
		\coordinate [label=center:$\fh_0(a_1)$] (A) at ($(3,7)$);

		\draw (0.5,2) -- (1.5,4.5) -- (2.5,2) ;
		\fill[black] (1.5,4) circle (0.07);
		\fill[black] (1,2.5) circle (0.07);
		\fill[black] (2,2.5) circle (0.07);
		\coordinate [label=center:${\sss \cong \str{A}_{\fh_1(b_1)}}$] (A) at ($(1.5,1.8)$); 
		\coordinate [label=center:$b_1'$] (A) at ($(0.9,4)$);

		\draw (3.5,2) -- (4.5,4.5) -- (5.5,2) ;
		\fill[black] (4.5,4) circle (0.07);
		\fill[black] (5,2.5) circle (0.07);
		\coordinate [label=center:${\sss \cong \str{A}_{\fh_2(b_2)}}$] (A) at ($(4.5,1.8)$); 
		\coordinate [label=center:$b_2'$] (A) at ($(3.9,4)$);

		
		
		
		\coordinate [label=center:$\fh_0$] (A) at ($(5.5,10)$); 
		\draw[bend right=15, ->, dashed] (7.5,10) to (4,8.5);	
		
		\draw[bend right=5, ->, dashed] (8,6.9) to (8,5.7);	
		\coordinate [label=center:$\fh_1$] (A) at ($(8.5,6.4)$); 
		
		\draw[bend right=5, ->, dashed] (12,6.9) to (12,5.2);	
		\coordinate [label=center:$\fh_2$] (A) at ($(12.5,6.05)$);

		\draw[bend right=5, ->, dashed] (7.5,5.2) to (2,4.2);	
		
		\draw[bend right=3, ->, dashed] (11.5,4.5) to (5,3.7);	
		
		\end{tikzpicture}
		\caption{Joining homomorphisms} \label{f:joininghom}
		\end{center}
	\end{figure}
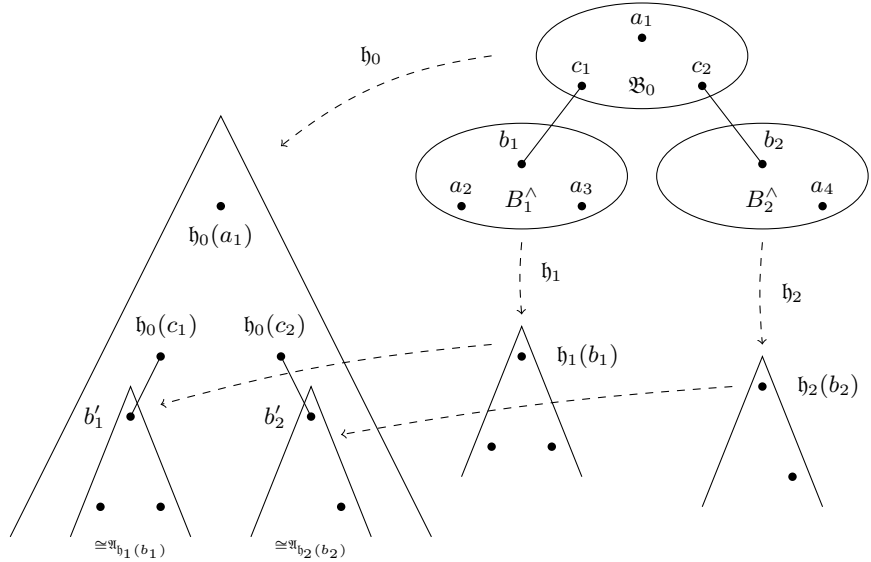
	
	\medskip\noindent
	\emph{Inductive step.}  Let $\str{B}_1,\ldots,\str{B}_K$ be the list of those children of  $\str{B}_0$ in $\tau$ for which  $B_i^{\wedge}$ contains some elements of $\bar{a}$; denote by $b_i$ the root of $\str{B}_i$ and let $c_i\in B_0$ be such that $b_i$ is a witness chosen by $c_i$ in the step of providing witnesses/joining the components.
	If $K=1$ and $\bar{a}\subseteq B_1^{\wedge}$ the thesis follows from the inductive assumption of this claim.

	 Otherwise, by the inductive assumption of this claim applied to $(\bar{a}\cap B_i^{\wedge})b_i$ we have homomorphisms $\fh_i:\str{V}_{(\bar{a}\cap B_i^{\wedge})b_i}\to \str{A}_{\fp(b_i)}$ satisfying $\fp(b_i)=\fh_i(b_i)$ and from the inductive assumption of Lemma \ref{l:finiteeq} a homomorphism $\fh_0:\str{V}_{(\bar{a}\cap B_0)c_1\ldots c_K}\restr B_0\to\str{A}_{\fp(b_0)}$. 
	 We extend the latter in the only possible way to $\fh_0^*$ defined on the whole $\str{V}_{(\bar{a}\cap B_0)c_1\ldots c_K}$: for each $a\in\bar{a}$ and $c\in V_a\setminus B_0$ (by construction $\str{V}_a\models W^iac$ for some $i$) we set $\fh(c)$ to be the only element satisfying $\str{A}_0\models W^i\fh(a)\fh(c)$ (such an element exists since $\str{A}_{\fh(a)}\cong\str{A}_{\fp(a)}$---in particular the $\phi$-witness structures of $\fh(a)$ and $\fp(a)$
	are isomorphic). 
	Note that the sizes of the tuples used to build the homomorphisms $\fh_i$ are bounded by $t$, as required.

	 Using regularity of $\str{A}$, homomorphisms $\fh_0^*,\fh_1,\ldots,\fh_K$ can be joined together\linebreak
into $\fh:\str{V}_{\bar{a}b_1\ldots b_Kc_1\ldots c_K}\to \str{A}_{\fp(b_0)}$ (see Fig.~\ref{f:joininghom}).
	 In order to attach
	$\fh_i$ to $\fh_0^*$ we define $\fh_i^*$. Let $j$ be such that $b_i$ is the $j$-th witness for $c_i$ and let $b_i'$ be the $j$-th witness for $\fh_0(c_i)$ (it exists by $\str{A}_{\fh_0(c_i)}\cong\str{A}_{\fp(c_i)}$). Then we have $\str{A}_{b_i'}\cong\str{A}_{\fp(b_i)}$ since both $b_i'$ and $\fp(b_i)$ are the $j$-th witnesses of some elements of $\str{A}$ being the roots of isomorphic subtrees. Thus, composing $\fh_i$ with such an isomorphism  gives a homomorphism $\fh_i^*:\str{V}_{(\bar{a}\cap B_i^{\wedge})b_i} \to \str{A}_{b_i'}$ with $\fh_i^*(b_i)=b_i'$. Finally we set $\fh=\bigcup\fh_i^*$.  Note that $\fh$ is well defined (the value of $\fh$ on each of the $b_i$ has been defined twice).  
	
	 For each $a\in\dom\fh_i$ ($=\dom \fh_i^*$, when $i>0$)
	we have 
	$\str{A}_{\fh(a)}=\str{A}_{\fh_i^*(a)}\cong\str{A}_{\fh_i(a)}$($\cong\str{A}_{\fp(a)}$,
	by the inductive assumptions of this claim and Lemma \ref{l:finiteeq}). Since $\bar{a}\subseteq \dom\fh_0\cup\bigcup_{i>0}\dom\fh_i^*$, we 
	get that for each $a\in\bar{a}$ we have $\str{A}_{\fp(a)}\cong\str{A}_{\fh(a)}$.
	
	 Recalling the tree structure on $\tau$ we can conclude that $\fh$ is a homomorphism. We give an idea of the proof of this property. 	Consider an $E_u$-path in $\str{F}_0^0$ connecting two elements of $\str{V}_{\bar{a}b_1\ldots b_Kc_1\ldots c_K}$. 
	We show that the images of these two elements are connected by an $E_u$-path in $\str{A}$.
	Using the tree shape of $\str{F}_0^0$, we can split it into parts contained in $B_0$ or some of the $B_i^{\wedge}$,
	and parts contained in some of the 
	$V_d$ for $d\in(\bar{a}\cap B_0)c_1\ldots c_K$
	(with the
	splitting points belonging to 
	$V_{(\bar{a}\cap B_0)c_1\ldots c_K}$). 
	For the former
	type of connections, use the fact that $\fh_0,\fh_1^*,\ldots,\fh_K^*$ are homomorphisms. For the latter, observe that
	$\fh_0^*$ sends 
	$V_d$
	into the corresponding part of an isomorphic copy of the pattern $\phi$-witness
	structure from $\str{A}$ used to define the structure on $\str{F}_0^0\restr V_d$.
		Similarly a non-transitive relation in $\str{F}_0^0$ may connect elements contained in $B_0$ or one of the $B_i^{\wedge}$, or one of the 	$V_d$ for $d\in(\bar{a}\cap B_0)c_1\ldots c_K$,
	and the argument as above shows that it is preserved by $\fh$.

	 It follows from the construction that $\fh$ has the following property: if $b_0\in\bar{a}$ then $\fh(b_0)=\fh_0(b_0)=\fp(b_0)$. To finish the proof of the inductive step, we restrict $\fh$ to $V_{\bar{a}}$. \qed
\end{proof}
 Now we prove the additional property required for $\fh$ by (d\ref{dthree}), namely that $\fh \restr W_a$ is an isomorphism.
 Observe that $\fh$ injectively moves $\str{W}_a$ into the corresponding part of the $\phi$-witness structure for $\fh(a)$ which is isomorphic to the corresponding part of the $\phi$-witness structure for $\fp(a)$ by the subtree isomorphism property. Therefore, since the structure on $W_a$ (prior to taking the transitive closure) was copied from the latter, the inverse of $\fh\restr W_a$ is a homomorphism and therefore $\fh\restr W_a$ is an isomorphism.  

 To prove the `moreover' part of (d\ref{dthree}), it suffices to observe that if $a_0'\in\bar{a}$ then in Reduction 1 we have that $g=0$ and in Reduction 2 we have that $\gamma=\gamma_{a_0}$. We choose $\str{C}^{\gamma}=\str{C}^{\gamma,0}_{\bot,\bot}$. 
This way the application of the Reductions does not move $a_0'$. The claim follows from the fact that $\fh(a_0')=\fp(a_0')=a_0$.

\smallskip\noindent
(d\ref{dfour})
 Apply (d\ref{dthree}) to a tuple consisting of just $a$ to obtain an isomorphism $\fh:\str{W}_a\to\str{A}_0\restr\fh(W_a)$
and then apply an isomorphism between $\str{A}_{\fh(a)}$ and $\str{A}_{\fp(a)}$.

\medskip

This finishes the proof of Lemma \ref{t:ind}.
 Let us show how this lemma implies the finite model property for \UNFOEQ{}.
Take $\cE_0=\sigma_{\cDist}$, let $a_0$ be the root of $\str{A}$. We apply Lemma \ref{l:finiteeq} and get a finite structure $\str{A}_0'$ and a function $\fp:A_0'\to A_0$. Note, that $\str{A}_0=\str{A}$. Let us see that $\str{A}_0'$ satisfies the conditions of Lemma \ref{l:homomorphisms}. Indeed, (1) follows from (d\ref{dfour}). Condition (2) follows from (d\ref{dthree}). So $\str{A}_0'\models\phi$.

\subsection{Size of models and complexity}

Now we show, that the size of $\str{A}_0'$ is bounded doubly exponentially in $|\phi|$. We calculate a recurrence
relation on $ M_l$---an upper bound on the size of the
structure created in the $l$-th step of induction. We are interested in an estimate for $ M_{k+1}$. 

 Let $n=|\phi|$.
Consider the $l$-th
induction step. 
The size of each subcomponent is bounded by $ M_{l-1}$. Consider one component. 
Layer $L_1$ consists of at most $ M_{l-1}$ elements, each of them creates at most $n$ elements in layer $L_2^{init}$, which jointly create at most $ M_{l-1}\cdotp n\cdotp M_{l-1}$ elements in layer $L_2$ and inductively at most $ M_{l-1}^in^{i-1}$ elements in layer $L_i$. 
So each component has at most $ M_{l-1}^{l(2t+1)+2}n^{l(2t+1)+2}$ elements. 
Counting the components, we get an estimate $$ M_l=  M_{l-1}^{8n^2}\cdotp n^{8n^2}\cdotp(|\GGG[A]|\cdotp2\cdotp M_{l-1}^{8n^2}\cdotp |\GGG[A]|+1).$$
Solving this recurrence 
relation we get $$ M_{k+1}\leq (|\GGG[A]|^2\cdotp4\cdotp n^{8n^2})^{(16n^2)^{n+1}},$$ which is doubly exponential in $n$. 

The finite model property and   Thm.~\ref{t:globalsat} allow us to conclude.
\begin{theorem}
The finite satisfiability problem for \UNFOEQ{} is \TwoExpTime-complete.
\end{theorem}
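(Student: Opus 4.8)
The plan is to obtain the two matching bounds separately, observing that essentially all of the real work has already been done in the preceding sections. For the upper bound I would appeal directly to the finite model property established in Theorem~\ref{t:maintr} (through Lemma~\ref{l:finiteeq}): every satisfiable \UNFOEQ{} formula has a \emph{finite} model, so the finite satisfiability and the general satisfiability problems for \UNFOEQ{} coincide. Consequently the \TwoExpTime{} upper bound for general satisfiability recorded in Theorem~\ref{t:globalsat} transfers verbatim to finite satisfiability. (One could alternatively argue more directly from the explicit doubly exponential size bound on the constructed model, guessing and verifying such a model in, say, alternating exponential space; but invoking Theorem~\ref{t:globalsat} is cleaner and avoids re-inspecting the recurrence for $M_{k+1}$.)

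For the lower bound I would note that \TwoExpTime-hardness already holds for plain \UNFO{} without any distinguished symbols --- indeed, as recalled in the Introduction, even the three-variable fragment of \UNFO{} is \TwoExpTime-hard. Since \UNFO{} has the finite model property, every satisfiable \UNFO{} formula has a finite model, and hence the reduction witnessing \TwoExpTime-hardness of \UNFO-satisfiability is in fact a reduction into \UNFO-\emph{finite}-satisfiability. As \UNFO{} is, over signatures with $\sigma_{\cDist}=\emptyset$, literally the same logic as \UNFOEQ{} (the equivalence constraint on $\sigma_{\cDist}$ being vacuous), this gives \TwoExpTime-hardness of finite satisfiability for \UNFOEQ. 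Alternatively one may simply cite the \TwoExpTime{} lower bound for the two-variable fragment \UNFOtEQ{} already obtained in Section~3, since that fragment is contained in \UNFOEQ. Combining the two bounds yields \TwoExpTime-completeness.

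I do not expect any genuine obstacle at this stage: the difficulty is entirely concentrated in the finite model property, i.e.\ in Lemma~\ref{l:finiteeq} and Theorem~\ref{t:maintr}. The only minor point worth an explicit remark is that the reduction of Lemma~\ref{l:red}, which underlies Theorem~\ref{t:globalsat}, is sound on \emph{finite} structures --- it is, because transforming a model of $\phi''$ into a model of $\phi'$ only deletes the non-symmetric transitive connections and therefore preserves finiteness --- so nothing is lost in the chain "finite model property of \UNFOEQ{} $\Rightarrow$ finite $=$ general satisfiability $\Rightarrow$ \TwoExpTime{} via Theorem~\ref{t:globalsat}".
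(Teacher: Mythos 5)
Your proposal is correct and matches the paper's own argument, which likewise derives the result in one line from the finite model property together with Theorem~\ref{t:globalsat} (with the lower bound inherited from \UNFO{} itself, or from the two-variable case in Section~3). Your extra remark that the reduction of Lemma~\ref{l:red} preserves finiteness is a sensible sanity check but not needed here, since the paper applies Theorem~\ref{t:globalsat} only to conclude the complexity of \emph{general} satisfiability, which coincides with finite satisfiability once the finite model property is in hand.
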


\section{Towards guarded negation with equivalences} \label{s:gnfo}
We observe now that our small model construction can be adapted 
for a slightly bigger logic.
The \emph{guarded negation fragment} of first-order logic, \GNFO{}, is defined in \cite{BtCS15} by the following grammar:
$$\phi=R(\bar{x}) \mid x=y \mid \phi \wedge \phi \mid \phi \vee \phi \mid \exists x \phi \mid \gamma(\bar{x}, \bar{y}) \wedge \neg \phi(\bar{y}),$$
where $\gamma$ is an atomic formula. Since equality statements of the form $x=x$ can be used as guards, \GNFO{} may be viewed as an extension of \UNFO{}.
However, the satisfiability problem for \GNFO{} with equivalences is undecidable. It follows from the fact that even the two-variable guarded fragment,
which is contained in \GNFO{}, becomes undecidable when extended by equivalences \cite{Kie05}.

To regain decidability we consider
the \emph{base-guarded negation fragment with equivalences},\linebreak
\GNFOEQ{}, analogous to the base-guarded negation fragment with transitive relations,\linebreak 
\GNFOTR, investigated in \cite{ABBB16}.
In these variants all guards must belong to $\sigma_{\cBase}$, and all symbols from $\sigma_{\cDist}$ must be interpreted as equivalences/transitive relations. Recall that the general satisfiability problem for \GNFOTR{} was shown decidable in \cite{ABBB16}, and as 
explained in Section \ref{s:bf} this implies decidability of the general satisfiability problem for \GNFOEQ{}. In this paper we do not solve the
finite satisfiability problem for full \GNFOEQ{}. We, however, do solve this problem for its one-dimensional restriction.

We say that a first-order formula is \emph{one-dimensional} if its every maximal block of quantifiers 
leaves at most one variable free. E.g., $\neg \exists 
yz R(x,y,z))$ is one-dimensional, and $\neg \exists z R(x,y,z))$ is not.  By \emph{one-dimensional guarded negation fragment}, \GNFOonedim{} we mean
the subset of \GNFO{} containing its all one-dimensional formulas. Not all \UNFO{} formulas are one-dimensional, but they
can be easily converted to the already mentioned UN-normal form \cite{StC13}, which contains only one-dimensional formulas. The cost of this conversion
is linear. This allows us to view \UNFO{} as
a fragment of \GNFOonedim{}. 

We can define the one-dimensional restriction  \GNFOEQonedim{} of \GNFOEQ{} in a natural way.
We note that moving from \UNFOEQ{} to \GNFOEQonedim{} significantly increases the expressive power. 
An example formula which is in \GNFOEQonedim{} but is not expressible
in \UNFOEQ{} is $\neg \exists xy (R(x,y) \wedge \neg E_1(x,y))$, which says that $R \subseteq E_1$.
Observe, however, that since guards must belong to $\sigma_{\cBase}$ we are not able to express the containment of one equivalence relation in another equivalence, or in a relation from $\sigma_{\cBase}$. 

Our proof from Section 4 can be adapted to cover the case of \GNFOEQonedim{}. The adaptation is not difficult.
What is crucial is that  in the current construction, during the step of providing witnesses, we build isomorphic copies of whole witness structures, which means that we preserve not only positive atoms but also their negations. Thus, we preserve witness structures for \GNFOEQonedim{}.
\begin{theorem} \label{c:onedim}
\GNFOEQonedim{} has a doubly exponential finite model property, and its satisfiability (= finite satisfiability) problem is
\TwoExpTime-complete. 
\end{theorem}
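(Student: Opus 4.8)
The plan is to mirror Sections~3--4 step by step, verifying that each ingredient survives the passage from \UNFOEQ{} to \GNFOEQonedim{}; almost nothing changes, the one delicate point being the verification of the $\forall$-conjunct. First I would establish an analogue of Lemma~\ref{l:nf}: every \GNFOEQonedim{} sentence converts in polynomial time, over a signature extended by fresh unary symbols, into a normal form sentence
\[
\bigwedge_{l}\neg\exists\bar{x}\,\bigl(\gamma_l(\bar{x})\wedge\psi_l(\bar{x})\bigr)\;\wedge\;\bigwedge_{i=1}^{m}\forall x\,\exists\bar{y}\,\phi_i(x,\bar{y}),
\]
where each $\gamma_l$ is an atom over $\sigma_{\cBase}$ or an equality $x{=}x$ and the $\psi_l,\phi_i$ are quantifier-free. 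This follows from the GN-normal form of \cite{BtCS15} and Scott's trick exactly as in Lemma~\ref{l:nf}; one-dimensionality guarantees that each quantifier block being abstracted away leaves at most one free variable, so the $\forall\exists$-conjuncts keep a single free variable, and the guard of each negated conjunct can be taken from $\sigma_{\cBase}$ (or to be $x{=}x$) precisely because in the base-guarded fragment a non-unary negation needs a base or equality guard. The notions of witness structure and $\phi$-witness structure carry over verbatim.

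Next I would prove the \GNFOEQonedim{} counterparts of Lemmas~\ref{l:homomorphisms}, \ref{l:treelike} and \ref{l:regular}. The only real change is in Lemma~\ref{l:homomorphisms}: witness structures must now be reproduced \emph{isomorphically}, not merely as homomorphic images, so that the negated atoms occurring in the $\phi_i$ are respected, and the $\forall$-conjunct is discharged on base-guarded tuples. Concretely, if in the candidate model $\str{A}'$ a tuple $\bar{a}'$ satisfies some base guard $\gamma_l$, then, by the invariant maintained throughout the construction that a $\sigma_{\cBase}$-atom joins only elements lying in a common $\phi$-witness structure, all of $\bar{a}'$ lies inside a single such witness structure, and the homomorphism to $\str{A}$ restricted to that witness structure is a \emph{partial isomorphism}; hence it preserves and reflects every atom and negated atom among the elements of $\bar{a}'$ --- including equivalence atoms and their negations such as $\neg E_1(x,y)$ --- so that $\str{A}\models\neg\exists\bar{x}(\gamma_l\wedge\psi_l)$ forces $\str{A}'\not\models\gamma_l(\bar{a}')\wedge\psi_l(\bar{a}')$. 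The tree-like and regular tree-like model properties are obtained as in Lemma~\ref{l:regular}, the declaration argument of \cite{DK18c} applying to \GNFO{} as well, with the isomorphism type of the subtree rooted at a pattern node again playing the role of a type.

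Then I would re-run the inductive construction of Section~4 essentially verbatim: the pattern components, their joining, Reductions~0--2 and Claim~\ref{c:joiningeq} use nothing beyond what is now available. Indeed that construction already builds isomorphic copies of (partial) $\phi$-witness structures, and condition (d\ref{dthree}) of Lemma~\ref{l:finiteeq} already yields a homomorphism that is an isomorphism on each $\str{W}_a$, which is exactly what the revised Lemma~\ref{l:homomorphisms} requires. The size estimate is unchanged, so the model produced has doubly exponential size, establishing the doubly exponential finite model property. For the complexity, finite satisfiability then coincides with satisfiability; by the remark in Section~\ref{s:bf} the reduction of Lemma~\ref{l:red} lifts to one from \GNFOEQ{} to \GNFOTR{} (replacing every distinguished atom $E(x,y)$ by $E(x,y)\wedge E(y,x)$ and adding $\forall x\,E(x,x)$, i.e.\ $\neg\exists x(x{=}x\wedge\neg E(x,x))$, stays inside \GNFOEQ{}), and general satisfiability for \GNFOTR{} is \TwoExpTime-complete by \cite{ABBB16}; the matching lower bound is inherited from \UNFO{}, which is already \TwoExpTime-hard and embeds into \GNFOEQonedim{}.

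The one genuinely new point requiring care is the strengthened Lemma~\ref{l:homomorphisms}, namely confirming that the finite model reflects the base-guarded $\forall$-conjunct \emph{together with negated (equivalence) atoms}. This works only because base-guarded tuples stay confined to a single $\phi$-witness structure, on which the homomorphism supplied by condition (d\ref{dthree}) is a genuine partial isomorphism; everything else is a routine transcription of the \UNFOEQ{} development.
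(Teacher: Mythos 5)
Your proposal is correct and follows essentially the same route as the paper: Scott-style normal form with base guards, a strengthened Lemma \ref{l:homomorphisms} requiring isomorphisms on witness structures, a regular tree-like model obtained by extending the declarations to treat base-guarded negated subformulas as atoms, and then the Section 4 construction applied unchanged, with the key observation that guarded tuples lie inside a single $\phi$-witness structure on which the homomorphism from (d\ref{dthree}) acts as an isomorphism. The complexity argument (upper bound via containment in \GNFOTR{} and \cite{ABBB16}, lower bound inherited from \UNFO) also matches the paper.
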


\begin{proof}
Using the standard Scott translation we can transform any \GNFOEQonedim{} sentence into a normal form sentence $\phi$ of the shape
as in (\ref{eq:nf}), where the $\phi_i$ are quantifier-free \GNFO{} formulas.\footnote{We remark here that, since our normal form is
one-dimensional, this is not possible for full \GNFOEQ.} Assume $\str{A} \models \phi$.
First, we need a slightly stronger version of condition (1) in Lemma \ref{l:homomorphisms}---each of the considered homomorphisms should additionally be an isomorphism when restricted to a guarded substructure.
After that we construct a regular tree-like model $\str{A}' \models \phi$, adapting the construction from the proof of Lemma \ref{l:regular} by
extending the notion of declaration so that it treats a subformula of the form $\gamma(\bar{x},\bar{y})\wedge\neg\phi'(\bar{y})$ like an atomic formula.
Finally we apply, without any changes, the construction from the proof of Lemma \ref{l:finiteeq} to $\str{A}'$ and $\phi$ obtaining eventually a finite structure $\str{A}''$. Note that during the step of providing witnesses we build isomorphic copies of whole witness structures, which means we preserve not only positive atoms but also their negations. Thus the elements of $A''$ have all witness structures
required by $\phi$. Consider now the conjunct $\forall x_1, \ldots, x_t \neg \phi_0(\bar{x})$, and take arbitrary elements $a_1, \ldots, a_t \in A''$.
From Lemma \ref{l:finiteeq} we know that there is a homomorphism $\fh:\str{A}'' \restr \{a_1, \ldots, a_t\} \rightarrow \str{A}'$ preserving $1$-types.
If  $\gamma(\bar{z}, \bar{y}) \wedge \neg \phi'(\bar{y})$ is a subformula of $\phi_0$ with $\gamma$ a $\sigma_{\cBase}$-guard and $\str{A}'' \models \gamma (\bar{b}, \bar{c}) \wedge \neg \phi'(\bar{c})$ for some
$\bar{b}, \bar{c} \subseteq \bar{a}$ then, by our construction, all elements of $\bar{b} \cup \bar{c}$ are members of the same witness structure.
As mentioned above such witness structures are isomorphic copies of substructures from $\str{A}$ and $\fh$ works on them as isomorphism, and thus $\fh$  preserves on $\bar{c}$ not only
$1$-types and positive atoms but also negations of atoms in witnesses structures. Since  $\str{A}' \models \neg \phi_0(\fh(a_1), \ldots, \fh(a_t))$ this means that $\str{A}'' \models \neg \phi_0(a_1, \ldots, a_t)$. \qed
\end{proof}

\section{Conclusion} We proved the finite model property for \UNFO{} with equivalence relations and for the one-dimen\-sional
restriction of \GNFO{} with equivalences outside guards. This implies the decidability of the finite satisfiability problem 
for these logics.  
In our forthcoming paper \cite{DK18c} we study the related finite satisfiability problem for \UNFO{} with
arbitrary transitive relations, proving that it is decidable as well.
An interesting direction for further research is  the decidability of finite satisfiability of  full \GNFO{} with equivalences
on non-guard positions. 
	\section*{Acknowledgements}
	This work is supported by {Polish National Science Centre}{} grant No {2016/21/B/ST6/01444}{}.

\bibliographystyle{plain}
\bibliography{mybibshort}

\end{document}